\pdfoutput=1
\documentclass[11pt]{report}

\usepackage{amsmath}
\usepackage{amssymb}
\usepackage{amsthm}
\usepackage{fullpage}
\usepackage{graphicx}

\usepackage{tikz}

\usepackage[bookmarks,colorlinks,breaklinks]{hyperref}  % PDF hyperlinks, with coloured links
\hypersetup{linkcolor=blue,citecolor=blue,filecolor=blue,urlcolor=blue} % all blue links
\numberwithin{equation}{section}

\newcommand{\lecture}[5]{
\chapter[\texorpdfstring{#5 \\({\em Lecturer: #4, Scribe: #3})}{#5}]{{\texorpdfstring{{\LARGE #5}\\{\Large
      {\em #4}}\\\hspace*{\fill}{\large Scribe:
      #3}\\\vspace{-0.5cm}\hspace*{\fill}{\normalsize #2}}{#5}}}

}

\newcommand{\lref}[2][]{\hyperref[#2]{#1~\ref*{#2}}}
\renewcommand{\eqref}[2][]{\hyperref[#2]{(\ref*{#2})}}

\newtheorem{theorem}{Theorem}[section]
\newtheorem{corollary}[theorem]{Corollary}
\newtheorem{lemma}[theorem]{Lemma}
\newtheorem{observation}[theorem]{Observation}
\newtheorem{proposition}[theorem]{Proposition}
\newtheorem{definition}[theorem]{Definition}
\newtheorem{claim}[theorem]{Claim}
\newtheorem{fact}[theorem]{Fact}

\newcommand{\R}{{\mathbb R}}
\newcommand{\mathify}[1]{\ifmmode{#1}\else\mbox{$#1$}\fi}
\newcommand{\abs}[1]{\mathify{\left| #1 \right|}}
\newcommand{\ip}[1]{{\langle #1 \rangle}}
\newcommand{\E}{{\mathbb E}}
\newcommand{\prob}[2]{\underset{#1}{\rm Prob}\left[{#2}\right]}
\newcommand{\F}{{\mathbb F}}
\newcommand{\set}[1]{\mathify{\left\{ #1 \right\}}}
\newcommand{\poly}{{\rm poly}}
\newcommand{\acc}{{\sf acc}}
\newcommand{\bigO}{O}

% % Lecture 1 newcommands
\newcommand{\X}{{\mathcal X}}
\newcommand{\Y}{{\mathcal Y}}
\newcommand{\feas}{{\mathrm {feas}}}
\newcommand{\val}{\mathrm{value}}

%Lecture 2 newcommands
\newcommand{\myopt}{\mathrm{OPT}} 
\newcommand{\mybits}{\ensuremath{\{0,1\}}} 
\newcommand{\mymaxthreesat}{MAX-$3$SAT} 
\newcommand{\YES}{\mathrm{YES}}
\newcommand{\NO}{\mathrm{NO}}
\newcommand{\calA}{{\mathcal A}}
\newcommand{\calB}{{\mathcal B}}

% Lecture 4 newcommands

\newcommand{\NP}{\ensuremath{\mathsf{NP}}}
\newcommand{\NQP}{\ensuremath{\mathsf{NQP}}}
\newcommand{\RQP}{\ensuremath{\mathsf{RQP}}}

%Lecture 5 newcommands
\newcommand{\agr}{{\rm agr}}

%Lecture 6 newcommands

\newcommand{\circuitsat}{{CIRCUIT-SAT}}

%Lecture 7 newcommands
\newcommand{\hastad}{H{\aa}stad}
\newcommand{\labelcover}{\mathsf{LABEL\ COVER}}
\newcommand{\maxcut}{\mathsf{MAX}\text{-}\mathsf{CUT}}
\newcommand{\maxsat}{\mathsf{MAX}\text{-}3\mathsf{SAT}}
\newcommand{\maxlin}{\mathsf{MAX}\text{-}3\mathsf{LIN}}
\newcommand{\eps}{\varepsilon}
\newcommand{\np}{\mathsf{NP}}
\newcommand{\p}{\mathsf{P}}
\newcommand{\lc}{\mathcal{L}}
\newcommand{\id}{\mathsf{id}}
\newcommand{\bits}{\{\pm 1\}}
\newcommand{\fc}[1]{\widehat{#1}}
\newcommand{\expect}{\operatorname{\mathbb{E}}}
\DeclareMathOperator{\opt}{opt}

%Lecture 8 newcommands
\newcommand{\rvector}{{\bf r}}
\newcommand{\vectoru}{{\bf u}}
\newcommand{\vectorv}{{\bf v}}

%Lecture 9 newcommands
\newcommand{\Exp}[2]{\underset{#1}{\mathbb E}\left[{#2}\right]}
\newcommand{\beq}{\begin{equation}}
\newcommand{\eeq}{\end{equation}}
\newcommand{\uginst}{{\mathcal L}}
\newcommand{\uginstfull}{{\mathcal L}(V,W,E,[M],\{\pi_{v,w}\})}
\DeclareMathOperator{\infl}{Inf_i}
\DeclareMathOperator{\inflk}{Inf^k_i}
\DeclareMathOperator{\ns}{NS_{\rho}}
\DeclareMathOperator{\mc}{mc}

\title{\texorpdfstring{\large \bfseries{Limits of Approximation Algorithms: PCPs and Unique Games\\
  (DIMACS Tutorial Lecture Notes)}\thanks{Jointly sponsored by the
  DIMACS Special Focus on Hardness of Approximation, the DIMACS
  Special Focus on Algorithmic Foundations of the Internet, and the
  Center for Computational Intractability with support from the
  National Security Agency and the National Science
  Foundation.}}{Limits of Approximation Algorithms}}
\author{\texorpdfstring{\textit{Organisers: Prahladh Harsha \& Moses
      Charikar}}{Harsha \and Charikar}}
\date{}

\begin{document}

\pagenumbering{empty}

\maketitle

\pagenumbering{roman}

 \chapter*{Preface}
 \addcontentsline{toc}{chapter}{Preface}

 These are the lecture notes for the DIMACS Tutorial {\em Limits of
     Approximation Algorithms: PCPs and Unique Games} held at the DIMACS
   Center, CoRE Building, Rutgers University on 20-21 July, 2009. This
   tutorial was jointly sponsored by the DIMACS Special Focus on
   Hardness of Approximation, the DIMACS Special Focus on Algorithmic
   Foundations of the Internet, and the Center for Computational
   Intractability with support from the National Security Agency and
   the National Science Foundation.

 The speakers at the tutorial were Matthew Andrews, Sanjeev Arora,
 Moses Charikar, Prahladh Harsha, Subhash Khot, Dana
 Moshkovitz and Lisa Zhang. We thank the scribes -- Ashkan Aazami, Dev
 Desai, Igor Gorodezky, Geetha
 Jagannathan, Alexander S. Kulikov, 
 Darakhshan J. Mir, Alantha Newman, Aleksandar Nikolov, David
 Pritchard and Gwen Spencer for their thorough and meticulous work.  

 Special thanks to Rebecca Wright and Tami Carpenter at DIMACS but for
 whose organizational support and help, this workshop would have been
 impossible. We thank Alantha Newman, a phone conversation with whom
 sparked the idea of this workshop. We thank the Imdadullah Khan and
 Aleksandar Nikolov for video recording the lectures. The video
 recordings of the lectures will be posted at the DIMACS tutorial webpage

 \href{http://dimacs.rutgers.edu/Workshops/Limits/}{\path{http://dimacs.rutgers.edu/Workshops/Limits/}}

 Any comments on these notes are
 always appreciated.\\

 \begin{flushright}
 \begin{minipage}{1.5in}
 Prahladh Harsha\\
 Moses Charikar

 30 Nov, 2009.
 \end{minipage}
 \end{flushright}

\cleardoublepage

\chapter*{Tutorial Announcement}
\addcontentsline{toc}{chapter}{Tutorial Announcement}

\noindent {\bf DIMACS Tutorial}\\
{\bf Limits of Approximation Algorithms: PCPs and Unique Games}\\
DIMACS Center, CoRE Building, Rutgers University, July 20 - 21, 2009\\

\noindent Organizers:\\
* Prahladh Harsha, University of Texas, Austin\\ 
* Moses Charikar, Princeton University 

This tutorial is jointly sponsored by the DIMACS Special Focus on Hardness of Approximation, the DIMACS Special Focus on Algorithmic Foundations of the Internet, and the Center for Computational Intractability with support from the National Security Agency and the National Science Foundation.

\vspace{0.25cm}
\hrule
\vspace{0.25cm}

The theory of NP-completeness is one of the cornerstones of complexity theory in theoretical computer science. Approximation algorithms offer an important strategy for attacking computationally intractable problems, and approximation algorithms with performance guarantees have been designed for a host of important problems such as balanced cut, network design, Euclidean TSP, facility location, and machine scheduling. Many simple and broadly-applicable approximation techniques have emerged for some provably hard problems, while in other cases, inapproximability results demonstrate that achieving a suitably good approximate solution is no easier than finding an optimal one. The celebrated PCP theorem established that several fundamental optimization problems are not only hard to solve exactly but also hard to approximate. This work shows that a broad class of problems is very unlikely to have constant factor approximations, and in effect, establishes a threshold for such problems such that approximation beyond this threshold would imply P= NP. More recently, the unique games conjecture of Khot has emerged as a powerful hypothesis that has served as the basis for a variety of optimal inapproximability results.

This tutorial targets graduate students and others who are new to the field. It will aim to give participants a general overview of approximability, introduce them to important results in inapproximability, such as the PCP theorem and the unique games conjecture, and illustrate connections with mathematical programming techniques.

List of speakers:  Matthew Andrews (Alcatel-Lucent Bell Laboratories),
     Sanjeev Arora (Princeton University),
     Moses Charikar (Princeton University),
     Prahladh Harsha (University of Texas, Austin),
     Subhash Khot (New York University),
     Dana Moshkovitz (Princeton University) and 
     Lisa Zhang (Alcatel-Lucent Bell Laboratories)

\cleardoublepage
\tableofcontents
%\listoffigures

\cleardoublepage
\pagenumbering{arabic}

%%%% Modify Accordingly
\lecture{1}{20 July, 2009}{Darakhshan J. Mir}{Sanjeev Arora}{An Introduction to Approximation Algorithms}

%%%% body goes in here %%%%

In this lecture, we will introduce the notion of approximation algorithms and see examples of approximation algorithms for a variety of NP-hard optimization
problems.

\section{Introduction}
Let $Q$ be an optimization problem\footnote{Formally, a (maximization) optimization problem is specified by two domains $\X,\Y$, a feasibility function $\feas:\X\times\Y \to \{ 0,1\}$ and an evaluation function $\val:\X\times\Y\to \R$. An input instance to the problem is an element $x \in \X$. For each such $x$, the optimization problem is as follows:
$$OPT(x) = \max\left\{\val(x,y) | y \in \Y, \feas(x,y) =1\right\}.$$
$OPT(x)$ is also called the optimal value.}. An optimal solution for an instance of this optimization problem is a feasible solution that achieves the best value for the objective function. Let $OPT(I)$ denote the value of the objective function for an optimal solution to an instance $I$.
\begin{definition}[Approximation ratio]
An algorithm  for $Q$ has an approximation ratio $\alpha$ if for instances $I$, the algorithm produces a solution of cost $\leq \alpha \cdot OPT(I)$ ($\alpha \geq 1$), if $Q$ is a minimization problem and of cost $\geq \alpha \cdot OPT(I)$ if $Q$ is a maximization problem.
 
\end{definition}
 We are interested in polynomial-time approximation algorithms for NP-hard problems. How does a polynomial-time approximation algorithm know what the cost of the optimal solution is, which is NP-hard to compute? How does one guarantee that the output of the algorithm is within $\alpha$ of the optimal solution when it is NP-hard to compute the optimal solution. 
In various examples below, we see techniques of handling this dilemma.

\subsection{Examples}
\begin{enumerate}
 \item \textbf{2-approximation for metric Travelling Salesman Problem (metric-TSP):}
Consider a complete graph $G$ formed by $n$ points in a metric space. Let $d_{ij}$ be the distance between point $i$ and $j$. The metric TSP problem is to find a minimum cost cycle that visits every point exactly once.

The following observation relating the cost of the minimum spanning tree (MST) to the optimal TSP will be crucial in bounding the approximation ratio.
\begin{observation}
 The cost of the Minimum spanning Tree (MST) is at most the optimal cost of TSP.
\end{observation}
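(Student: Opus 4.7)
The plan is to exhibit a specific spanning tree whose cost is at most the TSP optimum, and then invoke the minimality of the MST. Concretely, let $C^*$ be an optimal TSP tour on the $n$ points, with cost $\mathrm{OPT}_{\mathrm{TSP}}$. The tour $C^*$ is a Hamiltonian cycle, so it consists of exactly $n$ edges.

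Next, I would delete any single edge from $C^*$. The resulting subgraph $P$ is a Hamiltonian path, which visits every vertex and has no cycles, hence is a spanning tree of $G$ (a path on $n$ vertices is a tree with $n-1$ edges). Since edge costs are non-negative (they are distances in a metric space), removing an edge can only decrease the total cost, so
\[
\mathrm{cost}(P) \;\le\; \mathrm{cost}(C^*) \;=\; \mathrm{OPT}_{\mathrm{TSP}}.
\]

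Finally, since the MST is by definition the spanning tree of minimum total cost, and $P$ is \emph{some} spanning tree, we get
\[
\mathrm{cost}(\mathrm{MST}) \;\le\; \mathrm{cost}(P) \;\le\; \mathrm{OPT}_{\mathrm{TSP}},
\]
which is the claimed inequality. There is no real obstacle here; the only subtlety worth flagging in the write-up is the implicit use of non-negativity of distances (guaranteed by the metric assumption) to ensure that deleting an edge does not increase the cost. Note also that the argument does not use the triangle inequality, so the bound in fact holds for TSP on any complete graph with non-negative edge weights.
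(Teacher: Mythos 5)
Your proof is correct and is exactly the standard argument implicitly intended by the paper (which states this observation without proof): delete an edge from the optimal tour to obtain a spanning tree, then invoke minimality of the MST. Your remark that only non-negativity of the weights is needed, not the triangle inequality, is a valid and worthwhile observation.
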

\textbf{Algorithm $A$:}
\begin{enumerate}
\item Find the MST
\item Double each edge
\item Do an ``Eulerian transversal'' and output its cost
\end{enumerate}
Observe that $TSP\leq cost(A) \leq 2\cdot MST \leq 2\cdot TSP$.

\item \textbf{A 1.5-approximation to metric-TSP:}
The approximation ratio can be improved to 1.5 by modifying the above using an idea due to Christofides~\cite{Christofides1976}. Instead of doubling each edge of the MST as in the above algorithm, a minimum cost matching is added among all odd degree nodes. Observe that cost of matching $\leq \frac{1}{2} TSP$.
So,
\[Cost(\textit{Appx-algo}) \leq MST+ \frac{1}{2}TSP \leq 1.5\cdot TSP\]

It is to be noted that since 1976, there has been no further improvement on this approximation ratio.

\end{enumerate}

The above examples are examples of approximation algorithms that attain a constant approximation ratio.
In the next section, we will see how to get arbitrarily close to the optimal solution when designing an approximation algorithm, ie., approximation ratios arbitrarily close to 1.

\section{Polynomial-time Approximation Scheme (PTAS)}
A PTAS is a family of polynomial-time algorithms, such that for every $\epsilon >0$, there is an algorithm in this family that is an $(1+\epsilon)$ approximation to the NP-hard problem $Q$, if it is a minimization problem and an $(1-\epsilon)$-approximation if $Q$ is a maximization problem.

The above definition allows the running time to arbitrarily depend on $\epsilon$ but for each $\epsilon$ it should be polynomial in the input size e.g. $n^{\frac{1}{\epsilon}}$ or $n^{2^{\frac{1}{\epsilon}}}$.

\subsection{Type-1 PTAS}
Various type of number problems typically have type-1 PTAS. The usual strategy is to try to round down the numbers involved , so the choice of numbers is small and then use Dynamic Programming. The classic example of such an approach is the Knapsack problem.\\

\noindent \textbf{Knapsack problem} Given a set of $n$ items, of sizes $s_1, s_2 \ldots s_n$ such that $s_i \leq 1~ \forall i$, and profits $c_1, c_2, \ldots, c_n$, associated with these items, and a knapsack of capacity 1, find a subset $I$ of items whose
total size is bounded by 1 such that the total profit is maximized.

The knapsack problem is NP-hard in general, however if the profits fall in a small-sized set, then there exists an efficient polynomial time algorithm.
\begin{observation}
 If the values $c_1, c_2, \ldots c_n$ are in  $ [1, \ldots, w]$, then the problem can be solved in $poly(n,w)$-time using dynamic programming.
\end{observation}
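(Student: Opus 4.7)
The plan is to use dynamic programming indexed by \emph{profit} rather than by size. Since each $c_i \in \{1,\ldots,w\}$, the total profit of any subset lies in $\{0,1,\ldots,nw\}$, which is a polynomially-bounded range. Conversely, sizes are real numbers in $[0,1]$, so they are unsuitable for indexing a table; this swap of roles is the key idea.

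Concretely, I would define the table
\[
T[i,p] \;=\; \min\Bigl\{ \sum_{j \in S} s_j \;:\; S \subseteq \{1,\ldots,i\},\ \sum_{j \in S} c_j = p \Bigr\},
\]
with the convention $T[i,p] = +\infty$ if no such subset exists. The indices range over $i \in \{0,1,\ldots,n\}$ and $p \in \{0,1,\ldots,nw\}$, so the table has $O(n^2 w)$ entries. The base case is $T[0,0]=0$ and $T[0,p]=+\infty$ for $p>0$. The recurrence is
\[
T[i,p] \;=\; \min\bigl(\, T[i-1,p],\ \ T[i-1,\,p-c_i] + s_i \,\bigr),
\]
corresponding to excluding or including item $i$; each entry takes $O(1)$ time to compute. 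Finally, the optimal profit is the largest $p$ for which $T[n,p] \le 1$, which can be read off by scanning the last row in $O(nw)$ time.

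Correctness of the recurrence follows by a straightforward induction on $i$: any optimal subset for profit $p$ using items $\{1,\ldots,i\}$ either omits item $i$ (in which case it is an optimal subset for profit $p$ using $\{1,\ldots,i-1\}$) or includes it (in which case removing it yields an optimal subset for profit $p-c_i$ using $\{1,\ldots,i-1\}$). The total running time is $O(n^2 w)$, which is $\poly(n,w)$ as claimed.

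There is no real obstacle here; the only subtlety worth flagging is that the obvious ``DP over sizes'' fails because sizes are arbitrary reals, which is precisely why one indexes by profits. An alternative but essentially equivalent formulation would be $T'[i,p]$ = the indicator of whether profit exactly $p$ is achievable with total size $\le 1$ using the first $i$ items, together with size bookkeeping; I would stick with the cleaner min-size formulation above.
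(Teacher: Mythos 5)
Your proof is correct and is exactly the standard profit-indexed dynamic program that the paper's Observation alludes to without writing out (the paper states it as a known fact and gives no proof). The table definition, recurrence, boundary handling, and the $O(n^2w) = \poly(n,w)$ running-time bound are all right, and your remark about why one must index by profit rather than by the real-valued sizes is precisely the relevant point.
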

 
This naturally leads to the following approximation algorithm for knapsack.\\

\noindent \textbf{$(1+\epsilon)$-Approximation Algorithm}
\begin{enumerate}
\item Let $c= \max_i c_i$.
\item Round down each $c_i$ to the nearest multiple of $\frac{\epsilon c}{n}$. Let this quantity be $r_i\cdot \left(\frac{\epsilon c}{n}\right)$, i.e., $r_i = \lfloor c_i/\frac{\epsilon c}{n}\rfloor$.
\item With these new quantities ($r_i$) as profits of items, use the standard Dynamic Programming algorithm, to find the most profitable set $I'$.
\end{enumerate}

The number of $r_i$'s is at most $n/\epsilon$. Thus, the running time of this algorithm is at most $poly(n,n/\epsilon)= poly(n,1/\epsilon)$. We now show that the above algorithm obtains a $(1-\epsilon)$-approximation ratio
\begin{claim}
 $\sum_{ i \in I'} c_i$ is an $(1-\epsilon)$-approximation to OPT.
\end{claim}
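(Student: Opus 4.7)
The plan is to compare the solution $I'$ returned by the algorithm on the rounded instance against an optimal solution $I^*$ of the original instance, using the two-sided bound that rounding down to a multiple of $\delta := \epsilon c / n$ produces per-item error at most $\delta$. Concretely, for every $i$ we have $r_i \delta \leq c_i < r_i \delta + \delta$, and in particular $c_i - \delta \leq r_i \delta \leq c_i$.

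The first key step is to observe that since each item is feasible on its own (every $s_i \leq 1$), the single most expensive item gives $\mathrm{OPT} \geq c = \max_i c_i$. This is the ``anchor'' that lets an additive error of order $\epsilon c$ be converted into a multiplicative $(1-\epsilon)$ guarantee.

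The second step is to chain the following inequalities, using that $I'$ and $I^*$ are both feasible for the original knapsack (the sizes $s_i$ were not altered), and that $I'$ is optimal for the rounded profits:
\begin{align*}
\sum_{i \in I'} c_i \;\geq\; \sum_{i \in I'} r_i \delta \;\geq\; \sum_{i \in I^*} r_i \delta \;\geq\; \sum_{i \in I^*}(c_i - \delta) \;\geq\; \mathrm{OPT} - n\delta \;=\; \mathrm{OPT} - \epsilon c \;\geq\; (1-\epsilon)\,\mathrm{OPT}.
\end{align*}
Here the first inequality uses $r_i \delta \leq c_i$, the second uses optimality of $I'$ on the rounded instance (noting $I^*$ is a candidate feasible set since rounding did not change sizes), the third uses $r_i \delta \geq c_i - \delta$, the fourth uses $|I^*| \leq n$, and the last uses $c \leq \mathrm{OPT}$.

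The only step that requires any care is the second inequality above: one must verify that $I^*$ remains a feasible input to the dynamic program on rounded profits. Since the DP operates on the same size constraints $s_i$ and only the objective is changed, this is immediate, so there is no real obstacle; the proof is essentially a careful accounting of the per-item rounding error amplified at most $n$ times, canceled by the fact that $\delta$ was chosen precisely so that $n\delta = \epsilon c \leq \epsilon \cdot \mathrm{OPT}$.
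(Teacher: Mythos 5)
Your proof is correct and follows essentially the same route as the paper: the identical chain of inequalities comparing $\sum_{i\in I'} c_i$ to $\frac{\epsilon c}{n}\sum_{i\in I'} r_i$, then to the rounded value of the true optimum, then absorbing the additive loss $\epsilon c$ via $\mathrm{OPT}\geq c$. The only addition is that you explicitly justify $\mathrm{OPT}\geq c$ (each item fits alone since $s_i\leq 1$) and the feasibility of $I^*$ for the rounded instance, both of which the paper leaves implicit.
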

\begin{proof}
 Let $O$ be the optimal set. For each item, rounding down of $c_i$ causes a loss in profit of at most $\frac{\epsilon c}{n}$. Hence the total loss due to rounding down is at most $n$ times $\frac{\epsilon c}{n}$. In other words,
\[\sum_{i \in O}c_i - \frac{\epsilon c}{n} \cdot \sum_{i \in O} r_i \leq n \frac{\epsilon c}{n} =c\epsilon\]
Hence, $\frac{\epsilon c}{n}\sum_{i\in O}r_i \geq OPT - c\epsilon$. 
Now, 
\[ \sum_{i \in I'} c_i \geq \frac{\epsilon c}{n} . \sum_{i \in I'} r_i  \geq \frac{\epsilon c}{n} \sum_{i \in O} r_i \geq OPT -\epsilon c  \geq (1- \epsilon) OPT\]
 The first inequality follows from the definition of $r_i$, the second from the fact that $I'$ is an optimal solution with costs $r_i$'s, the third from the above observation and the last from the fact that $OPT \geq c$.
\end{proof}

\subsection{Type-2 PTAS}
In these kinds of problems we define a set of ``simple'' solutions and find the minimum cost simple solution in polynomial time. Next, we show that an arbitrary solution may be modified to a simple solution without greatly affecting the cost.\\

\noindent \textbf{Euclidean TSP} A Euclidean TSP is a TSP instance where the points are in $\R^2$ and the distances are the corresponding Euclidean distances.

A trivial solution can be found in $n!$. Dynamic Programming finds a solution in $n^22^n$.

We now give a high-level description of a $n^{1/\epsilon}$-time algorithm that achieves a $(1+\epsilon)$-approximation ratio.
Consider the smallest square that contains all $n$ points. Use quad-tree partitioning to recursively partition each square into four subsquares until unit squares are obtained. We consider the number of times the tour path crosses a cell in the quad-tree. We construct the ``simple solution'' to the problem by restricting the tour to cross each dividing line $\leq \frac{6}{\epsilon}$ times. We can then discretize the lines at these crossing points. Each square has $\leq \frac{24}{\epsilon}$ number of crossing points. A tour may use each of these crossing points either 0, 1 or 2 times. So for the entire quadtree there are $\leq 3^{\frac{24}{\epsilon}}=\exp(\frac{24}{\epsilon})$ number of possibilities. For details see Arora's 2003 survey~\cite{Arora2003}.

In the next section, we will see examples of approximation algorithms which use linear programming and semi-definite programming.

\section{Approximation Algorithms for MAXCUT}

The MAX-CUT problem is as follows: Given a graph $G(V,E)$ with $\abs{V}=n$, find $\max_{S \subset V} \lvert E(S,\overline{S}) \rvert$.

The notation $E(S, \overline{S}) $ refers to the set of all edges $(i,j)$ such that vertex $i \in S$ and  vertex $ j \in \overline{S} $.
\subsection{Integer Program Version}
Define variable $x_i$, such that $x_i =0$, if vertex $i \in S$  and $x_i = 1$, if $i \in \overline{S}$. We have the following integer program:\\
\begin{eqnarray*}
\text{Maximize} \sum_{(ij) \in E} e_{ij}&&\\
\text{subject to}\\
e_{ij}&\leq&\min \lbrace x_i+x_j, 2-(x_i+x_j) \rbrace,~\forall (i,j) \in E\\
 x_1,\dots,x_n &\in& \{0,1\}
\end{eqnarray*}

Notice that $e_{ij} \ne 0 \iff x_i \ne x_j$.
\subsection{Linear Program Relaxation and Randomized Rounding}

This can be converted to a \textbf{Linear Program} as follows:\\
\begin{eqnarray*}
\text{Maximize} \sum_{(ij) \in E} e_{ij}&&\\
\text{subject to}\\
e_{ij}&\leq&\min \lbrace x_i+x_j, 2-(x_i+x_j) \rbrace,~\forall (i,j) \in E\\
 x_1,\dots,x_n &\in& [0,1]
\end{eqnarray*}

 Every solution to the Integer Program is also a solution to the Linear Program. So the objective function will only rise.
If ${\rm OPT_{LP}}$ is the optimal solution to the LP, then:
\[{\rm OPT_{LP}} \geq \text{MAX-CUT}\]

\subsubsection{Randomized Rounding}
We now round the LP-solution to obtain an integral solution as follows: form a set $S$ by putting $i$ in $S$ with probability $x_i$. The expected number of edges in such a cut, ${\mathbb E}[|E(S,\bar{S})|]$ can be then calculated as follows:
\begin{eqnarray*}
\mathbb{E}[|E(S,\bar{S})|] &=& \sum_{(i,j) \in E} Pr[(i,j) \text{ is in the cut}]\\
&=&\sum_{(i,j) \in E } x_i(1-x_j) +x_j(1-x_i)
\end{eqnarray*}
The above calculates only an expected value of the cut, however if we repeat the above algorithm several times, it can be seen by Markov's inequality that we can get we can get very close to this value. We now show that this expected value is at least half the LP-optimal, which in turns means that it is at least half the MAX-CUT

\begin{claim}
$$ {\mathbb E}[|E(S,\bar{S})|]=\sum_{(ij) \in E } x_i(1-x_j) +x_j(1-x_i) \geq \frac{1}{2} {\rm OPT_{LP}} \geq \frac{1}{2} \text{\rm MAX-CUT} $$
\end{claim}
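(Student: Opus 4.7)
The plan is to reduce the claim to a per-edge inequality: since the LP objective is $\text{OPT}_{LP} = \sum_{(i,j) \in E} e_{ij}$, it suffices to show that for every edge $(i,j)$,
\[
x_i(1-x_j) + x_j(1-x_i) \;\geq\; \tfrac{1}{2}\, e_{ij}.
\]
Using the LP constraint $e_{ij} \leq \min\{x_i+x_j,\, 2-(x_i+x_j)\}$, this in turn follows from the purely analytic inequality
\[
a + b - 2ab \;\geq\; \tfrac{1}{2}\,\min\{\,a+b,\ 2-a-b\,\} \qquad \text{for all } a,b \in [0,1],
\]
where $a = x_i$ and $b = x_j$ (and I have rewritten $a(1-b)+b(1-a) = a+b-2ab$). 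Once this is proved, summing over edges gives $\mathbb{E}[|E(S,\bar S)|] \geq \tfrac{1}{2}\text{OPT}_{LP}$, and the second inequality $\text{OPT}_{LP} \geq \text{MAX-CUT}$ was already observed (since any integral cut is a feasible LP solution).

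I would prove the analytic inequality by a two-case split on whether $a+b \leq 1$ or $a+b \geq 1$. In the first case, the minimum equals $a+b$, so the desired inequality becomes $a+b-2ab \geq \tfrac{1}{2}(a+b)$, i.e.\ $a+b \geq 4ab$. This is immediate from AM-GM: $(a-b)^2 \geq 0$ gives $4ab \leq (a+b)^2$, and since $a+b \leq 1$ we have $(a+b)^2 \leq a+b$, yielding $4ab \leq a+b$.

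In the second case ($a+b \geq 1$), I would reduce to the first case by the substitution $a' = 1-a$, $b' = 1-b$, which satisfies $a',b' \in [0,1]$ and $a'+b' \leq 1$. A short algebraic manipulation shows
\[
3(a+b) - 4ab \;=\; 2 + (a'+b') - 4a'b',
\]
so the inequality to prove, $a+b-2ab \geq 1 - (a+b)/2$, is equivalent to $a'+b' \geq 4a'b'$, which is exactly the bound from Case 1 applied to $(a',b')$. This symmetry under $(a,b) \leftrightarrow (1-a,1-b)$ is natural since both sides of the original inequality are invariant under flipping one side of the cut.

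The only mildly delicate step is the second case, but once the complementary-substitution trick is spotted it collapses to the first case and no separate work is required. I would also note in passing that the inequality is tight at $a=b=\tfrac12$ (both sides equal $\tfrac12$), which explains why $1/2$ is the right rounding factor here and foreshadows why SDP rounding will be needed to improve the approximation ratio in later lectures.
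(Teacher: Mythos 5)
Your proof is correct and follows essentially the same route as the paper: a term-by-term comparison reducing everything to the per-edge inequality $x_i(1-x_j)+x_j(1-x_i) \geq \tfrac12\min\{x_i+x_j,\,2-(x_i+x_j)\}$. The paper simply asserts that this inequality ``can easily be checked,'' whereas you supply the verification (the two-case split with the complementary substitution), which is a welcome addition but not a different argument.
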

\begin{proof}
We have \[{\rm OPT_{LP}} = \sum_{(ij) \in E } e_{ij} = \sum_{(i,j) \in E} \min \lbrace (x_i+x_j), 2-(x_i+x_j)\rbrace\]
It can easily be checked that for any $x_i,x_j \in [0,1]$, we have $$x_i(1-x_j) + x_j(1-x_i) \geq \frac12\cdot\min\{(x_i+x_j), 2-(x_i+x_j)\}.$$
Thus, a term by term comparison of the LHS of the inequality with ${\rm OPT_{LP}}$ reveals that ${\mathbb E}[|E(S,\bar{S})|] \geq \frac12 OPT_{LP} \geq \frac12\text{MAX-CUT}$.
\end{proof}
We thus, have a $1/2$-approximation algorithm for MAX-CUT using randomized rounding of the LP-relaxation of the problem. Actually, it is to be noted that the LP-relaxation is pretty stupid, the optimal to the LP is the trivial solution $x_i = 1/2$ for all $i$, which in turn leads to $OPT_{LP}=|E|$. But we do mention this example as it naturally leads to the following more powerful SDP relaxation.

\subsection{Semi Definite Programming (SDP) Based Method}
We will now sketch a 0.878-approximation to MAX-CUT due to Goemans and Williamson~\cite{GoemansW1995}. The main idea is to relax the integer problem defined above using vector valued variables. THE SDP relaxation is as follows:
\begin{eqnarray*}
\text{Maximize} &&\sum_{(i,j) \in E} \frac{(1-\ip{\vec v_i,\vec v_j})}{2}\\
\text{subject to} && \ip{\vec v_i,\vec v_i}=1,~ \forall i 
\end{eqnarray*}
Denote the optimal to the above SDP by $OPT_{SDP}$. We first observe that the SDP is in fact a relaxation of the integral problem. Let $\vec v_0$ be any vector of unit length, i.e., $\ip{\vec v_0,\vec v_0} =1$. Consider the optimal cut $S$ that achieves MAX-CUT. Now define, $$\vec v_i = \bigg\{
\begin{matrix}
\vec v_0&{\rm if}& i \in S \\ -\vec v_0&{\rm if}& i\notin S 
\end{matrix}, \forall i.$$ Consider the quantity $\frac{(1-\ip{\vec v_i,\vec v_j})}{2}$. This is $0$ if the vectors $\vec v_i$ and $\vec v_j$ lie on the same side, and equals $1$ if they lie on opposite sides. Thus, $OPT_{SDP} \geq \text{MAX-CUT}$. 

How do we round the SDP solution to obtain an integral solution. The novel rounding due to Goemans and Williamson is as follows: The SDP solution produces $n$ vectors $\vec v_1,\dots,\vec v_n$. Now pick a random hyperplane passing through the origin of the sphere and partition vectors according to which side tof the hyperplane they lie. Let $(S,\bar{S})$ be the cut obtained by the above rounding scheme. It is easy to see that
\begin{eqnarray*}\E[|E(S,\bar{S})|]&=& \sum_{(i,j) \in E} \Pr[(i,j)\in \text{cut}]\\
&=& \sum_{(i,j)\in E} \Pr[\vec v_i, \vec v_j \text{ lie on opposite sides of the hyperplane}]
\end{eqnarray*}
Let $\theta_{ij}$ be the angle between vectors $\vec v_i$ and $\vec v_j$. Then the probability that they are cut is proportional to $\theta_{ij}$, in fact exactly $\theta_{ij}/\pi$. Thus, 
\[\E[|E(S,\bar{S})|]=\sum_{(ij) \in E}\frac{\theta_{ij}}{\pi} \]
Let us know express $OPT_{SDP}$ in terms of the $\theta_{ij}$'s. Since $\theta_{ij} = cos^{-1}(\ip{\vec v_i,\vec v_j})$, we have 
\[OPT_{SDP}= \sum_{(i,j) \in E}\frac{(1- cos\theta_{ij})}{2}\]
By a ``miracle of nature''(Mathematica?) Goemans and Williamson observed that 
\[\displaystyle \frac{\theta}{\pi} \geq (0.878\ldots)\times \displaystyle \frac{1-cos \theta}{2},~ \forall \theta \in [0, \pi]\]
Hence,
\[\frac{\E[|E(S,\bar{S})|]}{{\rm OPT_{SDP}}} \geq 0.8788.\]
Thus, we have a 0.878-approximation algorithm for MAX-CUT.

\cleardoublepage

\lecture{2}{20 Jul, 2009}{Alexander S. Kulikov}{Dana Moshkovitz}
{The PCP Theorem: An Introduction} 

Complementing the first introduction lecture on approximation algorithms, this lecture will be an introduction to the limits of approximation algorithms. This will in turn naturally lead to the PCP Theorem, a ground-breaking discovery from the early 90's.
%The PCP theorem is a groundbreaking discovery of $91$--$92$.

\section{Optimization Problems and Gap Problems}

The topic of this lecture is the hardness of approximation. But to talk about hardness of approximation, we first need to talk
about optimization problems. Recall the definition of optimization problems from the earlier lecture. % An optimization problems is typically In such a problem we are given an input and the goal is to compute some 
% property of the input and optimize this thing. The question is what kind of maximization problem we should consider.
% Maybe it does not matter, but maybe it does matter and if you choose the right problem you get better results.
% 
Let us begin by giving an example of a canonical optimization problem.

\begin{definition}[\mymaxthreesat]
  The \emph{maximum $3$-satisfiability problem} ({\mymaxthreesat}) is: Given a $3$-CNF formula $\phi$ (each clause 
	contains \emph{exactly} three literals) with $m$ clauses, what is the maximum fraction
  of the clauses that can be satisfied simultaneously by any assignment to the variables? 
\end{definition}

We first prove the following important claim.
\begin{claim}\label{claim:maxsat}
	There exists an assignment that satisfies at least $7/8$ fraction of clauses.
\end{claim}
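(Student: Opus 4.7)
The plan is to use the probabilistic method: I would pick a uniformly random assignment to the variables (each variable independently set to true or false with probability $1/2$) and show that the expected fraction of clauses satisfied is exactly $7/8$. Since any random variable must attain a value at least its expectation with positive probability, this immediately yields the existence of an assignment satisfying at least a $7/8$ fraction.

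The key computation is, for a single fixed clause $C$ with three literals over distinct variables, the probability that $C$ is satisfied under a uniform random assignment. A disjunctive clause fails to be satisfied only when all three of its literals evaluate to false. Since each literal is false with probability exactly $1/2$ and the three variables appearing in $C$ are distinct, these three events are mutually independent, so
\[
\Pr[C \text{ is not satisfied}] \;=\; \left(\tfrac{1}{2}\right)^3 \;=\; \tfrac{1}{8},
\]
and hence $\Pr[C \text{ is satisfied}] = 7/8$. Summing over all $m$ clauses and using linearity of expectation,
\[
\expect\!\left[\#\text{ satisfied clauses}\right] \;=\; \sum_{C} \Pr[C \text{ is satisfied}] \;=\; \tfrac{7}{8}\, m.
\]
Therefore some assignment in the sample space must satisfy at least $7m/8$ clauses, proving the claim.

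The only minor subtlety is whether the three literals in a clause are required to mention distinct variables; if they do (the standard convention for $\maxsat$), independence is immediate. If repeated variables were allowed, one would just note that a clause with a variable and its negation is trivially satisfied, while a clause with a repeated literal can be treated as a $2$-clause, which is satisfied with probability $3/4$; one could then either preprocess such clauses or tighten the argument, but under the standard convention no such obstacle arises and the one-line expectation calculation suffices.
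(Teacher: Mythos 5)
Your proposal is correct and follows essentially the same route as the paper: a uniformly random assignment satisfies each clause with probability $7/8$ (exactly one of the eight local assignments falsifies it), and linearity of expectation plus the first-moment principle yields the claim. The remark about distinct variables is a reasonable extra precaution but not needed here, since the paper's definition of {\mymaxthreesat} already fixes the convention of exactly three literals per clause.
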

\begin{proof}
  The proof is a classical example of the probabilistic method. Take a random assignment (each variable of
  a given formula is assigned either $0$ or $1$ randomly and independently). Let $Y_i$ be a random
	variable indicating whether the $i$-th clause is satisfied. For any $1\le i \le m$ 
  (where $m$ is the number of clauses),
    \[ \E{Y_i} = 0 \cdot \frac{1}{8} + 1 \cdot \frac{7}{8} \, , \]
  as exactly one of eight possible assignments of Boolean constants to the variables of the $i$-th clause
  falsifies this clause. Here we use the fact that each clause contains exactly three literals.

  Now, let $Y$ be a random variable equal to the number of satisfied clauses: $Y=\sum_{i=1}^{m}Y_i$.
  Then, by linearity of expectation,
    \[ \E{Y}=\E{\sum_{i=1}^{m}Y_i}=\sum_{i=1}^{m}\E{Y_i}=\frac{7m}{8} \, . \]
  Since a random assignment satisfies a fraction $7/8$ of all clauses, there must exist
  an assignment satisfying at least as many clauses.
\end{proof}

The natural question to ask is if we can do better? Can we find an assignment that satisfies more clauses. Let us phrase this question more formally. For this, we first recall the definition of approximation algorithms from the previous lecture.

% First let us define optimization problems more generally.
% \newcommand{\X}{{\cal X}}
% \newcommand{\Y}{{\cal Y}}
% \newcommand{\feas}{\mathrm{FEAS}}
% %\newcommand{\value}{\mathrm value}
% \begin{definition}[Optimization Problems]
% A (maximization) optimization problem is specified by two domains $\X,\Y$, a feasibility function $\feas:\X\times\Y \to \{ 0,1\}$ and an evaluation function $\value:\X\times\Y\to \R$. An input instance to the problem is an element $x \in \X$. For each such $x$, the optimization problem is as follows:
% $$\myopt(x) = \max\left\{value(x,y) | y \in \Y, \feas(x,y) =1\right\}.$$
% $\myopt(x)$ is also called the optimal value.
% \end{definition}
% For instance, \mymax3sat is a specific optimization problem where the domains $\X$ and $\Y$ are the sets of 3CNF formulae and assignments respectively. The feasibility function checks if the assignment $y$ is a valid assignment (not necessarily satisfying) to $x$ and the evaluation function determines the fraction of clauses in $x$ satisfied by the assignment to $x$.

% The above claim shows that for any 3CNF formulae $\phi$, $\myopt(\phi) \geq 7/8$. Can we approximate $\myopt(\phi)$ even better. Let us be more formal and define what approximation is.

%  (Note, we define it for maximization problem below, the same can be done for minimization problem).
\begin{definition}
  An algorithm $C$ for a maximization optimization problem is called $\alpha$-approximation (where $0 \le \alpha \le 1$), 
  if for every input $x$, the algorithm $C$ outputs a value which is at least $\alpha$ times the optimal value, i.e.,
  \[ \alpha \cdot \myopt(x) \le C(x) \le \myopt(x) \, . \]
\end{definition}
\lref[Claim]{claim:maxsat} implies immediately that there exists an efficient (i.e., polynomial time) 
$7/8$-approximation algorithm for {\mymaxthreesat}. The natural question is whether there exists an approximation algorithm that attains a better approximation ratio.
The answer is that such an algorithm is not known. The question that we are going to consider in this lecture
is whether we can prove that such an algorithm does not exist. Of course, if we want to prove this, we have to assume
that P$\neq$NP, because otherwise there is an efficient $1$-approximation algorithm.

There is some technical barrier here. We are talking about optimization problems, i.e., problems where 
our goal is to compute something. It is however much more convenient to consider decision problems (or languages), where we have only two possible answers: yes or no. So, we are going to transform an optimization problem
to a decision problem. Namely, we show that hardness of a certain decision problem implies 
hardness of approximation of the corresponding optimization problem.

\begin{definition}
  For a maximization problem $I$ and $A < B \in \mathbb{R}^+$, the corresponding $[A,B]$-gap problem is 
  the following promise decision problem\footnote{A promise problem $\Pi$ is specified by a pair $(\YES,\NO)$ where $\YES,\NO \in \{0,1\}^*$ and $\YES$ and $\NO$ are disjoint sets. Note there is no requirement that $\YES \cup \NO = \{0,1\}^*$. This is the only difference between promise problems and languages.}: 
  \begin{eqnarray*}
  \YES & = & \{x | \myopt(x) \ge B\}\\
  \NO & = & \{x | \myopt(x) < A \}
\end{eqnarray*}
\end{definition}
We now relate the hardness of the maximization problem to the hardness of the gap problem.
\begin{theorem}
  If the $[A,B]$-gap version of a maximization problem is NP-hard, then it is NP-hard 
  to approximate the maximization problem to within a factor $A/B$.
\end{theorem}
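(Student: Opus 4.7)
The plan is a direct reduction: I would show that any $\alpha$-approximation algorithm with $\alpha \ge A/B$ for the maximization problem can be used as a black box to decide the $[A,B]$-gap problem in polynomial time, so NP-hardness of the latter transfers to NP-hardness of the former.

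Concretely, suppose $C$ is a polynomial-time $\alpha$-approximation algorithm with $\alpha \ge A/B$. On input $x$, I would run $C(x)$ and output YES if and only if $C(x) \ge A$. To verify correctness, I would split into the two promise cases. If $x \in \YES$, then $\myopt(x) \ge B$, and the approximation guarantee gives $C(x) \ge \alpha \cdot \myopt(x) \ge (A/B) \cdot B = A$, so the reduction outputs YES. If $x \in \NO$, then $\myopt(x) < A$, and since an $\alpha$-approximation algorithm always satisfies $C(x) \le \myopt(x)$, we get $C(x) < A$, so the reduction outputs NO. Thus $C$ solves the gap problem in polynomial time.

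Since the $[A,B]$-gap problem is assumed to be NP-hard, any polynomial-time procedure that decides it is NP-hard to implement (unless P $=$ NP); hence the hypothetical $\alpha$-approximation algorithm cannot exist in polynomial time unless P $=$ NP, which is the precise meaning of ``NP-hard to approximate within factor $A/B$.'' The argument is a one-step Karp-style reduction, and there is no real obstacle beyond getting the inequalities in the right direction; the only subtlety to flag is that the definition of approximation used in this lecture assumes $C(x) \le \myopt(x)$, which is what makes the NO case of the reduction go through cleanly.
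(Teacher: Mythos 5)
Your proposal is correct and is essentially identical to the paper's own proof: both use the approximation algorithm $C$ as a black box, decide the gap problem by checking whether $C(x) \ge A$, and verify the two promise cases with the same inequalities (including the key observation that $C(x) \le \myopt(x)$ handles the NO case). Nothing further is needed.
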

\begin{proof}
  Assume, for the sake of contradiction, that there is a polynomial time $A/B$-approximation algorithm $C$ 
  for a maximization problem under consideration. We are going to show that this algorithm can be used 
  in order to solve the gap problem in polynomial time. 

The algorithm for the gap problem is:
  for a given input $x$, if $C(x) \ge A$, return ``yes'', otherwise return ``no''. 

  Indeed, if $x$ is a yes-instance for the gap problem, i.e., $\myopt(x) \ge B$, then 
    \[ C(x) \ge A/B \cdot \myopt(x) \ge A/B \cdot B = A \]
  and we answer ``yes'' correctly. If, on the other hand, $\myopt(x) < A$, then 
    \[ C(x) \le \myopt(x) < A \]
  and we give the correct ``no'' answer.
\end{proof}
Thus, to show hardness of approximation to within a particular factor, it suffices to show hardness of the corresponding gap problem. Hence from now onwards, we focus on gap problems.

\section{Probabilistic Checking of Proofs}

We will now see a surprising alternate description of the hardness of gap problems. The alternate description is in terms of probabilistically checkable proofs, called PCPs for short. 

\subsection{Checking of Proofs}

Let us first recall the classical notion of proof checking. NP is the class of languages that have a deterministic polynomial-time verifier. For every input $x$ in the language, there exists a proof that convinces the verifier that $x$ is in the language. For every input $x$ not in the language, there is no proof that convinces the verifier that $x$ is in the language. 

For example, when the language is 3SAT, the input is a 3CNF formula $\varphi$. A proof for the satisfiability of $\varphi$ is an assignment to the variables that satisfies $\varphi$.

A verifier that checks such a proof may need to go over the entire proof before it can know whether $\varphi$ is satisfiable: the assignment can satisfy all the clauses in $\varphi$, but the last one to be checked.

\subsection{Local Checking of Proofs}

Can we find some other proof for the satisfiability of $\varphi$ that 
can be checked {\em locally}, by querying only a {\em constant} number of symbols from the proof? 

For this to be possible,
we allow the queries to be chosen in a randomized manner (otherwise, effectively the proof is only of constant size, and a language that can be decided using a polynomial-time verifier with access to such a proof can be decided using a polynomial-time algorithm).
The queries should be chosen using at most a logarithmic number of random bits. The logarithmic bound ensures that the verifier can be, in particular, transformed into a standard, deterministic polynomial time, verifier. The deterministic verifier would just perform {\em all} possible checks, instead of one chosen at random. Since the number of random bits is logarithmic, the total number of possible checks is polynomial. The number of queries the deterministic verifier makes to the proof is polynomial as well.

To summarize, we want a verifier that given the input and a proof, tosses a logarithmic number of random coins and uses them to make a constant number of queries to the proof. If the input is in the language, there should exist a proof that the verifier accepts with probability at least $B$. If the input is not in the language, for any proof, the verifier should accept with probability at most $A$. 
The {\em error} probability is the probability that the verifier does not decide correctly, i.e., $1-B+A$. If $B=1$, we say that the verifier has {\em perfect completeness}, i.e., it never errs on inputs in the language.

% Dana: We are not allowing randomized verifiers in the usual sense --
% We are only allowing logarithmic randomness, meaning that the verifier can be % made into deterministic polynomial time.
% The point is in the local testing. The randomization is derived from it.

%The two important points to note in the above definition are that the verifier %is deterministic and that the verifier reads every bit of the proof. PCPs are %the proof systems obtained by altering both these features of classical %verifiers. More elaborately, first we allow the verifier to be {\em %probabilistic}, i.e, it can toss random coins. This would in turn imply that it %is probable that the verifier errs: it can probably reject valid proofs and %accept invalid proofs. Second, we restrict the verifier's view of the proof to %be very {\em local}, it can read only a very few locations in the proof (of %course, it gets to decide which locations it is going to read based on its %input and random coins). We will explicitly state both these parameters, the %number of random coins tossed and the number of proof locations read, in the %specification of the verifier, i.e, probabilistic proof checker. 

\subsection{The Connection to The Hardness of Gap Problems}

The $NP$-hardness of approximation of 3SAT is in fact {\em equivalent} to the existence of local verifiers for $NP$:

%What is the connection to gap problems? It is easy to see that gap-problems %very naturally lead to PCPs. What is the probabilistic checkable proof in case %of the {\mymax3sat} problem? 

\subsubsection{Hardness $\Rightarrow$ Local Verifier}

For $[A,1]$-gap-{\mymaxthreesat}, there is a verifier that makes only $3$ queries to the proof, has perfect completeness, and errs with probability at most $A$!

On input formula $\varphi$,
the proof is a satisfying assignment for $\phi$.
The verifier chooses a random clause of $\phi$, reads the assignment to the three variables of the clause, and checks if the clause is satisfied. The verifier uses $\log m$ random bits, where $m$ is the number of clauses.
If $\varphi$ is satisfiable, the verifier accepts the proof with probability 1.
If not, at most $A$ fraction of all clauses of $\phi$ can be satisfied simultaneously, so the verifier accepts with probability at most $A$.

Moreover, the NP-hardness of $[A,1]$-gap-{\mymaxthreesat} yields local verifiers for all $NP$ languages! More precisely,

\begin{claim}
  If $[A,1]$-gap-{\mymaxthreesat} is NP-hard, then every NP language $L$ has a \emph{probabilistically checkable proof} (PCP).
  That is, there is an efficient randomized verifier that uses only logarithmic number of coin tosses and 
  queries $3$ proof symbols, such that
  \begin{itemize}
    \item if $x \in L$, then there exists a proof that is always accepted;
    \item if $x \not \in L$, then for any proof the probability to err and accept is at most $A$.
  \end{itemize}
\end{claim}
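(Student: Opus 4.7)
The plan is to build the PCP verifier for $L$ by composing two standard moves: first, use the NP-hardness of $[A,1]$-gap-{\mymaxthreesat} to reduce $L$ to this gap problem, and then run the canonical ``pick a random clause, query its three variables'' verifier already shown to work for $[A,1]$-gap-{\mymaxthreesat} in the previous subsection.

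More concretely, let $L\in\NP$ and let $x$ be an input. By assumption, $[A,1]$-gap-{\mymaxthreesat} is $\NP$-hard, so there is a polynomial-time reduction $R$ that maps $x$ to a $3$-CNF formula $\varphi_x$ with $m(x)$ clauses such that: if $x\in L$ then $\varphi_x$ is satisfiable (i.e.\ $\myopt(\varphi_x)\ge 1$), and if $x\notin L$ then $\myopt(\varphi_x)<A$. The PCP proof for $x$ will be an assignment to the variables of $\varphi_x$; on input $x$, the verifier first computes $\varphi_x=R(x)$ in deterministic polynomial time.

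The verifier then tosses $\lceil\log m(x)\rceil$ coins to pick a uniformly random clause $C_i$ of $\varphi_x$, reads the assignment to the (at most) three variables appearing in $C_i$ from the proof, and accepts iff $C_i$ is satisfied. This uses logarithmic randomness and exactly $3$ proof queries, as required. For completeness, if $x\in L$ then a satisfying assignment to $\varphi_x$ makes every clause true, so the verifier accepts with probability $1$. For soundness, if $x\notin L$ then any assignment satisfies strictly fewer than an $A$-fraction of the clauses, so for any proof the probability that the random clause is satisfied, and hence that the verifier accepts, is less than $A$.

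I do not expect a real obstacle here: the whole statement is essentially a repackaging of the gap problem as a PCP. The only small care points are (i) making sure the reduction $R$ exists uniformly for all $L\in\NP$, which follows from $\NP$-hardness of the gap problem via the standard Karp reduction from $L$ (or from $3$SAT, after first applying the Cook--Levin reduction), and (ii) keeping the randomness logarithmic, which is automatic because $m(x)$ is polynomial in $|x|$ so $\log m(x)=O(\log|x|)$.
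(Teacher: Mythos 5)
Your proposal is correct and matches the paper's (largely implicit) argument: reduce $L$ to $[A,1]$-gap-{\mymaxthreesat} via the NP-hardness reduction, take the proof to be an assignment to the resulting formula, and run the ``pick a random clause and check it'' verifier described just before the claim. The completeness and soundness analysis and the accounting of randomness and queries are exactly as the paper intends.
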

Note that the probability of error can be reduced from $A$ to $\epsilon$ by repeating the action of the verifier $k = O(\frac{\log 1/\epsilon}{\log 1/A})$ times, thus making $O(k)$ queries.

\subsubsection{Local Verifier $\Rightarrow$ Hardness}

What about the other direction? Do local verifiers for NP imply the NP-hardness of gap-{\mymaxthreesat}, which would in turn imply inapproximability of {\mymaxthreesat}?

For starters, let us assume that every language in NP has a verifier that makes three bit queries and whose acceptance predicate is the OR of the three variables (or their negations). Assume that for inputs in the language the verfier always accepts a vaild proof, while for inputs not in the language, for any proof, the verifier accepts with probability at most $A$.
From the above correspondence between local verifiers and gap problems, we get that $[A,1]$-gap-{\mymaxthreesat} is NP-hard. 

What if the verifier instead reads a constant number of bits (not $3$) and its acceptance predicate is some other Boolean function on these constant number of bits? We will now use the fact that every Boolean predicate $f \colon \mybits^{O(1)} \rightarrow \mybits$ can be written
as a $3$-CNF formula $\phi$ with clauses (and some additional variables $z$) such that
for any assignment $x$ of Boolean values to variables of $f$, 
$f(x)=1$ iff $\phi(x,z)$ is satisfiable for some $z$. We have thus transformed the $O(1)$-query verfier with an arbitrary Boolean acceptance predicate to a 3-query verifier with acceptance predicate an OR of the 3 variables (or their negations). We thus have.
 \begin{claim}
  If every NP language $L$ has a constant query verifier that uses only logarithmic number of coin tosses and 
  queries $Q$ proof symbols, such that
  \begin{itemize}
    \item if $x \in L$, then there exists a proof that is always accepted;
    \item if $x \not \in L$, then for any proof the probability to err and accept is at most $A$.
  \end{itemize}
then $[A',1]$-gap-{\mymaxthreesat} is NP-hard for some $A'<1$.
\end{claim}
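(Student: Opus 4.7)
The plan is to reduce an arbitrary NP language $L$ with such a PCP verifier $V$ to \mymaxthreesat{} in a clause-preserving way. Fix an input $x$ of length $n$ and let $r = O(\log n)$ be the number of random coins used by $V$, so that there are only $\poly(n)$ possible random strings. For each random string $\rho \in \mybits^r$, $V$ reads $Q$ specific positions of the proof and applies some Boolean acceptance predicate $f_\rho : \mybits^Q \to \mybits$ to those bits; we may assume the query locations and $f_\rho$ are computable from $(x,\rho)$ in polynomial time. The output of the reduction will be a 3-CNF formula $\varphi_x$ whose variables are (i) one Boolean variable for every proof position that $V$ might query on some $\rho$, plus (ii) private auxiliary variables for each $\rho$.

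The key encoding step is Cook–Levin-style local clause gadgets: since $Q = O(1)$, the predicate $f_\rho$ depends on only a constant number of inputs, so standard tricks produce a 3-CNF formula $\psi_\rho(y_1,\dots,y_Q,z_\rho)$ of at most $c$ clauses (where $c$ is a constant depending only on $Q$) with the property that $f_\rho(y)=1$ iff there exists an assignment to $z_\rho$ satisfying every clause of $\psi_\rho$. Here the $y_i$'s are identified with the proof-bit variables queried by $V$ on coins $\rho$, while the $z_\rho$'s are fresh auxiliary variables for each $\rho$. Define
\[ \varphi_x \;=\; \bigwedge_{\rho \in \mybits^r} \psi_\rho. \]
The formula has at most $c \cdot 2^r = \poly(n)$ clauses and can be built in polynomial time.

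For completeness, if $x \in L$, take a proof $\pi$ that $V$ always accepts. Set each proof-bit variable according to $\pi$, and for each $\rho$ choose $z_\rho$ so that $\psi_\rho$ is fully satisfied (possible since $f_\rho$ evaluates to $1$ on $\pi$'s queried bits). Every clause of $\varphi_x$ is then satisfied, so the optimum is $1$. For soundness, suppose $x \notin L$, and consider any assignment to the variables of $\varphi_x$. Its restriction to the proof-bit variables defines some proof $\pi^*$; by the soundness of $V$, the verifier rejects $\pi^*$ on at least a $(1-A)$ fraction of random strings. Whenever $V$ rejects on coins $\rho$, we have $f_\rho(\pi^*\text{-queries}) = 0$, so no assignment to $z_\rho$ can satisfy all of $\psi_\rho$, i.e.\ at least one clause of $\psi_\rho$ is violated. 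Hence at least $(1-A)\cdot 2^r$ clauses of $\varphi_x$ are unsatisfied, out of at most $c\cdot 2^r$ total, giving a satisfied fraction of at most $1 - (1-A)/c =: A' < 1$.

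The main obstacle I anticipate is not conceptual but careful bookkeeping in the soundness argument: one must make sure the clause gadgets $\psi_\rho$ use \emph{disjoint} auxiliary variables across different $\rho$ (so that rejections on distinct random strings contribute independently to the unsatisfied-clause count), and simultaneously \emph{share} the variables that encode queried proof bits (so that the same $\pi^*$ is being tested by all $\psi_\rho$'s). Once this variable-sharing convention is fixed and the constant $c$ is extracted from the gadget construction, the inequality $A' = 1 - (1-A)/c < 1$ falls out immediately, establishing NP-hardness of $[A',1]$-gap-\mymaxthreesat{}.
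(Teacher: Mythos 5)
Your proposal is correct and follows essentially the same route as the paper: express each constant-arity acceptance predicate as a 3-CNF gadget with private auxiliary variables (shared proof-bit variables across gadgets), take the conjunction over all random strings, and observe that soundness $A$ translates into a constant fraction $(1-A)/c$ of unsatisfiable clauses. The paper states this transformation in one sentence without tracking the constant $c$; your explicit bookkeeping of the gap $A' = 1-(1-A)/c$ and of the disjoint-auxiliary/shared-proof-variable convention is exactly the detail the paper leaves implicit.
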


Thus, we have shown that the problem of proving NP-hardness of gap-{\mymaxthreesat} is equivalent to the problem of constructing constant query verifiers for NP. But do such verifiers exist? 

\subsection{The PCP Theorem}

Following a long sequence of work, Arora and Safra and Arora, Lund, Motwani, Sudan and Szegedy in the early 90's constructed local verifiers for NP:

\begin{theorem}[PCP Theorem (\dots,\cite{AroraS1998},\cite{AroraLMSS1998})]
  Every NP language $L$ has a probabilistically checkable proof (PCP).
  More precisely, there is an efficient randomized verifier that uses only logarithmic number of coin tosses and 
  queries $O(1)$ proof symbols, such that
  \begin{itemize}
    \item if $x \in L$, then there exists a proof that is always accepted;
    \item if $x \not \in L$, then for any proof the probability to accept it is at most $1/2$.
  \end{itemize}
\end{theorem}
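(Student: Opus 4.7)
The plan is to follow the classical algebraic approach via arithmetization, low-degree testing, and proof composition, which is how the PCP theorem was originally established. By the equivalence between local verifiers and gap problems established in the preceding claims, it suffices to build a PCP verifier with logarithmic randomness and $O(1)$ queries for, say, 3SAT (or more conveniently for CIRCUIT-SAT), with perfect completeness and soundness $1/2$.

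First I would build a weaker PCP in which the verifier uses \emph{polynomial} randomness but still only $O(1)$ queries, proving $\NP \subseteq \mathrm{PCP}[\poly(n), O(1)]$. The steps are: (i) \emph{arithmetization}, in which one encodes an $\NP$ instance as a system of low-degree polynomial constraints over a finite field $\F$, so that a satisfying assignment corresponds to a function $f \colon H^m \to \F$ (with $|H|^m \approx n$) such that an associated polynomial vanishes on a prescribed subcube; (ii) the prover writes down the \emph{low-degree extension} $\tilde f$ of $f$ on $\F^m$, together with auxiliary polynomials witnessing the vanishing conditions (e.g.\ via a sum-check-style layering); (iii) the verifier performs a \emph{low-degree test} on $\tilde f$ (random line or point-vs.-line test) and a \emph{constraint test} by evaluating $\tilde f$ and the auxiliary functions at a random point. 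Standard analysis of the low-degree test shows that a function passing it with good probability is close to a genuine low-degree polynomial, and over a large enough field the constraint test catches any cheating prover with constant probability. This yields $O(1)$ queries at the price of $\poly(n)$ random bits, since specifying a random point in $\F^m$ takes $\poly(n)$ bits.

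Next I would cut the randomness down to $O(\log n)$ by \emph{proof composition}. Start from a different ``outer'' PCP --- easily obtained from arithmetization --- that uses only $O(\log n)$ random bits but reads $\mathrm{polylog}(n)$ proof symbols. Rather than letting this outer verifier read those symbols directly, we have the prover provide, for every random string of the outer verifier, an \emph{inner} PCP attesting that the outer verifier would have accepted. The inner PCP is the polynomial-randomness, $O(1)$-query system built above, applied to the (tiny, polylog-sized) acceptance predicate of the outer verifier. Because the inner verifier runs on an instance of size $\mathrm{polylog}(n)$, its polynomial randomness becomes only $\mathrm{polylog}(n)$; combined with the $O(\log n)$ randomness of the outer step and some care in encoding (consistency of the ``outer'' answers with what the inner verifier reads), one obtains $O(\log n)$ total randomness and $O(1)$ total queries, with a constant soundness error that can be driven below $1/2$ by $O(1)$ independent repetitions.

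The main obstacle, and the heart of the proof, is the \emph{low-degree testing} analysis: proving that a function $g\colon \F^m\to\F$ that agrees with some low-degree polynomial on a random line with nontrivial probability must be globally close to a low-degree polynomial. This is a delicate combinatorial/algebraic statement (the original Rubinfeld--Sudan / Arora--Safra analysis, later sharpened by Raz--Safra and Arora--Sudan), and everything downstream --- in particular the soundness of the composed verifier --- depends on quantitatively strong versions of it. A secondary technical hurdle is making the composition formal: the inner verifier must be \emph{robust}, meaning that on a ``no'' instance not only does every proof get rejected with constant probability, but the answers read by the outer verifier are forced to be a valid encoding, so that outer-level cheating cannot be hidden inside the inner proofs. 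Once these two ingredients are in place, the theorem follows by assembling the outer and inner PCPs and amplifying the soundness to $1/2$.
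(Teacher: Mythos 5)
Your overall architecture --- arithmetization of \circuitsat\ into a ``vanishing on a subcube'' condition, a low-degree test on a Reed--Muller encoding, and proof composition to bring the query complexity down to a constant --- is the route these notes take in Lectures~5 and~6, so the plan is the right one. The organization differs: the notes build a single \emph{robust} PCP with $O(\log n)$ randomness and $\poly\log n$ queries (choosing $m=\log n/\log\log n$ so that picking a random line in $\F^{m}$ costs only $O(\log n)$ random bits) and then invoke composition in the robust-PCP / two-query projection framework, whereas you follow the classical decomposition into an outer $[O(\log n),\poly\log n]$ verifier and an inner $[\poly(n),O(1)]$ verifier. Both can be made to work; the notes' version has the further advantage of yielding the strong low-error projection-games form directly.

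There is, however, one step in your sketch that fails as written. You claim that arithmetization plus a line (or point-vs-line) low-degree test already gives a $\poly(n)$-randomness, $O(1)$-query PCP. It does not: reading the restriction of $\tilde f$ to a random line costs $|\F|$ queries, and $|\F|$ must grow with the degree of the polynomials involved for the Schwartz--Zippel and low-degree-test soundness bounds to be nontrivial, so this verifier makes $\omega(1)$ queries --- it is $O(1)$ only if one counts an entire line as a single symbol over the huge alphabet, which merely moves the problem into the alphabet size. The genuine constant-query base case in the classical proof is the Hadamard/quadratic-function verifier over $\F_2$, whose ``low-degree test'' is the three-query BLR linearity test; this is a different code and a different test, and without it the composition has nothing to bottom out at, so the query complexity never reaches $O(1)$. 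A smaller point: in the composition it is the \emph{outer} verifier that must be robust (on no-instances its local view must be far from every accepting view), so that the inner verifier --- a verifier of proximity probing only a few positions of that view --- can detect the cheating; attributing the robustness requirement to the inner verifier, as you do, gets the consistency mechanism backwards.
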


The proof in \cite{AroraS1998,AroraLMSS1998} is algebraic and uses properties of low-degree polynomials. There is a more recent alternate combinatorial proof for the theorem due to Dinur~\cite{Dinur2007}. We will later in the workshop see some of the elements that go into the construction.

The PCP Theorem shows that it is NP-hard to approximate {\mymaxthreesat} to within 
\emph{some constant factor}. The natural further question is: can we improve this constant to $7/8$ (to match the trivial approximation algorithm from \lref[Claim]{claim:maxsat})? A positive answer to this question would yield a \emph{tight}
$7/8$-hardness for approximation of {\mymaxthreesat}.

\section{Projection Games}

In $1995$, Bellare, Goldreich, and Sudan~\cite{BellareGS1998} introduced a paradigm for proving inapproximability results. 
Following this paradigm, H\r{a}stad~\cite{Hastad2001} established {\em tight} hardness for approximating {\mymaxthreesat}, as well as many other problems. 

The paradigm is based on the hardness of a particular gap problem, called {\em Label-Cover}. 

\begin{definition}
  An instance of a \emph{projection game} (also called {\em label-cover}) is specified by bipartite graph $G=(A,B,E)$,
  two alphabets $\Sigma_A$, $\Sigma_B$, and \emph{projections} $\pi_e \colon \Sigma_A \rightarrow \Sigma_B$
  (for every edge $e \in E$). 

Given assignments $\calA:A\to \Sigma_A$ and $\calB:B \to \Sigma_B$, an edge $e=(a,b) \in E$ is said to be satisfied iff $\pi_e(\calA(a))=\calB(b)$. The \emph{value}
  of this game is 
  \[ \max_{\calA,\calB}\left(\Pr_{e\in E}\left[\text{$e$ is satisfied} \right]\right)\, .\]
  In a \emph{label-cover problem}: given a projection game instance, compute the value of the game.
\end{definition}

\begin{center}
  \begin{tikzpicture}[auto,sloped]
	  \foreach \a in {1.0,1.5,...,5.0}
  	  \draw (1,\a) rectangle (1.5,\a+0.5);
  	\draw (1.25,6.0) node {$A$};
  	\draw (6.25,5.0) node {$B$};
  	\draw (1.25,3.25) node {$?$};
  	\draw (6.25,2.25) node {$?$};
  	\foreach \b in {1.5,2.0,...,4.0}
  	  \draw (6,\b) rectangle (6.5,\b+0.5);
  	%\draw[->] (1.5,3.25) -- \node{t} (6.0,2.25);
  	\path (1.5,3.25) edge[->] node[] {$\pi_e$} (6.0,2.25);
  	\path (1.5,3.75) edge[->] (6.0,3.75);
  	\path (1.5,4.75) edge[->] (6.0,3.75);
  \end{tikzpicture}
\end{center}

The PCP-theorem could be formulated as a theorem about Label-Cover.

\begin{theorem}[\dots, \cite{AroraS1998}, \cite{AroraLMSS1998}]\label{thm:weakpcp}
  Label-Cover is NP-hard to approximate within \emph{some constant}.
\end{theorem}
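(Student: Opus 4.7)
The plan is to reduce the PCP theorem to NP-hardness of gap-Label-Cover via the standard ``random string--proof position'' projection game. Fix an NP-hard language $L$ with PCP verifier $V$ from the theorem: on input $x$ and oracle access to a proof, $V$ tosses $O(\log |x|)$ coins, reads $q=O(1)$ proof symbols from some constant-size alphabet $\Sigma$, always accepts some proof when $x\in L$, and accepts every proof with probability at most $1/2$ when $x\notin L$. Given $x$, let $R$ be the (polynomial-sized) set of random strings and $P$ the set of proof positions queried by $V$ on some $r\in R$. Build a projection game with $A=R$, $B=P$, $\Sigma_B=\Sigma$, and $\Sigma_A$ equal to the set of $q$-tuples over $\Sigma$ that form accepting local views of $V$ on some random string. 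If $V$ on randomness $r$ queries positions $i_1^r,\ldots,i_q^r$, add edges $(r,i_j^r)$ for $j=1,\ldots,q$ and let the projection $\pi_{(r,i_j^r)}$ map each $q$-tuple to its $j$-th coordinate. Both alphabets have constant size and the graph has polynomial size, so this is a polynomial-time reduction.

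For completeness, if $x\in L$ I pick a proof $\tau$ that $V$ always accepts and set $\calB(i)=\tau(i)$, $\calA(r)=(\tau(i_1^r),\ldots,\tau(i_q^r))$. Perfect completeness guarantees $\calA(r)\in\Sigma_A$, and every edge is then satisfied by construction, so the game has value $1$. For soundness, suppose $x\notin L$ and some labelings $\calA,\calB$ satisfy a $v$-fraction of edges. Interpret $\calB$ as a proof for $V$. Whenever all $q$ edges incident to some $r$ are satisfied, $\calA(r)$ equals the tuple of symbols that $V$ reads from $\calB$ on randomness $r$, which is accepting by the definition of $\Sigma_A$; hence $V$ accepts on $r$. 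The expected number of unsatisfied edges incident to a random $r$ is $q(1-v)$, so by Markov's inequality,
\[
\Pr_r[V\text{ accepts }\calB]\;\geq\; \Pr_r[\text{all $q$ edges at $r$ are satisfied}]\;\geq\; 1-q(1-v).
\]
Combining this with the PCP soundness bound $\Pr_r[V\text{ accepts }\calB]\leq 1/2$ forces $v\leq 1-1/(2q)$. Thus yes-instances of $L$ yield projection games of value $1$ while no-instances yield value at most $1-1/(2q)$, a constant strictly less than $1$, proving NP-hardness of $[\,1-1/(2q),\,1\,]$-gap Label-Cover.

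The main obstacle is the soundness analysis: the $A$-side of a label-cover assignment carries a full accepting local view, which is richer than a single proof symbol, so one must translate such data back into an actual PCP proof. The argument above works cleanly because perfect completeness lets me reuse $\calA(r)$ verbatim as an accepting view, reducing matters to an averaging bound on coordinate mismatches. Without perfect completeness, or if one wanted sharper constants, a randomized rounding from $(\calA,\calB)$ to a proof, or a ``smoothed''/regularized variant of the PCP, would be required.
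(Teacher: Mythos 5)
Your reduction is exactly the one the paper sketches (left vertices are random strings, right vertices are proof positions, left labels are accepting local views, right labels are proof symbols, projections are consistency checks), and your completeness argument together with the Markov-inequality accounting is the standard way to fill in the soundness analysis that the paper omits. However, there is one concrete flaw in the soundness step. You define $\Sigma_A$ globally as the set of $q$-tuples that are accepting views of $V$ on \emph{some} random string, and you let $\pi_{(r,i_j^r)}$ be bare coordinate extraction. Then the inference ``all $q$ edges at $r$ are satisfied $\Rightarrow$ $\calA(r)$ is an accepting view, hence $V$ accepts on $r$'' does not follow: membership in your $\Sigma_A$ only guarantees that $\calA(r)$ is accepting for \emph{some} randomness $r'$, not for $r$ itself. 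A cheating prover can exploit this. For instance, if on randomness $r$ the verifier queries positions $1,2$ and accepts iff the symbols are equal, while on $r'$ it queries the same positions and accepts iff they are unequal, then $\calB(1)=0$, $\calB(2)=1$, $\calA(r)=\calA(r')=(0,1)$ satisfies every edge (game value $1$) even though $V$ rejects $\calB$ on $r$; so the game value can be $1$ while the verifier's acceptance probability is only $1/2$, and your bound $v\le 1-1/(2q)$ fails.

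The standard repair, which is implicit in the paper's MAX-3SAT illustration (``satisfying assignments of \emph{a clause}''), is to decode left labels \emph{per vertex}: take $\Sigma_A=[K]$ with $K=\max_r|\mathrm{ACC}(r)|$, fix for each $r$ a surjection $\sigma_r:[K]\to\mathrm{ACC}(r)$ onto the accepting views on randomness $r$ (nonempty by perfect completeness in the relevant case), and set $\pi_{(r,i_j^r)}(a)=(\sigma_r(a))_j$. Now every left label genuinely encodes an accepting view for its own random string, satisfaction of all $q$ edges at $r$ really does force $V$ to accept $\calB$ on $r$, and the rest of your argument goes through verbatim. With that one-line modification your proof is correct and follows the same route as the paper's.
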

\begin{proof}
  The proof is by reduction to Label-Cover. Recall that we have a PCP verifier from the previous formulation. 
  We construct a bipartite graph as follows. For each possible random string 
  of the verifier, we have a vertex on the left-side. Since the verifier uses only 
  a logarithmic number of bits, the number of such strings is polynomial.
  For each proof location, we have a vertex on the right-side. We add an edge between a left vertex (i.e., a random string)
  and a right vertex (i.e., a proof location), if the verifier queries this proof location on this random string.
  Thus, we defined a bipartite graph. We are now going to describe the labels and projections.

  The labels for the left-side vertices are accepting verifier views and for the right-side are proof
  symbols. A projection is just a consistency check. For example, in case of {\mymaxthreesat}, we have satisfying 
  assignments of a clause on the left-side and values of variables on the right-side.
\end{proof}

However, this theorem is not strong enough to get tight hardness of approximation results. For this reason, we call this the weak projection games theorem.
What we actually need is a low error version of this theorem, which improves the hardness in \lref[Theorem]{thm:weakpcp} from ``some constant'' to ``any arbitrarily small constant''.

\begin{theorem}[(strong) Projection Games Theorem (aka Raz's verifier)~\cite{Raz1998}]
  For every constant $\epsilon>0$, there exists a $k=k(\epsilon)$ such that is is NP-hard
  to decide if a given projection game with labels of size at most $k$ (ie., $|\Sigma_A|, |\Sigma_B| \leq k$) has value $1$ or at most $\epsilon$.
\end{theorem}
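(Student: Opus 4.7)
The plan is to derive the strong theorem from the weak one (Theorem~\ref{thm:weakpcp}) by \emph{parallel repetition}. By the weak projection games theorem there is a constant $c<1$, an alphabet size $k_0$, and a polynomial-time reduction from an $\np$-hard language to projection games $G=(A,B,E,\Sigma_A,\Sigma_B,\{\pi_e\})$ with $|\Sigma_A|,|\Sigma_B|\le k_0$, such that YES instances have value $1$ and NO instances have value at most $c$. Starting from such a $G$, I will form its $t$-fold product $G^{\otimes t}$ and show that for an appropriately chosen constant $t=t(\epsilon)$, the resulting instance separates value $1$ from value $\epsilon$, while still being a projection game over an alphabet of constant size.

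First I would define $G^{\otimes t}$: vertex sets are $A^t$ and $B^t$, alphabets are $\Sigma_A^t$ and $\Sigma_B^t$, and for an edge $\bar e=(e_1,\dots,e_t)$ (obtained by picking $t$ edges of $G$ independently), the projection is the coordinate-wise product $\pi_{\bar e}(\sigma_1,\dots,\sigma_t)=(\pi_{e_1}(\sigma_1),\dots,\pi_{e_t}(\sigma_t))$. This is still a projection game, and its size is polynomial in $|G|$ for constant $t$, so the overall reduction remains polynomial time. The alphabet size is $k_0^t$, which is a constant $k(\epsilon)$ depending only on $\epsilon$. For completeness, any assignment of value $1$ in $G$ lifts coordinate-wise to one of value $1$ in $G^{\otimes t}$.

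The soundness step is where the real work lies, and this is the main obstacle. One would like to say that if no assignment satisfies more than a $c$ fraction of $G$, then no assignment satisfies more than a $c^t$ fraction of $G^{\otimes t}$; but this is false in general, since a prover strategy in the product game can correlate its answers across coordinates. This is precisely the content of Raz's parallel repetition theorem, which I would invoke as a black box: there is a constant $c'=c'(c,k_0)<1$ such that $\val(G^{\otimes t})\le (c')^t$ whenever $\val(G)\le c$. The proof of Raz's theorem is information-theoretic: one fixes a hypothetical good strategy for $G^{\otimes t}$, conditions on the event that some subset of coordinates is satisfied, and uses an information-theoretic argument (bounding divergence in terms of success probability) to extract from it a strategy for a single copy of $G$ that succeeds with probability noticeably more than $c$, a contradiction.

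Granting Raz's theorem, I finish by choosing $t=\lceil \log(1/\epsilon)/\log(1/c')\rceil$, so that $(c')^t\le \epsilon$. Then $G^{\otimes t}$ is a projection game with alphabet size $k(\epsilon)=k_0^t$, produced in polynomial time, whose value is $1$ on YES instances and at most $\epsilon$ on NO instances, yielding the desired $\np$-hardness.
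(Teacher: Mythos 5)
Your proposal is correct and follows exactly the route the paper indicates: the paper itself gives no detailed proof, remarking only that the theorem ``follows from Raz's parallel repetition theorem applied to the construction in the weak projection games theorem,'' which is precisely the product-game construction plus black-box invocation of parallel repetition that you spell out. Your identification of the soundness step (naive $c^t$ amplification fails because product strategies can correlate coordinates, which is why Raz's theorem is needed) is the right and only nontrivial point.
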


The PCP construction in the strong projection games theorem is commonly refered to as Raz's verifier as the theorem follows from Raz's {\em parallel repetition theorem} applied to the construction in \lref[Theorem]{thm:weakpcp}~\cite{Raz1998}. Recently, Moshkovitz and Raz~\cite{MoshkovitzR2008b} gave an alternate proof of this theorem that allows the error $\epsilon$ to be sub-constant.

Why does the label size $k$ depend on $\epsilon$ in the above theorem? This is explained by the following claim, which implies that 
$k$ must be at least $1/\epsilon$.

\begin{claim}
  There is an efficient $1/k$-approximation algorithm for projection games on labels  of size $k$ 
  (i.e., $|\Sigma_A|, |\Sigma_B| \le k$).
\end{claim}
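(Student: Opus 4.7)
The plan is to analyze a simple two-phase random-then-greedy algorithm. In the first phase, assign each vertex $b \in B$ a label $\calB(b)$ independently and uniformly at random from $\Sigma_B$. In the second phase, for each vertex $a \in A$, set $\calA(a)$ to be the label $\sigma \in \Sigma_A$ that maximizes $|\{b : (a,b) \in E,\ \pi_{(a,b)}(\sigma) = \calB(b)\}|$. Both phases clearly run in polynomial time in the instance size.

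For the analysis, fix optimal assignments $(\calA^*,\calB^*)$ of value $\myopt$ and let $S^* \subseteq E$ be the set of edges they satisfy, so $|S^*|/|E| = \myopt$. For each edge $e=(a,b) \in S^*$, the marginal distribution of $\calB(b)$ is uniform on $\Sigma_B$, and hence the probability that $\calB(b) = \calB^*(b)$ is exactly $1/|\Sigma_B| \ge 1/k$; whenever this holds, the hypothetical pair $(\calA^*,\calB)$ also satisfies $e$ since $\pi_e(\calA^*(a)) = \calB^*(b) = \calB(b)$. Summing over all edges in $S^*$ and using linearity of expectation, the pair $(\calA^*,\calB)$ satisfies in expectation a fraction of at least $\myopt/k$ of the edges. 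But for any fixed $\calB$, the greedy $\calA$ produced in the second phase satisfies at least as many edges as $\calA^*$ would against the same $\calB$ (the second phase picks the best $\calA(a)$ at each $a$ independently). Therefore the algorithm's expected output value is at least $\myopt/k$.

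To derandomize, observe that the analysis above used only the \emph{marginal} uniformity of $\calB(b)$ for each $b$, not any joint independence across vertices of $B$. Consequently the same bound holds if we instead sample a single $\sigma_0 \in \Sigma_B$ uniformly at random and set $\calB(b)=\sigma_0$ for every $b \in B$. We can then enumerate all at most $k$ choices of $\sigma_0$, run the greedy second phase for each, and return the best $(\calA,\calB)$ found; this gives a deterministic polynomial-time $1/k$-approximation. The proof has no real obstacle---the only content is the observation that assigning random labels on one side and optimizing greedily on the other loses at most a factor of $|\Sigma_B| \le k$ relative to the optimum.
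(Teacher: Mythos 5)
Your proof is correct. Note that the paper states this claim without proof, so there is nothing to compare against directly; your random-assignment-on-$B$ plus greedy-on-$A$ argument, the observation that greedy beats $\calA^*$ against any fixed $\calB$, and the derandomization via a single uniformly chosen $\sigma_0$ (which is valid because only the marginals enter the expectation) are all sound. It is worth pointing out, though, that the claim admits an even shorter argument that makes the comparison to $\myopt$ unnecessary: fix \emph{any} assignment $\calA$ to $A$, and for each $b\in B$ independently set $\calB(b)$ to the plurality value of $\{\pi_{(a,b)}(\calA(a)) : (a,b)\in E\}$. This satisfies at least $\deg(b)/|\Sigma_B|$ of the edges at each $b$, hence at least $|E|/k \ge \myopt\cdot|E|/k$ edges overall, deterministically and in one pass. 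That route is simpler and actually gives the slightly stronger absolute guarantee of value at least $1/k$ (rather than $\myopt/k$), which is exactly the form needed for the remark in the text that $k$ must be at least $1/\epsilon$ in the strong projection games theorem. Your version loses nothing in correctness, but the extra machinery (the coupling with $\calA^*$ and the greedy exchange argument) is not needed for this particular bound.
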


We will later in this tutorial see how the projection games theorem implies tight hardness of approximation for \textsc{3Sat}.

We remark that for many other problems, like \textsc{Vertex-Cover} or \textsc{Max-Cut}, we do not know of tight hardness of approximation results based on the projection games theorem. To handle such problems, Khot formulated the {\em unique games conjecture}~\cite{Khot2002}. This conjecture postulates the hardness of {\em unique label cover}, where the projections $\pi_e$ on the edges are permutations. More on that -- later in the tutorial.

\cleardoublepage

\lecture{3}{20 July, 09}{Gwen Spencer }{Matthew Andrews}{Approximation
Algorithms for Network Problems}
%% eg: for Moses Charikar's lecture scribed by Alantha Newman, it'd be
%%\lecture{9}{Jul 21, 2009}{Alantha Newman}{Moses
%%Charikar}{Semi-definite programming and Unique Games}

%%%% body goes in here %%%%

Lectures~3 and 4 will be on network/flow problems, the known approximation algorithms and inapproximability results. In particular, this lecture will serve as an introduction to the different types of network/flow problems and a survey of the known results, while the follow-up lecture by Lisa Zhang will deal with some of the techniques that go into proving hardness of approximation of network/flow problems.

\section{Network Flow Problems}
Network/Flow problems are often motivated by industrial applications. We
are given a communication or transportation network and our goal is to
move/route objects/information though these networks.

The basic problem that we shall be considering is defined by a graph $G=(V,E)$ and a set of
(source,destination) pairs of nodes which we'll denote $(s_1,t_1)$,
$(s_2, t_2)$, etc.  We will sometimes call these pairs ``demand
pairs.''  There are many variants of the problem:
\begin{itemize}
\item \textbf{Only Connectivity is required.}  The question is one of feasibility: ``Is it possible to select a
subset of the edge set of $G$ that connects every $(s_i, t_i)$
pair?''
\item \textbf{Capacities must be respected.} Each edge has a
capacity, and each $(s_i, t_i)$ pair has some amount of demand that
must be routed from $s_i$ to $t_i$.  Observe that this problem is
infeasible if there exists a cut in the graph which has less
capacity than the demand which must cross it.  Imagine variations on
this problem in which more capacity can be purchased on an edge at
some cost (that is, the capacities are not \textit{strict}): the
question becomes: ``What is the minimum amount of capacity that must
be purchased to feasibly route all demand pairs?''
\item \textbf{What solutions are ``good'' depends on the objective
function.}  Consider the difference between the objective of trying
to minimize the maximum congestion (where congestion is the total
demand routed along an edge) and the objective of trying to minimize
the total capacity purchased:  it is not hard to find examples where
a good solution with respect to the first objective is a bad
solution with respect to the second objective and vice versa. The
maximum congestion objective is often used to describe delay/quality
of service.
\item \textbf{Splittable vs. unsplittable flow.} In the unsplittable
flow case all demand routed between $s_i$ and $t_i$ must travel on a
single $(s_i, t_i)$ path.  In the splittable flow case each demand
can be split so that it is routed on a set of paths between $s_i$
and $t_i$.

\item \textbf{Directed vs. Undirected.}  Is the graph directed or
undirected?  As a rule of thumb, problems in which the graph is
directed are more difficult.
\end{itemize}
Next we'll consider some specific problems and describe what positive and negative
results exist for each of them:
\subsection{Minimum Cost Steiner Forest}
In this problem we are interested in simple connectivity.  The input
to the problem is a graph with edge costs and a set of $(s_i, t_i)$
pairs.  The goal is to connect each $(s_i, t_i)$ pair via a set of
edges which has the minimum possible total cost (the cost of a set
of edges is just the sum of the costs of all edges in the set).

Notice that any feasible solution to this problem is a set of
trees.

Both positive and negative results exist for this problem:
\begin{itemize}
\item \textbf{Positive:} 2-approximation
(Agrawal-Klein-Ravi~\cite{AgrawalKR1995},
Goemans-Williamson~\cite{GoemansW1995})

\item \textbf{Negative:} APX-hard, there exists $\varepsilon$ such that no
$1+\varepsilon$ approximation algorithm exists for the problem unless
P=NP.
\end{itemize}

\subsection{Congestion Minimization (Fractional)}
The input to this problem is a graph with edge capacities and a set
of $(s_i, t_i)$ pairs.  The goal is to connect all $(s_i, t_i)$
pairs fractionally (that is, for all $i$, to route a total of one
unit of demand from $s_i$ to $t_i$ along some set of paths in the
graph) in a way that minimizes the maximum congestion.  The
congestion on an edge is simply the total demand routed on that edge
divided by the capacity of the edge. The maximum congestion is the
maximum congestion taken over all edges in the graph.

This problem can be solved exactly in polynomial time via a linear
program.  We write the linear program as follows: let $u_e$ denote
the capacity of edge $e$, and have a decision variable $x_{p,i}$
which is the amount of demand $i$ that is routed on path $p$:

\[ \begin{array}{rl}
    \min        & z \\
    \mbox{s.t.} & \sum_p x_{p,i}  = 1   \quad \forall i  \\
                & \sum_i \sum_{p:e\in p}x_{p,i}\leq zu_e  \quad \forall
                e.\\
                & x_{p,i}\geq 0 \quad \forall p, i.
    \end{array}
\]

The first set of constraints says that for each demand pair $i$, one
unit of demand must be routed from $s_i$ to $t_i$.  The second set
of constraints says that for each edge $e$, the sum of all demand
routed on $e$ must be less than $z$ times the capacity of $e$. Since
the objective is to minimize $z$, the optimal LP solution finds the
minimum multiplicative factor $z$ required so that the capacity of
each edge is at least $1/z$ times the total demand routed on that
edge (that is, the optimal $z$ is the minimum possible maximum
congestion).

Though this LP is not of polynomial size (the number of paths may be
exponentially large) it can be solved in polynomial time, using an
equivalent edge-based formulation whose variables $y_{e,i}$ represent
the amount of flow from demand $i$ routed through edge $e$. Hence we
can obtain an exact solution to the problem.

\subsection{Congestion Minimization (Integral)}
Now consider the Congestion Minimization problem when we require the
routing be \textbf{integral} (all demand routed from $s_i$ to $t_i$
must be routed on a single path).  We can no longer solve this
problem using the linear program above.  The following results are
known:

\begin{itemize}
\item \textbf{Positive:} A $O(\log n/\log \log n)$-approximation algorithm where $n$ is the number of
vertices due to Raghavan-Thompson~\cite{RaghavanT1987}. This algorithm is based on the
technique of {\em randomized rounding} which we describe below.

\item \textbf{Negative:}
Andrews-Zhang~\cite{AndrewsZ2007}) show that
there is no $(\log n)^{1-\varepsilon}$-approximation unless NP has
\textit{efficient} algorithms.  More formally our result holds unless
NP$\subseteq$ZPTIME($n^{{polylog}(n)}$), where
ZPTIME($n^{{polylog}(n)}$) is the class of languages that
can be recognized by randomized algorithms that always give the
correct answer and whose expected running time is $n^{{polylog}(n)}=n^{\log^k n}$ for some constant $k$.  The assumption that
NP$\not\subseteq$ZPTIME($n^{{polylog}(n)}$) is not quite
as strong an assumption as $NP\neq P$ but is still widely believed to
be true.
%Here \textit{efficient} means that the algorithm runs in randomized
%$n^{ poly\log(n)}$ time, where $poly\log(n) = (\log n)^k$, for
%constant $k$. Many people think that it is not the case that NP has
%\textit{efficient} algorithms. 
%(The flavor of this result is a bit
%like results which show that something is impossible unless $P=NP$,
%where the later is widely believed to be false).
\end{itemize}

Note that the gap between the positive and negative results here is
large. We comment that for the directed version of the problem, a
negative result has been proved that no 
$\Omega(\log n/ \log \log n)$-approximation exists unless NP has efficient
algorithms~\cite{ChuzhoyGKT2007,AndrewsZ2008}.\\

We now describe the Raghavan-Thompson randomized rounding method for
approximating the Integral Congestion Minimization Problem:

\begin{itemize}
\item Notice that the LP for the fractional problem is a linear
relaxation of the IP we would write for the integral case.  Thus,
the optimal solution to the fractional version is a lower bound on
the optimal value of the integral version:  $OPT_{frac} \leq
OPT_{integral}$.
\item Note that $\sum_p x_{p,i} = 1 $ for all $i$.  Treat the $x_{p,i}
$ as a probability distribution: demand $i$ is routed on path $p$ with
probability $x_{p,i}$.  By linearity of expectation, the expected
congestion of the resulting ranodmly rounded solution on each edge is
at most $OPT_{frac}$.\\

In any given rounding though, some edges will have more than their
expected congestion.  It is possible to show that for any {\em fixed} edge $e$,
with large probability ($\geq 1-\frac{1}{2n^2}$) the congestion on
edge $e$ is $O(\log n/\log \log n) OPT_{frac}=O(\log n/\log \log n)
OPT_{integral}$. By a union bound this implies that with probability
at least $\frac{1}{2}$ the maximum congestion of the randomly rounded
solution {\em on any edge} is $O(\log n/\log \log n) OPT_{integral}$. 

For the directed case this gives the best achievable approximation.
Whether something better exists for the undirected case is an open
question.

\end{itemize}

\subsection{Edge Disjoint Paths}

The input is a graph with edge capacities and a set of $(s_i, t_i)$
pairs.  The goal is to connect every $(s_i, t_i)$ pair integrally
using disjoint paths (that is, to find a set of paths, one
connecting each pair, such that the paths for two distinct pairs
share no edges).  The goal is to connect the maximum possible number
of pairs.

The following results are known:
\begin{itemize}

\item \textbf{Positive:} A $O(m^{1/2})$-approximation where $m$ is the number of edges, due to Kleinberg~\cite{Kleinberg1996}.

\item \textbf{Negative:} (undirected) – No $(\log
n)^{1/2-\varepsilon}$ -approximation exists unless NP has efficient algorithms (Andrews-Zhang~\cite{AndrewsZ2006}).

\item \textbf{Negative:}(directed) – No $O(m^{1/2-\varepsilon})$-approximation exists unless P=NP
(Guruswami-Khanna-Rajaraman-Shepherd-Yannakakis~\cite{GuruswamiKRSY2003}).

\end{itemize}
 We'll look at Kleinberg's $m^{1/2}$ approximation algorithm for this
 problem ($m$ is the number of edges). Consider a greedy algorithm as follows:
\begin{enumerate}
\item Find the shortest path that connects two terminals.
\item Remove all the edges on that path from the graph.
\item Repeat until we cannot connect any more terminals. 

\end{enumerate}

\noindent \textbf{Analysis.} At all times we let $G'$ be the subgraph
of $G$ that contains the remaining edges (i.e.\ the edges that have
not been removed in Step 2). There are two cases in our analysis:
either the shortest path $p$ linking two terminals in the remaining
graph $G'$ has length at most $m^{1/2}$, or not:

\begin{itemize}
\item Suppose the shortest path $p$ in $G'$ has length
$\leq m^{1/2}$.  Each edge in $p$ intersects at most one path from
the optimal solution (since the paths in the optimal solution are
disjoint), so $p$ intersects at most $m^{1/2}$ paths from the
optimal solution.  Thus, when the algorithm removes $p$, at most
$m^{1/2}$ paths are removed from the optimal solution.\\

Thus, the algorithm produces at least one path for every $m^{1/2}$
paths in the optimal solution.

\item Suppose the shortest path $p$ in $G'$ has length
strictly greater than $ m^{1/2}$. Since the paths in the optimal
solution are disjoint, and all must have length at least as long as
$p$, the optimal solution has at most $m/((m^{1/2)})= m^{1/2}$ paths
in $G'$.  Thus, to get a $m^{1/2}$ approximation for $G'$ the
algorithm need only produce one path (so the algorithm can just use
$p$).
\end{itemize}

We mention that for this problem we can't hope to do better with a
linear programming relaxation method because the gap between the
optimal IP solution and the optimal LP solution can be $m^{1/2}$.

\subsection{Minimum Cost Network Design}
The input is a graph, a set of $(s_i, t_i)$ pairs and a cost
function $f(c)$ for placing capacity on an edge.  The goal is to
route one unit of demand between each pair in a way that requires
the minimum cost expenditure for capacity.  \\

Commonly considered cost functions include:

\begin{enumerate}
\item Linear: shortest paths are optimal.
\item Constant: this results in the Steiner forest problem.
\item Subadditive: economies of scale and buy-at-bulk problems. This
is a nice way of modelling how aggregating demand onto a core
network is beneficial and arises in many industrial network design
problems. For subadditive cost functions the following results are
known:
\begin{itemize}
\item \textbf{Positive:} a $O(\log n)$-approximation
(Awerbuch-Azar~\cite{AwerbuchA1997}, Bartal~\cite{Bartal1998}, 
Fakcharoenphol-Rao-Talwar~\cite{FakcharoenpholRT2004}).

\item \textbf{Negative:} No $(\log n)^{1/4-\varepsilon}$ approximation exists unless
NP has efficient algorithms (Andrews~\cite{Andrews2004}).
\end{itemize}
\end{enumerate}

\section*{Summary}
Approximation ratios vary widely for different types of network flow
problems:
\begin{itemize}
\item Constant approximation: Steiner forest.
\item $O(\log n)$-approximation: Congestion minimization, Buy-at-Bulk network
design.

\item $m^{1/2}$-approximation: Edge Disjoint paths.\\
\end{itemize}

\section*{Questions}
\textbf{Q: Are these algorithms actually what is used in practice?}\\

\textbf{Answer:} Not exactly. Take the case of the randomized rounding that we covered: in
practice this technique may not give the best congestion due to the
Birthday Paradox. It is quite likely that there is some edge that gets
higher congestion than the average by a logarithmic factor. 
Hence, practical algorithms typically apply heuristics to try and
reduce the congestion. One technique that often works well is to sort
the demands based on the distance between the terminals (from closest
to farthest). We then go through the demands in order and try to
greedily reroute them.

{We remark that industrial networks often cost a huge amount of
money and so tweaking
a solution a little to save even a single percent can generate
meaningful cost savings. In addition, a lot of these real applications are
huge: cutting-edge computing power together with CPLEX are not even close to
being able to solve these problems exactly. Approximation really
is necessary. }

\cleardoublepage

\lecture{4}{20 July, 2009}{David Pritchard}{Lisa Zhang}{Hardness of the Edge-Disjoint Paths Problem}

%%%% body goes in here %%%%
\section{Overview}
The \emph{edge-disjoint paths} problem (EDP) is the combinatorial optimization problem with inputs
\begin{itemize}
\item a (directed or undirected) graph $G$ with $n$ nodes and $m$ edges
\item a list of $k$ pairs of (not necessarily distinct) nodes of $G$, denoted $(s_i, t_i)_{i=1}^k$
\end{itemize}
and whose output is
\begin{itemize}
\item a subset $X$ of $\{1, \dotsc, k\}$ representing a choice of paths to route
\item $s_i$-$t_i$ paths $\{P_i\}_{i \in X}$ so that the $P_i$ are pairwise edge-disjoint
\end{itemize}
and
\begin{itemize}
\item the objective is to maximize $|X|$.
\end{itemize}
In these notes, the main results are: a simple proof that for any $\epsilon>0$ it is \NP-hard to approximate the \emph{directed} edge-disjoint paths problem to ratio $m^{1/2-\epsilon}$ (\lref[Section]{sec:diredp}); and a more complex proof that for any $\epsilon>0$, if we could approximate the \emph{undirected} edge-disjoint paths problem to ratio $\log^{1/3-\epsilon}n$, then there would be randomized quasi-polynomial time algorithms for \NP\ (\lref[Section]{sec:undiredp}).

\section{Literature}
For directed EDP, there is a simple $\sqrt{m}$-approximation algorithm due to Kleinberg~\cite{Kleinberg1996} (see also Erlebach's survey \cite{Erlebach2006}), which nearly matches the $m^{1/2-\epsilon}$-hardness result we will present (which is due to Guruswami et al.~\cite{GuruswamiKRSY2003}). A $O(n^{2/3}\log^{2/3} n)$ approximation is also known~\cite{VaradarajanV2004}.

For undirected EDP, Kleinberg's simple algorithm ~\cite{Kleinberg1996} still gives a $\sqrt{m}$-approximation, but an improved $\sqrt{n}$-approximation was recently obtained by Chekuri et al.\ \cite{ChekuriKS2006}. The main technical ingredient in the proof we will present is the \emph{high girth argument}, which was used first in 2004 by Andrews~\cite{Andrews2004} and subsequently in a variety of papers~\cite{AndrewsCGKTZ2007,AndrewsZ2006,AndrewsZ2007,AndrewsZ2008,AndrewsZ2009,ChuzhoyK2006,GuruswamiT2006}, some of which have closed the approximability and inapproximability gaps of various problems up to constant factors. Many of these papers deal with \emph{congestion minimization}, where all demands must be routed and the objective is to minimize the maximum load on any edge. Focusing on undirected graphs, the papers most closely related to what we will show are:
\begin{itemize}
\item \cite{Andrews2004}, which introduced the high girth argument and gave a polylog-hardness for buy-at-bulk undirected network design. For this problem, all demands must be routed, and the cost minimized. The types of ``buy-at-bulk" edges used in the hardness construction were fixed-cost edges (which once bought, can be used to any capacity) and linear-cost edges (where you pay proportional to the capacity used). The paper reduced from a type of 2-prover interactive system, similar to PCPs.
\item \cite{AndrewsZ2006}, which gave the $\log^{1/3-\epsilon}n$-hardness proof we will describe in these notes. The paper reduced from Trevisan's inapproximability results~\cite{Trevisan2001} on the \emph{bounded degree independent set} problem. In turn, those resuls rely on advanced PCP technology~\cite{SamorodnitskyT2000}.
\item \cite{AndrewsCGKTZ2007} --- a paper which was the culmination of merging several lines of work --- which resulted in an improved $\log^{1/2-\epsilon}n$-hardness proof for undirected EDP. This paper uses the hardness of \emph{constraint satisfaction problems}, while the preliminary versions use the Raz verifier (parallel repetition) and directly-PCP based methods. This is so far the best inapproximability result known for undirected EDP, although it is very far from the best known approximation ratio of $\sqrt{n}$ \cite{ChekuriKS2006}.
\end{itemize}

In more detail, the table below summarizes some results in the literature (lower bounds assume $\NP \not\subset\mathsf{ZPTIME}(n^{\mathrm{polylog} n})$, and some constant factors are omitted). Stars ($\star$) denote results in which the high girth method is used.\\

\noindent{\footnotesize\begin{tabular}{lll}
{\bf Problem}&{\bf Upper bound}&{\bf Lower Bound}\\
Undirected EDP&$m^{1/2}$ \cite{Kleinberg1996}, $n^{1/2}$ \cite{ChekuriKS2006}&$\star\log^{1/3-\epsilon} m$ \cite{AndrewsZ2006}, $\star \log^{1/2-\epsilon} m$ \cite{AndrewsCGKTZ2007}\\
Directed EDP&$m^{1/2}$ \cite{Kleinberg1996}, $n^{2/3}\log^{2/3}n$ \cite{VaradarajanV2004} & $m^{1/2-\epsilon}$ \cite{GuruswamiKRSY2003}\\
Undirected Congestion Minimization & $\log m/ \log \log m$ \cite{RaghavanT1987} & $\star\log^{1-\epsilon} \log m$ \cite{AndrewsZ2007},
$\star\frac{\log \log m}{\log \log \log m}$ \cite{RaoZunpub} \\
Directed Congestion Minimization & $\log m/ \log \log m$ \cite{RaghavanT1987} & $\star\log^{1-\epsilon} m$ \cite{AndrewsZ2008}, $\star\frac{\log m}{\log \log m}$ \cite{ChuzhoyGKT2007}\\
Undirected Uniform Buy-at-Bulk & $\log m$ \cite{AwerbuchA1997,FakcharoenpholRT2004} & $\star\log^{1/4-\epsilon} m$ \cite{Andrews2004} \\
Undirected Nonuniform Buy-at-Bulk & $\log^5 m$ \cite{ChekuriHKS2006} & $\star\log^{1/2-\epsilon} m$ \cite{Andrews2004} \\
Undirected EDP with Congestion $c$ \\
(with some restrictions on $c$) & $n^{1/c}$ \cite{AzarR2006, BavejaS2000, KolliopoulosS2001} & $\star \log^{(1-\epsilon)/(c+1)} m$ \cite{AndrewsCGKTZ2007}\\
Directed EDP with Congestion $c$ \\
(with some restrictions on $c$) & $n^{1/c}$ \cite{AzarR2006, BavejaS2000, KolliopoulosS2001} & $\star n^{\Omega(1/c)}$ \cite{ChuzhoyGKT2007}
\end{tabular}
}

If the number $k$ of terminal pairs is fixed, the undirected EDP problem is exactly solvable in polynomial-time, using results from the theory of graph minors~\cite{RobertsonS1995}. (As we will see in \lref[Theorem]{theorem:fhw}, the directed case behaves differently.)

\section{Hardness of Directed EDP}\label{sec:diredp}
In this section we prove the following theorem.
\begin{theorem}[\cite{GuruswamiKRSY2003}]\label{theorem:foo}
For any $\epsilon>0$ it is \NP-hard to approximate the directed edge-disjoint paths problem \textsc{(Dir-EDP)} to within ratio $m^{1/2-\epsilon}$.
\end{theorem}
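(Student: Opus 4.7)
My plan is to reduce from a classical NP-hard routing problem to Dir-EDP with a large gap. The natural starting point is the Fortune--Hopcroft--Wyllie theorem that \textsc{2-Dir-Path}---decide whether a digraph admits edge-disjoint $s_1$-$t_1$ and $s_2$-$t_2$ paths for two prescribed pairs---is NP-complete. From a \textsc{2-Dir-Path} instance $I$, I would build a directed graph $G_I$ together with $k$ demand pairs such that (completeness) if $I$ is a YES instance all $k$ pairs can be simultaneously routed, and (soundness) if $I$ is a NO instance no routing services more than $k^{1/2+\delta}$ pairs. Choosing $m=|E(G_I)|$ to be $\tilde O(k^{1+\delta})$ or tighter, the resulting integrality gap of $k^{1/2-2\delta}=\Theta(m^{1/2-\epsilon})$ will, via the optimization-to-gap reduction used for \textsc{MAX-3SAT} in \lref[Lecture]{sec:diredp}'s gap framework, give the claimed hardness.

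The construction I would use is a grid-amplification from a base gadget. Starting from $I$, form a gadget with two source-sink interfaces ``horizontal'' (H) and ``vertical'' (V): the H-flow is forced through a subgraph containing the $s_1$-$t_1$ geometry of $I$ and the V-flow is forced through the $s_2$-$t_2$ geometry. By construction, both H and V can be routed edge-disjointly inside the gadget iff $I$ is YES; if $I$ is NO, at most one of H,V can pass. Next, arrange these gadgets in a $k\times k$ array with $k$ row demands (left-to-right) and $k$ column demands (top-to-bottom), each demand routed through the H- or V-interface of every gadget it meets. In the YES case, every cell passes both flows, so all $2k$ pairs are simultaneously routable. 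In the NO case, if row $i$ is routed then every cell in row $i$ is in H-mode, so any routed column $j$ would demand cell $(i,j)$ to be in V-mode---a contradiction. Hence in NO at most $\max(|R|,|C|)\le k$ pairs can be routed: a gap of $2$ for one level.

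The heart of the argument is to amplify this modest factor-$2$ gap into a polynomial one. I would do this by recursing: a level-$\ell+1$ gadget is a $k_\ell\times k_\ell$ grid whose cells are level-$\ell$ gadgets glued by their H/V interfaces. With $L=\Theta(1/\epsilon)$ levels of recursion and carefully chosen branching factors $k_\ell$, the number of demand pairs multiplies while the NO-case routing count is squeezed by the same H/V-disjointness pigeonhole applied at every level. One can tune $k_\ell$ so the final instance has $N=k^L$ demand pairs that are all routable in YES, at most $N^{1/2+O(\epsilon)}$ routable in NO, and $m=N^{1+O(\epsilon)}$ edges, producing the desired $m^{1/2-\epsilon}$ gap. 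Finally, an $m^{1/2-\epsilon}$-approximation algorithm for Dir-EDP, applied to $G_I$, would distinguish the YES from NO cases and hence decide \textsc{2-Dir-Path}, contradicting NP-hardness.

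The main obstacle is the soundness analysis for the recursive gadget. At each level one must argue that routed paths in the NO case cannot ``cheat'' by taking cross-level shortcuts or by packing many paths through clever sharing of sub-gadgets; this requires a charging/pigeonhole argument that tracks how many level-$\ell$ sub-gadgets can simultaneously carry flow and propagates the factor-$2$ loss consistently through all $L$ levels. Getting the edge count small enough that the gap lands at $m^{1/2-\epsilon}$ rather than some weaker $m^{c-\epsilon}$ also requires care, since naive grid products blow the edge count to $N^2$ and only yield an $m^{1/2}$-style bound if each demand pair traverses only $\tilde O(1)$ cells on average.
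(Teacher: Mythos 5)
Your starting point (the Fortune--Hopcroft--Wyllie theorem that \textsc{Dir-2EDP} is NP-complete, and a grid of copies of the base instance $G$) is exactly the right one, but the way you attach the demands loses almost all of the gap, and the recursive amplification you lean on to recover it is not established. In your $k\times k$ grid with $k$ row demands and $k$ column demands, two distinct row demands never conflict with each other: in the NO case one can simply route all $k$ rows in H-mode and drop the columns, so the soundness bound is $k$ out of $2k$ --- a gap of exactly $2$, not $k^{1/2+\delta}$ versus $k$. Everything then rests on the recursion, which as described does not type-check: a level-$1$ gadget has two interfaces (the $s$-$t$ and $s'$-$t'$ geometries of $G$), but your level-$2$ gadget has $2k$ terminal pairs, so it cannot be ``glued by its H/V interfaces'' into a cell of the next grid without bundling them, and once bundled the crucial soundness property ``at most one of H, V can pass'' no longer has a meaning that propagates. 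You flag this yourself as the main obstacle, but it is precisely the step that carries the entire theorem, and no charging or pigeonhole argument is given that survives cross-level sharing of sub-gadgets.

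The construction of Guruswami et al.\ that the paper presents avoids amplification entirely by arranging the $k$ demands so that \emph{every} pair of them conflicts. The graph $H$ is a single-level ``staircase'': the path for demand $i$ goes up through $i-1$ copies of $G$ (using the $s$-$t$ route, say) and then right through $k-i$ copies (using the $s'$-$t'$ route), so that for any $i<j$ there is a common copy of $G$ in which $P_i$ must use the $s'$-$t'$ interface and $P_j$ the $s$-$t$ interface. Hence if $G$ is a NO instance of \textsc{Dir-2EDP}, \emph{at most one} demand can be routed in $H$, while in the YES case all $k$ can --- a $k$-to-$1$ gap in one shot. Since $|E(H)|=O(k^2|E(G)|)$, taking $k=|E(G)|^{\alpha}$ with $\alpha\to\infty$ gives $k=m^{1/2-\epsilon}$, which is the claimed hardness. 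If you want to salvage your approach, the fix is not recursion but redesigning the single level so that all $\binom{k}{2}$ pairs of demands cross in some shared cell; that is exactly what the staircase does.
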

Although it is very common for inapproximability proofs in the literature to reduce one approximation problem to another, this proof has the cute property that it reduces an exact problem to an approximation problem. Phrased differently, the complete proof does not rely on any PCP-like technology. Specifically, our starting point is the following theorem.
\begin{theorem}[\cite{FortuneHW1980}]\label{theorem:fhw}
The following decision problem \textsc{(Dir-2EDP)} is \NP-hard: given a directed graph $G$ and four designated vertices $s, s', t, t'$ in the graph, determine whether there are two edge-disjoint directed paths, one from $s$ to $t$, and another from $s'$ to $t'$.
\end{theorem}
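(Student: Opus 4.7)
The plan is to reduce from 3SAT. Given a 3CNF formula $\phi$ with variables $x_1,\dots,x_n$ and clauses $C_1,\dots,C_m$, I will build a directed graph $G$ together with four designated vertices $s, s', t, t'$ such that $G$ admits an $s$-$t$ path and an $s'$-$t'$ path that are edge-disjoint if and only if $\phi$ is satisfiable. Since 3SAT is \NP-hard, this establishes \NP-hardness of Dir-2EDP.

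The construction chains together two parallel ``rails''. The $s$-$t$ rail passes through $n$ variable gadgets in series: the $i$-th gadget $V_i$ consists of two internally disjoint directed paths from an entry node $a_i$ to an exit node $b_i$, the ``true'' branch $P_i^T$ and the ``false'' branch $P_i^F$, with $s = a_1$, a directed edge from each $b_i$ to $a_{i+1}$, and $b_n = t$. Thus any $s$-$t$ path is forced to pick a $\pm$-branch at each variable, encoding a truth assignment. Analogously, the $s'$-$t'$ rail chains $m$ clause gadgets in series; the $j$-th gadget $C_j$ consists of three internally disjoint directed paths from $c_j$ to $d_j$, one per literal of $C_j$, so any $s'$-$t'$ path must pick one literal branch per clause.

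The two rails are coupled via shared \emph{literal edges}. For every occurrence of a literal $\ell \in \{x_i, \bar x_i\}$ in some clause $C_j$, introduce a single directed edge $e_{i,j}^{\ell}$ that belongs simultaneously to (a) the literal-$\ell$ branch of clause gadget $C_j$, and (b) the \emph{opposite-parity} branch of variable gadget $V_i$ (i.e.\ it sits on $P_i^F$ if $\ell = x_i$, and on $P_i^T$ if $\ell = \bar x_i$). The edges inside each branch $P_i^T, P_i^F$ are linked in series through auxiliary connector nodes that also splice into the appropriate clause branches. The point of the opposite-parity rule is that if the $s$-$t$ path chooses $P_i^T$ (assigning $x_i = \textrm{true}$), then every edge $e_{i,j}^{x_i}$ remains free for the clause rail while every edge $e_{i,j}^{\bar x_i}$ is consumed, exactly matching which literals are satisfied by the assignment.

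Completeness is straightforward: given a satisfying assignment, route the variable rail accordingly, and in each clause gadget pick the branch of any satisfied literal, whose shared edge is unused. For soundness, each $s$-$t$ path commits to a single branch per variable gadget (defining an assignment), and each $s'$-$t'$ path picks a literal branch per clause that can only be a currently-free, i.e.\ satisfied, literal. The main delicate point, and the place where directedness is essential, is ruling out ``cross-rail'' walks that exploit the shared endpoints of the $e_{i,j}^{\ell}$ to jump between rails; the connector edges inside each gadget must be oriented so that the only way to reach $t$ from $s$ (respectively $t'$ from $s'$) is to stay within its own rail. Once this is verified by a case analysis at each shared-edge endpoint, the equivalence between satisfiability of $\phi$ and existence of the two edge-disjoint paths follows, completing the reduction.
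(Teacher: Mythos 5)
A point of order first: the lecture notes do not prove this statement at all --- it is quoted as a known result of Fortune, Hopcroft and Wyllie \cite{FortuneHW1980} and used purely as the black-box starting point for the $m^{1/2-\epsilon}$-hardness reduction for \textsc{Dir-EDP}. So you are supplying a proof where the paper supplies only a citation. Your high-level architecture (reduce from 3SAT; a variable rail of serial two-branch gadgets carrying the $s$-$t$ path, a clause rail of serial three-branch gadgets carrying the $s'$-$t'$ path, coupled through shared resources with the opposite-parity convention) is indeed the shape of the Fortune--Hopcroft--Wyllie argument, and your completeness direction is fine.

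The gap is in soundness, exactly at the point you defer to ``a case analysis at each shared-edge endpoint.'' Sharing a literal edge $e_{i,j}^{\ell}$ between a variable branch and a clause branch means its head vertex has two outgoing continuations, one on each rail, and both already point ``forward'' (toward $t$ and toward $t'$ respectively), so no choice of orientation of the private connector edges can remove the cross-over. Concretely, suppose literal $x_i$ occurs in clauses $C_j$ and $C_{j''}$ with $j'' > j+1$, and the $s$-$t$ path chooses $P_i^T$. Then the $s'$-$t'$ path can enter $C_j$'s $x_i$-branch, traverse $e_{i,j}^{x_i}$, continue forward along the now-unused $P_i^F$ to the tail of $e_{i,j''}^{x_i}$, traverse that edge, and exit into $C_{j''}$'s branch toward $d_{j''}$ and onward to $t'$. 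This walk respects every edge direction, is edge-disjoint from the $s$-$t$ path, and never visits clause gadgets $C_{j+1},\dots,C_{j''-1}$, so an unsatisfiable formula can still admit two edge-disjoint paths and the reduction is unsound as described. This is precisely why Fortune, Hopcroft and Wyllie do not couple the rails by literally sharing edges, but instead insert a dedicated ``switch'' gadget at each crossing whose internal vertex structure forces any pair of edge-disjoint paths through it to use it in one of the two intended ways; designing and verifying that gadget is the substance of the theorem and is what is missing from your sketch.
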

(Note, this immediately shows that it is hard to \NP-hard to approximate \textsc{(Dir-EDP)} to a factor better than 2.)

The key to proving \lref[Theorem]{theorem:foo} is a construction which maps $(G, s, s', t, t')$ to an instance $(H, (s_i, t_i)_{i=1}^k)$ of \textsc{Dir-EDP} where $k$ is a parameter we will tune later.
The construction is illustrated in \lref[Figure]{fig:erlebach}.
\begin{figure}
\begin{center}\parbox[c]{4.5cm}{\includegraphics[width=4cm]{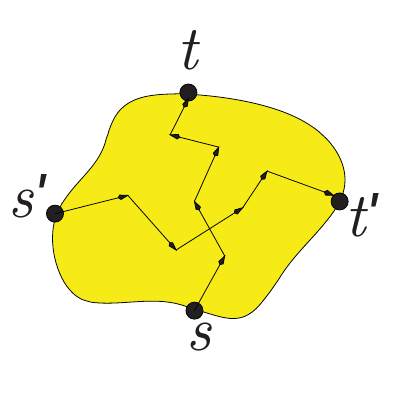}} \parbox[c]{10.5cm}{\includegraphics[width=10cm]{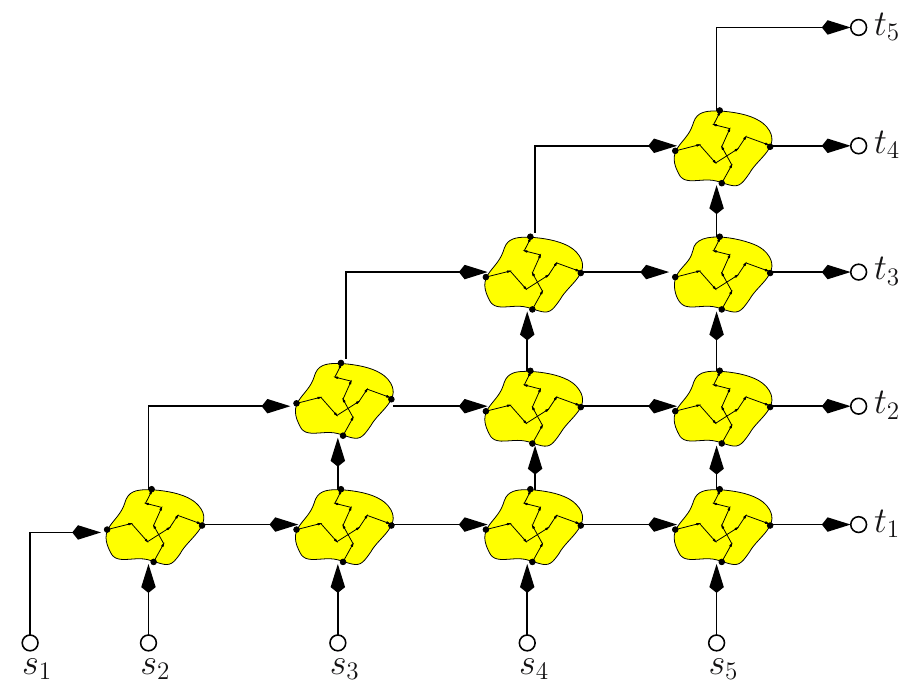}}
\end{center}
\caption[\textsc{Dir-2EDP} and construction of $H$]{Left: the graph $G$ in an instance of \textsc{Dir-2EDP}. Right: The construction of $H$ for $k=5$. Each small yellow object is a copy of $G$, and each other line is a directed edge. This illustration is adapted from Erlebach~\cite{Erlebach2006}.}\label{fig:erlebach}
\end{figure}
The two important properties of this construction are the following:
\begin{itemize}
\item[(a)] If $G$ admits edge-disjoint $s$-$t$ and $s'$-$t'$ paths (say $P$ and $P'$), then $H$ has a solution of value $k$ (i.e.\ all pairs $s_i$-$t_i$ for $1 \leq i \leq k$ can be simultaneously linked by edge-disjoint paths). To see this, we utilize the copies of $P$ and $P'$ within each copy of $G$; then it's easy to see there are mutually disjoint paths $s_i$-$t_i$ paths (the path leaving $s_i$ goes up through $i-1$ copies of $P$, then right through $k-i$ copies of $P'$, to $t_i$).
\item[(b)] If $G$ does not admit edge-disjoint $s$-$t$ and $s'$-$t'$ paths, then there is no solution for $H$ with value greater than 1. To see this, suppose for the sake of contradiction that there is an $s_i$-$t_i$ path $P_i$ and a $s_j$-$t_j$ path $P_j$ in $H$ such that $P_i$, $P_j$ are edge-disjoint. Without loss of generality $i<j$. Then a topological argument shows that there must be some copy of $H$ such that $P_i$ uses the copies of $s'$ and $t'$ and $P_j$ uses the copies of $s$ and $t$. This contradicts our assumption about the \textsc{Dir-2EDP} instance.
\end{itemize}
Facts (a) and (b) show that any algorithm that has approximation ratio better than $k/1$ on the \textsc{Dir-EDP} instance $H$ also solves the \textsc{Dir-2EDP} instance.

Without loss of generality we assume $G$ is (weakly) connected, then the encoding size of the \textsc{Dir-2EDP} instance is proportional to $|E(G)|$ and the encoding size of the \textsc{Dir-EDP} instance is $|E(H)| = O(k^2|E(G)|)$. In order to conclude that it is \NP-hard to approximate the \textsc{Dir-EDP} instance to a factor better than $k$, we need $|E(H)|$ to be polynomial in $|E(G)|$. Thus we may take $k = |E(G)|^\alpha$ for any constant $\alpha$. Going back to the analysis, we get $k = |E(H)|^\frac{\alpha}{2\alpha+1}$; hence by taking $\alpha \to \infty$, we get the desired result (that it is \NP-hard to approximate \textsc{Dir-EDP} to a factor $k = m^{1/2-\epsilon}$).

\section{Hardness of Undirected EDP}\label{sec:undiredp}
In this section we sketch the proof of the following theorem.
\begin{theorem}[\cite{AndrewsZ2006}]\label{theorem:bar}
For any $\epsilon>0$, if we can approximate the directed edge-disjoint paths problem \textsc{(Undir-EDP)} to within ratio $O(\log^{1/3-\epsilon} m)$, then every problem in \NP\ has a probabilistic always-correct algorithm with expected running time $\mathrm{exp}(\mathrm{polylog}(n))$, i.e.\ $\NP \subset \mathsf{ZPTIME}(\mathrm{exp}(\mathrm{polylog}(n)))$.
\end{theorem}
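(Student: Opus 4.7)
The plan is to reduce from Trevisan's hardness of approximation for bounded-degree independent set. Start from a constant-degree graph $\Gamma$ on $N$ vertices where it is NP-hard to distinguish $\alpha(\Gamma) \ge N/\rho$ from $\alpha(\Gamma) \le N/\rho^{1+\delta}$, where Trevisan's result allows $\rho$ to grow super-constantly (and $\delta$ to shrink) as $N$ grows. This sub-constant starting gap is what ultimately lets us push the $\log^{1/3-\epsilon}$ exponent all the way, at the cost of allowing the reduction to blow up to quasi-polynomial size (which is why the conclusion is $\NP \subset \mathsf{ZPTIME}(\exp(\mathrm{polylog}(n)))$ rather than $\mathsf{P}=\NP$).

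The central object is a \emph{high-girth gadget} $H$. For every vertex $v$ of $\Gamma$ introduce a demand pair $(s_v,t_v)$, and embed these pairs inside $H$ so that each pair has a distinguished ``canonical'' path of some length $\ell$, with the property that the canonical paths of $u$ and $v$ share an edge if and only if $(u,v)\in E(\Gamma)$. Take $H$ to be a graph of girth $g \gg \ell$ (using an explicit algebraic construction such as an LPS/Ramanujan graph, where one can achieve $|V(H)| \approx \Delta^{g/2}$). The high girth yields the key dichotomy: any $s_v$-$t_v$ path in $H$ is either the canonical one, or must have length at least $g-\ell$, i.e., is a long detour. Consequently, any feasible routing decomposes its demands into a ``canonical'' part, which must correspond to an independent set of $\Gamma$, and a ``detour'' part of size at most $|E(H)|/(g-\ell)$.

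This sets up the parameter tradeoff. In the YES case, an independent set of size $N/\rho$ in $\Gamma$ gives a routing of $N/\rho$ canonical pairs. In the NO case, any routing yields at most $N/\rho^{1+\delta} + m/g$ pairs, where $m = |E(H)|$. Choosing $\ell, g$, and the girth of $H$ so that the two terms balance (and the instance size stays within the required quasi-polynomial bound), optimizing $\rho$ against $\delta$ using the parameters Trevisan provides, and possibly iterating the construction a polylogarithmic number of times to amplify, yields a final gap of $\log^{1/3-\epsilon} m$ between the YES and NO cases. Any $\log^{1/3-\epsilon} m$-approximation for Undir-EDP would therefore decide the starting independent-set gap problem in randomized quasi-polynomial time, yielding the claimed inclusion $\NP \subset \mathsf{ZPTIME}(\exp(\mathrm{polylog}(n)))$.

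The main obstacle will be the NO-case analysis, namely the rigorous ``high girth argument'' that bounds the number of demands routable off-canonical. This requires a careful charging scheme: one must show that each non-canonical path consumes $\Omega(g)$ edges that no other path can reuse (exploiting edge-disjointness plus girth), and that the canonical part of any routing really does correspond to an independent set (no clever re-pairing or gadget abuse can cheat this). The delicate accounting among $\rho, \delta, \ell, g$ and the recursion depth, subject to keeping the total blowup sub-$\exp(\mathrm{polylog}\, n)$, is what pins the exponent at $1/3$; any slack in the charging argument would degrade this exponent.
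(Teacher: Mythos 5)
Your overall architecture --- reduce from hardness of bounded-degree independent set, represent each vertex $v$ of $\Gamma$ by a canonical $s_v$-$t_v$ path whose pairwise edge-intersections mirror $E(\Gamma)$, and use a girth-type argument to control non-canonical detours --- is the same as the paper's. But there is a genuine gap at the central step: you propose to realize the host graph as an \emph{explicit, deterministic} high-girth graph (LPS/Ramanujan) with girth $g \gg \ell$. This is impossible. The intersection pattern of the canonical paths is forced to equal $E(\Gamma)$, and any cycle $v_1v_2\cdots v_kv_1$ in $\Gamma$ yields a closed walk of length at most $k\ell$ inside $P_{v_1}\cup\cdots\cup P_{v_k}$ (walk along $P_{v_1}$ between its shared edges with $P_{v_2}$ and $P_{v_k}$, and so on), hence a cycle of length $O(k\ell)$. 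The hard instances $\Gamma$ cannot be assumed triangle-free and in any case have girth $O(\log N)$, so the union of canonical paths has girth between $O(\ell)$ and $O(\ell\log N)$, never $g\gg\ell$. Your detour bound $m/(g-\ell)$ then degrades to order $N$, which swamps the YES-case value $N/\rho$ with $\rho$ super-constant, and the gap is destroyed.

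The paper's fix is exactly where your proposal is silent: randomization \emph{inside} the reduction. Each intersection is first replaced by $c$ consecutive intersections, and then the whole graph is lifted by $f_x$ (each vertex becomes $x$ copies, each edge a uniformly random bipartite matching, independently per edge). The lift does not produce high girth either; it only makes the expected number of short cycles small (first-moment method), so the analysis must carry an extra class of routed paths --- non-canonical paths starting near a surviving short cycle --- as an additive error term. Moreover, after lifting, copies of adjacent canonical paths intersect only with probability $1/x$ per copy pair, so ``the canonical part of the routing is an independent set'' holds only with probability close to $1$, established via the $c$-fold amplification, a threshold parameter $a$, and union bounds; this randomness, together with the randomness already present in the super-constant-degree IS hardness (Andrews--Zhang's extension of Trevisan, not Trevisan's constant-degree theorem itself), is why the conclusion lands in $\mathsf{ZPTIME}(\exp(\mathrm{polylog}(n)))$. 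Your sketch attributes all randomness to the starting point and treats high girth as purchasable off the shelf; that is precisely the step that fails, and repairing it requires the randomized lift and the resulting three-way probabilistic error accounting --- not iterating a deterministic construction polylogarithmically many times.
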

We fully describe the construction of the proof and give intuition for the analysis, but skip some of the detailed parts and precise setting of parameters. The construction creates a simple graph (i.e.\ one with no parallel edges) so the theorem also holds with $\log m$ replaced by $\log n$ since these are the same up to a factor of 2. The proof shows more precisely that $\NP \subset \mathsf{coRPTIME}(\mathrm{exp}(\mathrm{polylog}(n)))$, i.e.\ it gives a quasi-polynomial size, randomized reduction with one-sided error that is right at least (say) 2/3 of the time; then standard arguments\footnote{We insert $\mathsf{Q}$ into standard complexity class names to denote quasi-polynomial time. Suppose $\NP \subset \mathsf{coRQP}$. Then there is a $f(n)$-time algorithm for SAT with $f$ quasi-polynomial. This also implies $\NQP \subset \mathsf{coRQP}$ since every $\NQP$ language with quasi-polynomial running time $g(n)$ is equivalent (by the Cook-Levin construction) to satisfiability of a formula of size $g(n)$, and it can be decided in time $f(g(n))$ which is quasi-polynomial. The definition of $\RQP$ immediately implies $\mathsf{RQP} \subset \mathsf{NQP}$ hence $\mathsf{RQP} \subset \mathsf{coRQP}$. Taking complements we deduce $\mathsf{RQP} = \mathsf{coRQP}$ and it is easy to show that $\mathsf{RQP} \cap \mathsf{coRQP} = \mathsf{ZPQP}$, hence $\NP \subset \mathsf{coRQP} = \mathsf{ZPQP}$. Alternatively, see Lemma 5.8 in \cite{EngebretsenH2003} for a more efficient construction.} allow us to move to $\mathsf{ZPTIME}$. Here is the proof overview.
\begin{itemize}
\item The starting point is the inapproximability of the \emph{independent set} problem (\textsc{IS}) in bounded-degree graphs: find a set of mutually non-adjacent vertices (an \emph{independent set}) with as large cardinality as possible. We denote the degree upper bound by $\Delta$.
\item As usual, our goal is to find a transformation from \textsc{IS} instances to \textsc{Undir-EDP} instances which preserves the ``\NP-hard-to-distinguish gap" in the objective function.
\item We will create a new graph $G$ in the following way. Roughly we ``define a path" $P_i$ for each vertex $v_i$ of the \textsc{IS} instance so that $P_i \cap P_j \neq \varnothing$ iff $v_i, v_j$ are adjacent. (It is easy to see this is possible, with the length of $P_i$ proportional to the degree of $v_i$.) Then we define $G$ to be the union of all $P_i$. (See \lref[Figure]{fig:G}.)
\item To get some intuition for the rest of the proof, define the terminals $s_i$, $t_i$ of the \textsc{Undir-EDP} instance to be the endpoints of $P_i$. Then it is \emph{almost} true that the \textsc{Undir-EDP} instance is isomorphic (in terms of feasible solutions) to the \textsc{IS} instance. The significant problem is that $G$ necessarily also contains $s_i$-$t_i$ paths other than $P_i$, which may be used for routing. (Such paths are obtained by using a combination of edges taken from different $P_j$'s; see \lref[Figure]{fig:G}.)
\item To get around this problem, we transform $G$ into a different graph $H$ defined by two parameters $x, c$. Each intersection of two paths $P_i, P_j$ is replaced by $c$ consecutive intersections; and we replace each $P_i$ with $x$ images $\{P_{i, \alpha}\}_{\alpha=1}^x$.
    The construction of $H$ has a lot of independent randomness, two consequences of which are that (i) when $v_i, v_j$ are adjacent, we can lower-bound the probability that $P_{i, \alpha} \cap P_{j, \beta} \neq \varnothing$ and (ii) $H$ has few short cycles.
    We call each $P_{i, \alpha}$ a \emph{canonical path}; for each canonical path its endpoints define a terminal pair for the new \textsc{Undir-EDP} instance.
\item The optimum of the \textsc{Undir-EDP} solution is at least $x$ times the optimum of the \textsc{IS} instance. To get our hardness-of-approximation result, we also need that when the \textsc{Undir-EDP} optimum is ``large", so is the \textsc{IS} optimum. This is done via a map from \textsc{Undir-EDP} solutions $R$ on $H$ to \textsc{IS} solutions $S$ on $G$. The map is parameterized by a number $a \le x$. We (1) throw away all non-canonical paths in $R$ and (2) take $v_i$ in $S$ iff at least $a$ out of the $x$ paths $\{P_{i, \alpha}\}_{\alpha=1}^x$ are routed by $Y$. The final analysis uses the fact that the canonical paths have length $O(c \Delta)$ while most non-canonical paths are long; the latter depends on the fact that $H$ has few short cycles.
\end{itemize}
\begin{figure}
\begin{center}
\includegraphics{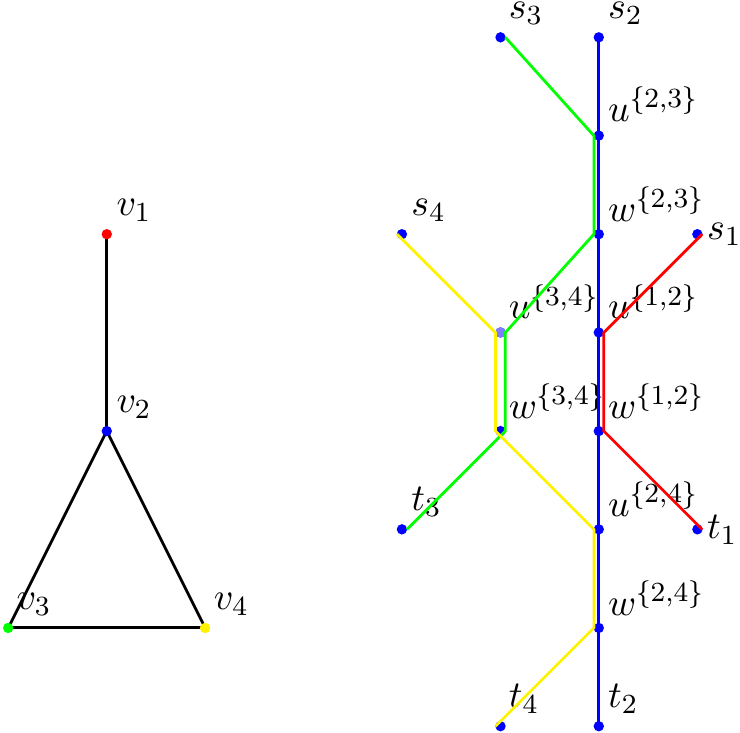}
\end{center}
\caption[\textsc{Undir-EDP} instance]{An illustration of how the independent set instance, $G_0$ (left), yields a \textsc{Undir-EDP} instance on graph $G$ (right). The
colour of $v_i$ in $G_0$ corresponds to the colour of canonical path $P_i$ in $G$. Note that in this example, even though $\{v_1, v_2\}$
is not an independent set, there are edge-disjoint $s_1$-$t_1$ and $s_2$-$t_2$ paths ($P_1$, and a non-canonical $s_2$-$t_2$
path).}\label{fig:G}
\end{figure}

\subsection{Hardness of Bounded-Degree Independent Set}
Trevisan~\cite{Trevisan2001} showed that for any constant $\Delta$, it is \NP-hard to approximate the independent set problem within ratio $\Delta/2^{\sqrt{\log \Delta}}$ on graphs with degrees bounded by $\Delta$. For our purposes, we need a version of this that works for super-constant $\Delta$. By extending the framework in Trevisan~\cite{Trevisan2001}, Andrews and Zhang~\cite{AndrewsZ2006} proved the following:
\begin{theorem}\label{theorem:is}Consider the family of graphs with upper bound $\Delta = \log^b n$ on degree, where $n$ is the number of nodes and $b$ is a constant. If there is a $(\log^{b-\epsilon}n)$-approximation algorithm for \textsc{IS} on these graphs for any $\epsilon>0$, then $\NP \subset \mathsf{coRPTIME}(n^{O(\log \log n)}).$
\end{theorem}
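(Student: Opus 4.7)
The strategy is to push Trevisan's PCP-to-independent-set reduction into the super-constant degree regime $\Delta=\log^b n$, paying for this by a quasi-polynomial blow-up in the size of the reduction; this overhead is precisely what turns the conclusion from $\mathsf{P}=\mathsf{NP}$ into $\NP \subset \mathsf{coRPTIME}(n^{O(\log\log n)})$.

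I would begin by recalling the FGLSS-style graph underlying Trevisan's theorem. Given a PCP verifier for SAT with randomness complexity $r$, query complexity $q$, alphabet $\Sigma$, and soundness $s$, one builds a graph on $2^{r}|\Sigma|^{q}$ nodes by creating one vertex for every ``accepting configuration'', i.e.\ a pair consisting of a random string and an assignment to the queried locations that causes the verifier to accept; two such configurations are joined by an edge iff they assign conflicting values to some shared proof position. In the YES case there is an independent set of size $2^{r}$; in the NO case the independence number is at most $s\cdot 2^{r}$, so the YES/NO gap in the independent set objective is exactly $1/s$. A short calculation bounds the maximum degree by roughly $q\cdot|\Sigma|^{q}$, provided the verifier is \emph{regular} in the sense that each proof position is queried on a comparable number of random strings (this can be arranged by composing the verifier with an averaging sampler or expander-neighbourhood preprocessing step).

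Next, I would instantiate the parameters. Starting from a constant-query PCP for SAT and applying the Samorodnitsky--Trevisan long-code machinery cited in the excerpt, one obtains, for any prescribed $b$ and $\epsilon$, a PCP with randomness $r = O(\log n \cdot \log\log n)$, query complexity $q = O(\log\log n)$, constant $|\Sigma|$, and soundness $s = 2^{-q}\cdot\mathrm{poly}(q)$ tuned so that $1/s = \log^{b+o(1)} n$. Plugging this into the FGLSS construction yields a graph on $N = n^{O(\log\log n)}$ vertices with maximum degree $\Delta = \log^{b} n$ and independent set gap $\log^{b+o(1)} n \ge \log^{b-\epsilon} N$.

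Finally, I would close the loop: a hypothetical polynomial-time $(\log^{b-\epsilon} N)$-approximation on degree-$\Delta$ graphs, applied to this reduced instance, would distinguish the YES and NO cases of the original SAT instance in time $\mathrm{poly}(N) = n^{O(\log\log n)}$. The only randomness in the construction is used to build the sampler/expander that regularises the verifier, and it produces only one-sided error, so the overall algorithm lies in $\mathsf{coRPTIME}(n^{O(\log\log n)})$. The main technical obstacle is producing a PCP that simultaneously achieves quasi-polynomial randomness, $O(\log\log n)$ queries over a constant alphabet, soundness $\log^{-b} n$, \emph{and} a regularity condition on the query distribution so the FGLSS graph has degree bounded by $\Delta$ rather than something drastically larger; the first three are delivered by the Samorodnitsky--Trevisan test, while the fourth requires the sampler/expander preprocessing mentioned above and is where the bulk of the bookkeeping lives.
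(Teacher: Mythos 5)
The paper does not actually prove this theorem: it is imported verbatim as a black box, attributed to Andrews and Zhang's extension \cite{AndrewsZ2006} of Trevisan's framework \cite{Trevisan2001}, and the lecture immediately moves on to using it. So there is no in-paper proof to compare yours against; what can be said is whether your sketch is a faithful reconstruction of the argument in the cited sources, and in outline it is. The route you describe --- FGLSS graph built from a low-soundness, few-query PCP, with the quasi-polynomial size $N = n^{O(\log\log n)}$ coming from the $O(\log n\cdot\log\log n)$ randomness of the parallel-repeated outer verifier, the gap $1/s$ versus degree $\Delta$ trade-off coming from the Samorodnitsky--Trevisan test, and the one-sided error coming from a randomized disperser/regularization step --- is indeed how Trevisan's constant-degree result works and how Andrews--Zhang push it to $\Delta = \log^b n$.

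The one place where your accounting is too quick is the degree bound ``roughly $q\cdot|\Sigma|^q$ provided the verifier is regular.'' Two distinct quantities control the FGLSS degree and you conflate them. First, all accepting configurations of a \emph{single} random string pairwise conflict, so they form a clique; the degree is therefore at least $2^f$ where $f$ is the free-bit complexity, and for the gap $1/s$ to survive division by the degree you need $2^f \ll 1/s$ --- this is precisely why the Samorodnitsky--Trevisan test (amortized free bits tending to $0$) is required and not an arbitrary $q$-query PCP; your parameter line ``soundness $s = 2^{-q}\cdot\mathrm{poly}(q)$'' with degree $q|\Sigma|^q \approx 2^q$ would, taken literally, make the degree as large as the gap and kill the reduction. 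Second, conflicts \emph{across} random strings contribute $\sum_{i} d_i \cdot 2^f$ where $d_i$ is the number of random strings querying position $i$; making every $d_i$ small is not a cosmetic regularity assumption but the main construction (replicating proof positions and wiring the copies through a disperser so that soundness is preserved), and it is also the sole source of the randomness and of the one-sided error in the conclusion. You do gesture at both points (by citing Samorodnitsky--Trevisan and the ``sampler/expander preprocessing''), so I would call this a correctly identified but undischarged obligation rather than a wrong step; a complete write-up would have to carry out the free-bit and query-degree bookkeeping explicitly.
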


\subsection{\texorpdfstring{The Graphs $G$ and $H$}{The Graphs G and H}}
First we give a formal description of the graph $G$ we sketched earlier. Let $G_0$ denote the \textsc{IS} instance, without loss of generality $G_0$ is connected.
Each edge $e = v_iv_j$ of $G_0$ yields two vertices $u^{\{i, j\}}, w^{\{i, j\}} \in V(G)$ and each vertex $v_i$ of $G_0$ yields two vertices $s_i, t_i$; these are all the vertices of $G$. Let the neighbours of $v_i$ in $G_0$ in any order be $v_p, v_q, \dotsc, v_r$, then we define the path $P_i := (s_i, u^{\{i, p\}}, w^{\{i, p\}}, u^{\{i, q\}}, w^{\{i, q\}}, \dotsc, u^{\{i, r\}}, w^{\{i, r\}}, t_i)$. More precisely, for each adjacent pair of vertices in this list we define an edge of $G$; this constitutes all of the edges of $G$. Every edge of the form $u^{\{i, j\}}w^{\{i, j\}}$ appears in both $P_i$ and $P_j$, while every other edge of $G$ appears in exactly one $P_i$. The number of vertices and edges of $G$ is $O(|E(G_0)|)$ and the number $k$ of terminal pairs is $|V(G_0)|$.

There are two additional ideas needed to define $H$, one whose effect is to randomly replace $P_i$ by $x$ images $\{P_{i, \alpha}\}_{\alpha=1}^x$ and another whose effect is to increase the probability that two paths $P_{i, \alpha}$, $P_{j, \beta}$ intersect when $v_iv_j \in E(G_0)$.

Consider the following probabilistic operation $f_x$ on graphs: replace every vertex $v$ by $x$ ``copies" $\{v_\alpha\}_{\alpha=1}^x$, and replace every edge $vv'$ with a random bipartite matching of $\{v_\alpha\}_{\alpha=1}^x$ to $\{v'_\alpha\}_{\alpha=1}^x$, where these matchings are chosen independently for all input edges $vv'$. Thus $f_x$ multiplies the total number of vertices and edges by $x$. Note that if $vv'$ is an edge of some graph $K$ and $1 \leq \alpha,\beta \leq x$ then the probability that $u_{\alpha}v_{\beta} \in f_x(K)$ is exactly $1/x$; we will later use this fact, as well as the independence of the different random matchings, to show that $f_x(K)$ behaves like a random graph in terms of short cycles. We define the image of terminal pairs under $f_x$ as follows. Define $P_{i, \alpha}$ as the unique path in $f_x(G)$ obtained by starting at $s_{i,\alpha}$ (which denotes ${(s_i)}_\alpha$) and following the images of edges of $P_i$. $P_{i, \alpha}$ does not necessarily end at $t_{i,\alpha}$, rather it ends at $t_{i,\beta} =: t'_{i, \alpha}$ for some uniformly random $\beta$. The terminal pairs of $f_x(G)$ are all pairs $(s_{i, \alpha}, t'_{i, \alpha})$. We call the $P_{i, \alpha}$ \emph{canonical paths}.

At this point it is straightforward to compute the following: if $v_iv_j \in E(G_0)$ and $\alpha, \beta$ are fixed, the probability that the paths $P_{i, \alpha}$, $P_{j, \beta}$ intersect in $f_x(G)$ is exactly $1/x$. More generally, for subsets $A, B$ of $\{1, \dotsc, x\}$, the probability $\Pr[\{P_{i, \alpha}\}_{\alpha \in A} \cup \{P_{j, \beta}\}_{\beta \in B}$ are mutually edge-disjoint in $f_x(G)]$ can be expressed as some function $\delta(x, |A|, |B|)$\footnote{Explicitly, $\delta(x, |A|, |B|) = (x-|A|)!(x-|B|)!/(x-|A|-|B|)!x!$.}. We would like to decrease this (i.e.\ increase the probability some $P_{i, \alpha}$ intersects some $P_{j, \beta}$), and to do so, we consider a graph $G'$ obtained similarly to $G$ except, for $v_iv_j \in E(G_0)$, we force $P_i$ and $P_j$ to intersect $c$ times. We give an informal but precise definition since the formal definition is lengthy. To construct $G'$ from $G$, we perform the following for all edges $v_iv_j \in E(G_0)$: replace the intersection edge $u^{\{i, j\}}w^{\{i, j\}}$ with the gadget pictured in \lref[Figure]{fig:Gprime}, and simultaneously redefine $P_i, P_j$ to follow the indicated paths. Not only will this cause the new $P_i$ and $P_j$ to intersect $c$ times, but the images of these intersections under $f_x$ will be \emph{independent} in the sense that, for all subsets $A, B$ of $\{1, \dotsc, x\}$, the probability $\Pr[\{P_{i, \alpha}\}_{\alpha \in A} \cup \{P_{j, \beta}\}_{\beta \in B}$ are mutually edge-disjoint in $f_x(G')]$ decreases to $\delta^c(x, |A|, |B|)$.

Finally, $H$ is defined to be $f_x(G')$, with canonical paths $P_{i, \alpha}$ and terminal pairs defined as for $f_x(G)$.

\begin{figure}
\begin{center}
\includegraphics{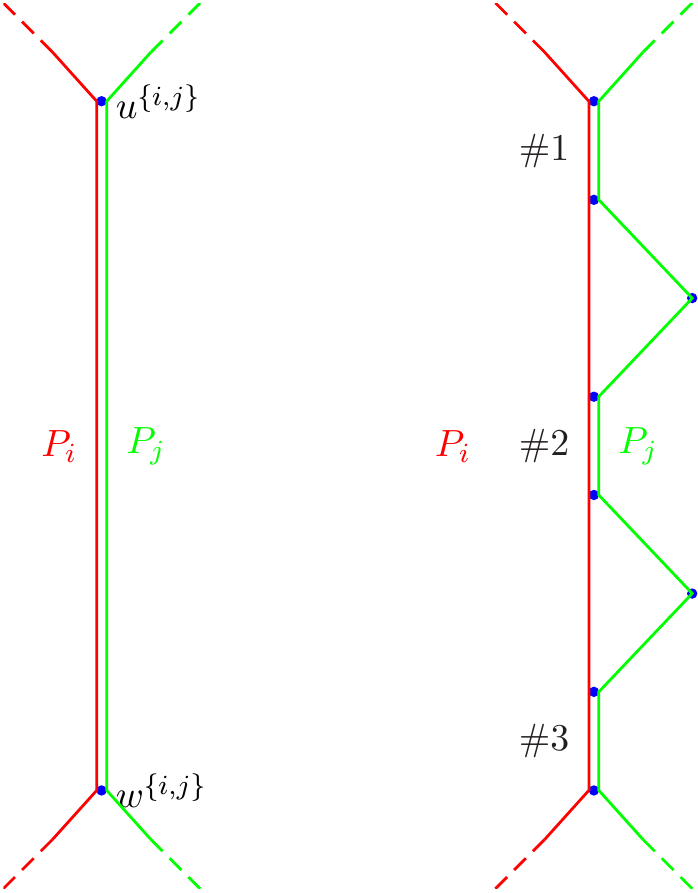}
\end{center}
% \begin{center}
% \begin{pspicture*}(-2,-3)(6,6)
% \psset{xunit=0.5cm,yunit=0.5cm,algebraic=true,dotstyle=*,dotsize=3pt 0,linewidth=0.8pt,arrowsize=3pt 2,arrowinset=0.25}
% \psdots[linecolor=blue](-2,10)
% \rput[l](-1.71,10){{$u^{\{i,j\}}$}}
% \psdots[linecolor=blue](-2,-4)
% \rput[l](-1.71,-4){{$w^{\{i,j\}}$}}
% \psdots[linecolor=blue](8,10)
% \psdots[linecolor=blue](8,8)
% \psdots[linecolor=blue](8,4)
% \psdots[linecolor=blue](8,2)
% \psdots[linecolor=blue](8,-2)
% \psdots[linecolor=blue](8,-4)
% \psdots[linecolor=blue](10,0)
% \psdots[linecolor=blue](10,6)
% \psline[linecolor=red](-3,11)(-2.1,10)(-2.1,-4)(-3,-5)
% \rput[r](-2.5,3){\red{$P_i$}}
% \psline[linecolor=green](-1,11)(-1.9,10)(-1.9,-4)(-1,-5)
% \rput[l](-1.5,3){\green{$P_j$}}
% \psline[linecolor=red](7,11)(7.9,10)(7.9,-4)(7,-5)
% \psline[linecolor=red,linestyle=dashed](6,12)(7,11)
% \psline[linecolor=red,linestyle=dashed](7,-5)(6,-6)
% \psline[linecolor=red,linestyle=dashed](-4,12)(-3,11)
% \psline[linecolor=red,linestyle=dashed](-3,-5)(-4,-6)
% \psline[linecolor=green,linestyle=dashed](10,12)(9,11)
% \psline[linecolor=green,linestyle=dashed](9,-5)(10,-6)
% \psline[linecolor=green,linestyle=dashed](0,12)(-1,11)
% \psline[linecolor=green,linestyle=dashed](-1,-5)(0,-6)
% \rput[r](5.5,3){\red{$P_i$}}
% \psline[linecolor=green](9,11)(8.1,10)(8.1,8)(10,6)(8.1,4)(8.1,2)(10,0)(8.1,-2)(8.1,-4)(9,-5)
% \rput[l](8.5,3){\green{$P_j$}}
% \rput[r](7.5,9){\#1}
% \rput[r](7.5,3){\#2}
% \rput[r](7.5,-3){\#3}
% \end{pspicture*}
% \end{center}
\caption[Transforming $G$ to $G'$]{Illustrating, for $c=3$, the operation used to transform $G$ (left) into $G'$ (right).}\label{fig:Gprime}
\end{figure}

\subsection{Small Cycles}
The graph $H = f_x(G')$ we defined is not quite a ``random graph" in the usual (Erd\H{o}s-Renyi) sense, but it has enough randomness that it has a typical property of random graphs, namely that the number of small cycles can be bounded. This is done using the first moment method (Markov's inequality), analogous to the 1963 Erd\H{o}s-Sachs theorem (e.g.\ that in the Erd\H{o}s-Renyi model $G(n, d/n)$ the number of cycles of length $g$ is at most $d^g$ in expectation).

If $C$ is any simple cycle in $G'$, then it is not hard to see that the expected number of simple cycles in $H$ that are ``images" of $C$ is $1$. This is a good bound but it is not quite sufficient for our purposes, since cycles may exist in $H$ whose inverse image in $G'$ is not simple. To be precise about getting a bound, for each edge $vv'$ of $G'$, we say that it has $x^2$ corresponding \emph{potential edges} $\{v_\alpha v'_\beta \mid 1 \le \alpha,\beta \le x\}$ in $H = f_x(G')$. (In any realization of $H$, exactly $x$ of these edges are actually present.)
Then it is not hard to see that we get the following: conditioned on the existence of any $\kappa<x$ potential edges in $H$, the probability that any other potential edge is in $H$ is at most $1/(x-\kappa)$. Then the first moment method allows us to show:
\begin{equation}\textrm{if }g < x/2, \E[\#\textrm{ cycles in $H$ of length $\le g$}] \le O(n_0 c\Delta)^{g+1}\label{eq:girthbound}\end{equation}
where we define $n_0 = |V(G_0)|$. Note that this bound is independent of $x$; roughly speaking this is because the factor of $x$ in $|V(H)|$ cancels with the factor $1/x$ in the probability of $H$ containing any given potential edge. Eventually, we will set $g$ to be poly-logarithmic in $n_0$, and the right-hand side of Equation~\eqref{eq:girthbound} will be quasi-polynomial in $n_0$.

It is not hard to argue that $H$ has maximum degree 3, so each vertex has $O(2^g)$ vertices within distance $g$; combining this fact with Equation~\eqref{eq:girthbound} gives us the following form of the ``high-girth argument" that we use in the final proof:
\begin{equation}\Pr[\textrm{$O(n_0c\Delta)^{g+1}$ vertices in $H$ have distance $\le g$ to a cycle of length $\le g$}] \ge 9/10.\label{eq:within}\end{equation}
\subsection{Analysis Sketch}
As mentioned earlier, our reduction uses the following map $R \mapsto S$ from \textsc{Undir-EDP} routings on $H$ to independent sets on $G_0$, parameterized by a number $a$: put $v_i$ into $S$ if at least $a$ out of the $x$ paths $P_{i, \alpha}$ for $i$ are routed by $R$. To be exact, $S$ is only an independent set with some probability, which we would like to make large. By applying simple bounds to $\delta$, for any $A, B \subset \{1, \dotsc, x\}$ with $|A|, |B| \ge a$ and for $i, j$ adjacent in $G_0$, we have that
\begin{equation}
\Pr[\{P_{i, \alpha}\}_{\alpha \in A} \cup \{P_{j, \beta}\}_{\beta \in B} \textrm{ mutually edge-disjoint in }f_x(G')] \le \exp(-ca^2/x).\label{eq:fail}\end{equation}
For any two fixed adjacent vertices $v_i, v_j$ in $G_0$, by a union bound, the probability that \emph{any subsets} $A, B$ with $|A|, |B| \ge a$ exist, such that $A$ and $B$ fail the event in \eqref{eq:fail} is at most $\tbinom{x}{a}\tbinom{x}{a}\exp(-ca^2/x)$. Therefore
using another union bound, \begin{equation}\label{eq:fooo}
\Pr[S\textrm{ independent}] \ge 1-|E(G_0)|\binom{x}{a}\binom{x}{a}\exp(-ca^2/x).\end{equation}
The setting of parameters in the proof is then chosen so that $\Pr[S\textrm{ independent}]$ is at least 9/10. This, along with \lref[Theorem]{theorem:is} and Equation~\eqref{eq:within}, are the three sources of error in the proof.

At a high level the analysis breaks the paths in $R$ into four types,
\begin{itemize}
\item[(a)] a canonical path $P_{i, \alpha}$ so that $\#\{\beta | P_{i, \beta} \in R\} \ge a$.
\item[(b)] a canonical path $P_{i, \alpha}$ so that $\#\{\beta | P_{i, \beta} \in R\} < a$
\item[(c)] a non-canonical $s_{i, \alpha}$-$t'_{i, \alpha}$ path where $s_{i, \alpha}$ has distance $\le g$ to a cycle of length $\le g$
\item[(d)] any other non-canonical $s_{i, \alpha}$-$t'_{i, \alpha}$ path
\end{itemize}
and applies the following analysis (recall $n=|E(G_0)|$):
\begin{itemize}
\item The number of paths of type (a) is at most $|S|x$.
\item The number of paths of type (b) is at most $(n_0-|S|)a$.
\item The number of paths of type (c) is at most $O(n_0c\Delta)^{g+1}$ by \eqref{eq:within}.
\item To upper bound the number of paths of type (d), let $P'$ denote one such path. The union of $P'$ and $P_{i, \alpha}$ contains a simple cycle, but the length of that cycle is at least $g$. The length of $P_{i, \alpha}$ is fixed at $O(c \Delta)$ and hence the length of $P'$ is at least $g - O(c \Delta)$. Since the type-(d) paths are disjoint, there are at most $|E(H)|/(g - O(c \Delta))$ of them.
\end{itemize}

This gives a lower bound on $|S|$ in terms of $|R|$. The proof is then completed by setting the parameters carefully. In detail, using the fact that the greedy algorithm for independent set on $G_0$ always gives a solution of value at least $n_0/(\Delta+1)$, we can show that $|S|$ is an independent set with size at least a constant times $|R|/x$ provided that $g {>} \Delta^2 c, \,\, x {>} \Delta a, \,\, c {>} \frac{x}{a} \ln \frac{x}{a} + \frac{x \ln n_0}{a^2}, \,\, x {>} \Delta^2 c \cdot O(n_0 \Delta c)^g$ hold, where we have omitted some constant factors. (These conditions come from the relative contributions of the different types of paths, as well as the error bounds.) The ratio of inapproximability for \textsc{Undir-EDP} is then roughly $\Delta$ as a function of the input size $m = |E(H)| = O(n_0 c \Delta x)$, and it is not hard to show that $\Delta$ is roughly $\log^{1/3} m$ at maximum. (In \cite{AndrewsZ2006}, a precise setting of parameters is given.)

\cleardoublepage

\lecture{5}{20 July, 2009}{Ashkan Aazami}{Prahladh Harsha}{Proof of the PCP Theorem (Part I)} 

In this lecture and the follow-up lecture tomorow, we will see a sketch of the proof of the PCP theorem. Recall the statement of the PCP theorem from Dana Moshkovitz's lecture earlier today. Dana had mentioned both a weak form (the original PCP Theorem) and a strong form (Raz's verifier or hardness of projective games). We need the strong form as it is the starting point of most tight inapproximability results. ``Standard proofs'' of the strong form proceed as follows: first prove the PCP Theorem~\cite{AroraS1998,AroraLMSS1998} either using the original proof  or the new proof of Dinur~\cite{Dinur2007} and then apply Raz's parallel repetition~\cite{Raz1998} theorem to it to obtain the strong form. However, since the work of Moshkovitz and Raz~\cite{MoshkovitzR2008b}, we can alternatively obtain the strong form directly using the proof techniques in the orginal proof of the PCP Theorem along with the composition technique of Dinur and Harsha~\cite{DinurH2009}. We will follow the latter approach in this tutorial.

\section{Probabilistically Checkable Proofs (PCPs)}

We first introduce the \emph{probabilistically checkable proof} (PCP) 
and some variants of it. 

Our goal is to construct a PCP for some NP-complete problem. We will work with the NP-complete problem \textsc{Circuit-SAT}. Let $C$ be an instance of the \textsc{Circuit-SAT} problem.
A PCP consists of a verifier $V$ that is provided with a proof $\pi$ of acceptance 
of the input instance $C$. The goal of the verifier is to check if the given proof is ``valid''.
Given the input $C$, the verifier $V$ generates a random string $R$ and based on the input instance $C$ and the
random bits of $R$ it generates a list $Q$ of queries from the proof $\pi$.  Next, the verifier $V$ queries the proof $\pi$
at the locations of $Q$ and based on the content of the proof in these locations the verifier either accepts the input $C$
as an acceptable instance or rejects it. The content of $\pi$ at the locations $Q$ is called the \emph{local view} of $\pi$ 
and it is denoted by $\pi_Q$. We denote the \emph{local predicate} that the verifier checks by $\varphi$; the verifier accepts
if $\varphi(\pi_Q)=1$ and it rejects otherwise. The verifier has the following properties:
\begin{description}
\item[Completeness:] If $C$ is satisfiable then there is a proof $\pi$ such that the verifier always  accepts with probability $1$; i.e., 
$$\exists \pi: \prob{}{\varphi(\pi_Q)=1}=1.$$
\item[Soundness:] If $C$ is not satisfiable then for every proof $\pi$ the verifier accepts with probability at most $\delta$ (say $\delta=\frac{1}{3}$);
$$\forall \pi: \prob{}{\varphi(\pi_Q)=1}\leq \frac{1}{3}.$$
\end{description}

\begin{figure}[h]
\begin{center}
\includegraphics[scale=.8]{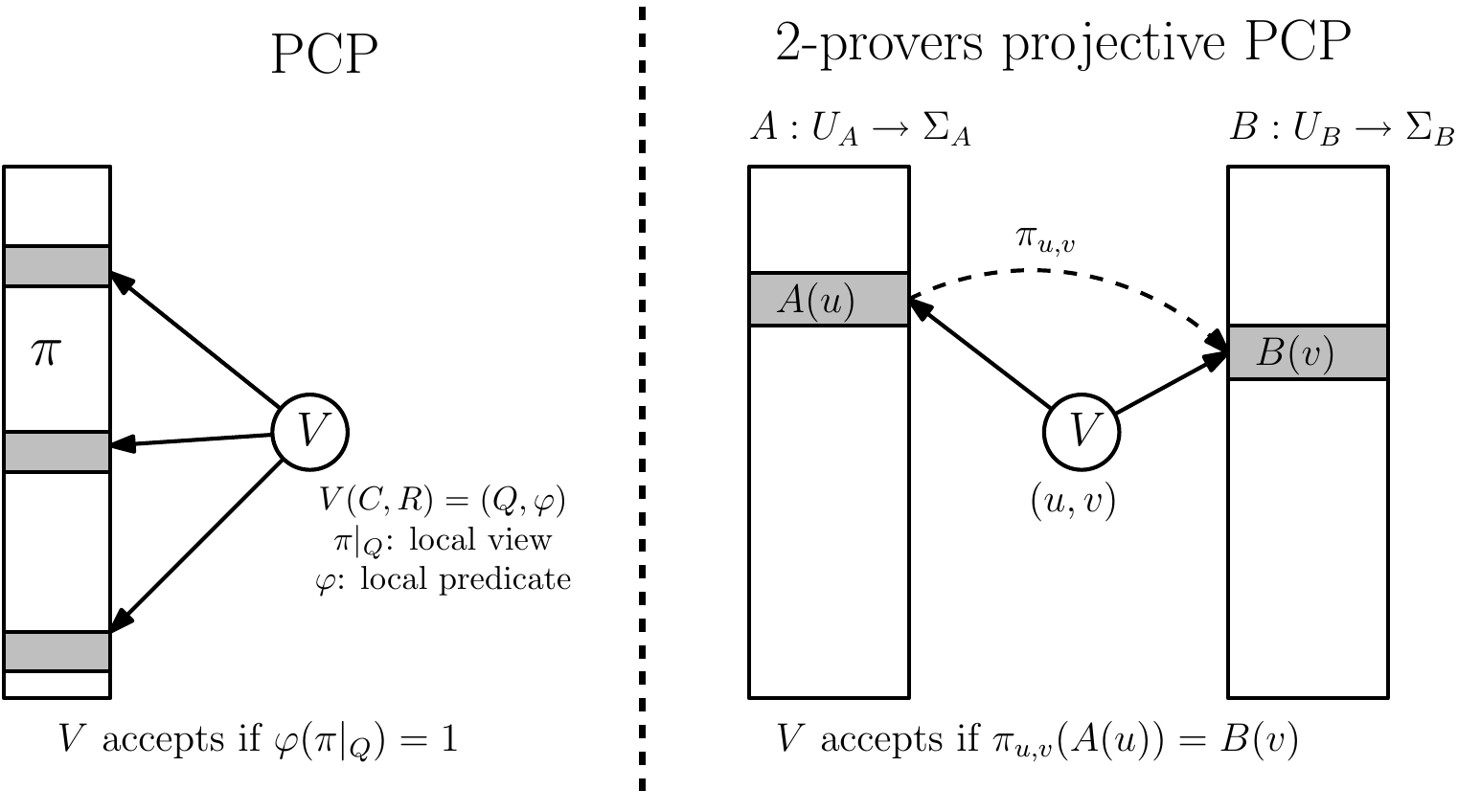}
\end{center}
\caption{PCP and $2$-queries projective PCP}
\end{figure}

The original PCP Theorem now can be stated formally as follows.
\begin{theorem}[PCP Theorem~\cite{AroraS1998,AroraLMSS1998}] \textsc{Circuit-SAT} has a PCP (with the above completeness and soundness properties) that uses $\abs{R}=O(\log{n})$ random bits and queries $\abs{Q}=O(1)$  locations of the proof where $n$ is the size of the input circuit.
\end{theorem}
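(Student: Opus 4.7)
I will follow the original algebraic route (Arora--Safra, Arora--Lund--Motwani--Sudan--Szegedy), which splits the argument into two layers: first build an ``outer'' PCP for \circuitsat{} that uses only $O(\log n)$ random bits but whose local view is of polylogarithmic size (equivalently, $O(1)$ queries over a large alphabet), and then use proof composition to reduce the query complexity down to a constant number of bits while keeping the randomness complexity $O(\log n)$. Soundness $\tfrac{1}{3}$ is obtained by a final $O(1)$-fold repetition, so throughout the construction I only need to arrange for some fixed constant soundness.

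\textbf{Outer PCP via arithmetization.} Given a circuit $C$ on $n$ wires, I will identify the wires with a subcube $H^m \subseteq \F^m$ for a finite field $\F$ of size $\poly\log n$ and $m = \Theta(\log n / \log\log n)$, so that $\abs{\F^m} = \poly(n)$. A satisfying assignment $A : H^m \to \set{0,1}$ is exactly one whose associated ``error polynomial'' (encoding, for each wire, that it takes the correct value given its gate's inputs) vanishes identically on $H^m$. The honest proof will be the low-degree extension $\tilde A : \F^m \to \F$ of $A$, together with auxiliary tables needed for a sum-check-style zero-on-subcube test. The outer verifier then does two things, each with $O(\log n)$ coins: a low-degree test on $\tilde A$ (e.g.\ a line-point or plane-point test) to ensure $\tilde A$ is close to a polynomial of degree $\abs{H} \cdot m$, and a sum-check-based check that the error polynomial vanishes on $H^m$. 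Both subroutines inspect only $O(1)$ points of $\F^m$, so the total local view has size $O(\log n)\cdot \poly\log n$.

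\textbf{Composition.} Now I will cut the query size from $\poly\log n$ bits down to $O(1)$ bits using proof composition. The outer verifier, on random string $R$, reads a local view $\pi_Q$ and evaluates a predicate $\varphi_R(\pi_Q)$. I will replace this check by an ``inner'' PCP that, given oracle access to a separate short proof $\pi'_R$ of the statement ``$\pi_Q$ satisfies $\varphi_R$,'' decides whether to accept using $O(1)$ bit-queries and $O(\log\log n)$ random bits. Because the local view of the outer PCP already has size $\poly\log n$, an inner PCP of nearly any reasonable parameters suffices here; one can even bootstrap by recursively applying a weaker PCP. The composed verifier uses $O(\log n) + O(\log\log n) = O(\log n)$ random bits and $O(1)$ queries, and its completeness is inherited from the two levels.

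\textbf{Main obstacle and resolution.} The hard part is maintaining soundness across composition: naively, an inner verifier can only detect far-from-accepting local views, not arbitrary deviations, so cheating proofs may slip through. The standard fix (and the reason to follow the Dinur--Harsha composition framework) is to make the outer PCP \emph{robust}---on a no-instance, the local view $\pi_Q$ is not only wrong but is Hamming-far from every accepting view of $\varphi_R$ with constant probability over $R$---and to make the inner PCP sound against such far-from-accepting assignments (a list-decoding or assignment-tester guarantee). The low-degree test and sum-check based outer construction can be shown to satisfy this robust-soundness condition after encoding each block of the local view in an error-correcting code. Once soundness $1-\Omega(1)$ is in hand for the composed verifier, constant-many independent repetitions drive the error below $\tfrac{1}{3}$, completing the proof.
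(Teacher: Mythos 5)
Your proposal follows essentially the same route as the paper: arithmetize \textsc{Circuit-SAT} so that satisfiability becomes a ``low-degree and zero on a subcube'' property of a low-degree extension over $\F^m$ with $|\F|=\poly\log n$ and $m=\Theta(\log n/\log\log n)$, verify it with a line-point low-degree test plus a vanishing-on-$H^m$ test to obtain a robust PCP with $O(\log n)$ randomness and $\poly\log n$ query complexity, and then reduce to $O(1)$ queries by composition with a robust/projection-style inner verifier in the Dinur--Harsha sense. The only cosmetic difference is that you phrase the vanishing test as sum-check-style, whereas the paper certifies $P|_{H^m}\equiv 0$ via auxiliary oracles $q_1,\dots,q_m$ witnessing the decomposition $P(x)=\sum_i g_H(x_i)q_i(x)$; both serve the same purpose.
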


Note that the length of the proof $\pi$ is polynomial in $n$, the size of the input instance $C$, since the length of the random string is 
$O(\log{n})$ so the verifier can make $2^{O(\log{n})}$ number of queries.

\subsection{Strong Form of the PCP Theorem and Robust PCPs}
Now we introduce a strong form of the PCP theorem, this is also called the $2$-prover \emph{projection game} theorem.
In this type of PCPs, there are two non-communicating provers $A:U_A\rightarrow \Sigma_A$ and $B:U_B\rightarrow \Sigma_B$ 
and a verifier $V.$ 
Given the input instance $C$, the verifier first generates a random string $R$ of length logarithmic in the input size and then using 
the random string, it determines two locations $u$ and $v$ and generates a projection function $\pi_{u,v}:\Sigma_A\rightarrow \Sigma_B$.   The verifier then queries the two provers $U_A$ and $U_B$ on  locations $u$ and $v$ respectively and accepts the provers' answers if they are consistent with the projection function ie., $\pi_{u,v}(A(u))=B(v)$.

We can now state the strong from of the PCP theorem as follows.
\begin{theorem}[Strong form of PCP (aka Raz's verifier, hardness of projection games)~\cite{Raz1998}]\label{thm:razverifier} For any constant $\delta>0$, there exist alphabets $\Sigma_A, \Sigma_B$ such that
the \textsc{Circuit-SAT} has a 2-prover projection game with a verifier $V$ such that
\begin{description}
\item[Completeness:] If $C$ is satsifiable ($C\in \mbox{\textsc{Circuit-SAT}}$) then there exist two provers  $A:U_A\rightarrow\Sigma_A, B:U_B\rightarrow\Sigma_B$ such that $$\prob{}{\pi_{u,v}(A(u))=B(v)}=~1$$ 
\item[Soundness:] If $C$ is not satisiable ($C\notin \mbox{\textsc{Circuit-SAT}}$) then for all  pairs of provers $A:U_A\rightarrow\Sigma_A, B:U_B\rightarrow\Sigma_B$, we have
$$\prob{}{\pi_{u,v}(A(u))=B(v)}\leq~\delta.$$ 
\end{description}
\end{theorem}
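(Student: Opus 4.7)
The plan is to obtain Theorem~5.1.2 in two stages: first construct a projection game instance for \textsc{Circuit-SAT} with perfect completeness and \emph{some} constant soundness $\delta_0<1$, and then amplify the soundness to an arbitrary $\delta>0$. The first stage is already delivered by the weak projection games theorem (Theorem~2.5.2 from the earlier lecture), whose proof packages the original PCP Theorem into label-cover form: vertices on the $A$-side correspond to the verifier's random strings, labelled by accepting local views; vertices on the $B$-side correspond to proof locations, labelled by proof symbols; an edge is added whenever a random string queries a location, and its projection sends a view to the symbol it assigns to that location. Completeness~$1$ is inherited from the perfect completeness of the PCP verifier, and soundness~$\delta_0$ comes from the constant soundness of the verifier. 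The alphabets are constant-sized because the verifier makes only $O(1)$ queries.

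For the second stage, I would take the $t$-fold parallel repetition of this base projection game for $t = t(\delta, \delta_0)$ chosen later. In the repeated game, each question is a $t$-tuple of independent questions from the base game, and each answer is a $t$-tuple of base-game answers; the new alphabets are $\Sigma_A^t$ and $\Sigma_B^t$, and the new constraint is that the projection holds coordinate-wise, which is still a projection function. Perfect completeness is preserved trivially by playing the perfect strategy in every coordinate. Soundness is where the work lies: invoking Raz's parallel repetition theorem, the $t$-fold repetition of a projection game with soundness $\delta_0<1$ has soundness at most $(1-c(\delta_0))^t$ for some positive $c(\delta_0)$ depending only on $\delta_0$ (for projection games one may in fact use the stronger Rao bound). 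Choosing $t = \Theta(\log(1/\delta)/c(\delta_0))$ drives the soundness below $\delta$, and the alphabets remain of constant size $k = k(\delta)$.

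The main obstacle in this proof plan is Raz's parallel repetition theorem itself: a naive hope that soundness would multiply as $\delta_0^t$ is false in general, because a prover may correlate its responses across coordinates to do strictly better than independent play. Overcoming this requires an information-theoretic argument: assuming a strategy wins the repeated game with probability $p$, one conditions on winning in a suitably chosen subset of coordinates and shows that the conditional distribution of questions and answers on some remaining coordinate is close (in Kullback-Leibler divergence) to the original game's distribution, yet wins with probability bounded away from $\delta_0$, contradicting the base-game soundness unless $p$ is exponentially small in $t$. This step is the technical heart of the proof and is the deepest ingredient imported from outside the lecture.

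Alternatively, one can bypass parallel repetition entirely and follow the Moshkovitz--Raz route mentioned in the excerpt: build a low-error two-query PCP directly via the manifold-vs.-point test and apply the Dinur--Harsha composition to preserve the projection property and reduce the alphabet back to constant size. That route avoids Raz's theorem and additionally allows $\delta$ to be taken sub-constant, at the cost of a more intricate PCP construction; within the scope of this tutorial, however, the parallel-repetition route is the conceptually simplest way to reach the statement of Theorem~5.1.2.
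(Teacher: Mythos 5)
Your proposal is correct, but it takes the route the paper explicitly sets aside. The lecture notes state that the ``standard proof'' is exactly what you describe --- first get a projection game with perfect completeness and some constant soundness from the PCP Theorem (the weak projection games theorem, via the random-strings-vs-proof-locations bipartite construction), then drive the soundness down to an arbitrary $\delta$ by Raz's parallel repetition --- and then announce that the tutorial will instead follow the Moshkovitz--Raz/Dinur--Harsha approach. Concretely, Lectures 5 and 6 first establish a syntactic equivalence between $2$-prover projection games and \emph{robust} PCPs (where soundness is measured by the expected agreement of the local view with the nearest accepting view), then build a robust PCP for \textsc{Circuit-SAT} from scratch using Reed--Muller codes, the line-point low-degree test, the zero-on-subcube test, and arithmetization of the circuit, arriving at a robust PCP with $O(\log n)$ randomness, $\mathrm{polylog}\, n$ queries, and $1/\mathrm{polylog}\, n$ robust soundness; finally, the Dinur--Harsha composition reduces the query complexity to a constant while keeping the soundness arbitrarily small. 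The trade-off is roughly as you observe in your closing remark (which is in fact the paper's chosen route, not merely an alternative): your approach is shorter and conceptually cleaner but imports Raz's parallel repetition theorem wholesale as its deepest ingredient and is inherently limited to constant $\delta$, whereas the paper's approach stays within the algebraic PCP machinery it is already developing, avoids parallel repetition entirely, and extends to sub-constant error. Within your own plan, the one point to be careful about is that the weak projection games theorem must be invoked in its gap form with completeness exactly $1$ (inherited from the perfect completeness of the PCP verifier), since parallel repetition does not by itself repair imperfect completeness; your write-up does handle this correctly.
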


Now we introduce the notion of \emph{robust} PCP. These PCPs have a stronger soundness property. In the ordinary PCPs, the 
soundness property says that if the input instance $C$ is not an acceptable input, then the local predicate that the verifier 
checks is not satisfied with high  probability. In the robust PCPs the local view  is far from any satisfying assignment with high 
probability.

%Given an instance $C$ of the \textsc{Circuit-SAT} problem, the verifier tosses $\abs{R}$ random coins to generates a random string $R$. 
%From the random string $R$, the verifier makes a set $Q$ of queries from the prover $\pi$ and it also generates a local predicate $\varphi$.
%Finally, the verifier queries the proof at the locations $Q$ and checks if this local view of $\pi$ satisfies the local  predicate $\varphi$.  
First for some notation. Given two codewords $x$ and $y$, the agreement between $x$ and $y$ is defined as $agr(x,y)=\prob{i}{x_i=y_i}$. For a given set $S$ 
of code-words, we define the agreement of $S$ and $x$ by $\agr(x,S)=\max_{y\in S}{\agr(x,y)}$. 
Let us denote the set of all satisfiable assignments to the local predicate $\varphi$ by $SAT(Q)$. 

The robust PCPs have the same completeness property as in the ordinary PCPs, but they have a stronger soundness property. More precisely, the following soundness property of regular PCPs is replaced by the stronger ``robust soundness'' property.
\begin{description}
\item[Soundness:]
$$C\notin \mbox{\textsc{Circuit-SAT}}\Rightarrow \prob{}{\pi_Q\in SAT(\varphi)}\leq \delta$$
\item[Robust Soundness:]
$$C\notin \mbox{\textsc{Circuit-SAT}}\Rightarrow \E[\agr(\pi_Q,SAT(\varphi))]\leq \delta$$
\end{description}

We call PCPs with the robust soundness property, robust PCPs.

\subsection{Equivalence of Robust PCPs and 2-Provers Projection PCPs}
Note that robust PCPs are just regular PCPs with a stronger soundness requirement. We now show that robust PCPs are equivalent to $2$-provers projection PCPs.
Given a robust PCP with the verifier $V$ and the prover $\pi$, we construct a 2-prover projective verifier $V'$ and two provers $A, B$ as follows. The prover $B$ is the same
prover as $\pi$. For each possible random string $R$ and the corresponding queries $Q$ of the verifier $V$, the prover $A$ has the local 
view $\pi_Q$ at the location indexed by $R$; i.e., the prover $A$ has all  possible local views of the prover $\pi$. The verifier $V'$
of the $2$-prover projection PCP is as follows.
\begin{enumerate}
\item Generate a random string $R$ and compute a set $Q$ of queries as in the verifier $V.$
\item Query $1$: Asks the prover $A$ for the entire ``accepting'' local view (i.e., $\pi_Q$).
\item Query $2$: Ask the prover $B$ for a random location within the  local view (i.e., $(\pi_Q)_i$). 
\item Accept if the answer of the prover $B$ is consistent with the answer of the prover $A$. 
\end{enumerate}
It is an easy exercise to check the following two facts. The constructed $2$-provers PCP has the completeness property. Tthe robust soundness 
of the robust PCP translates into the soundness of the $2$-provers PCP. A closer look at this transformation reveals that it is in fact, invertible. This demonstrates a syntactic equivalence between robust PCPs and $2$-prover projection PCPs. Note that in this equivalence, the alphabet size of the left prover $|\Sigma_A|$ translates to query complexity of the robust PCP verifier (to be precise, free-bit complexity of robust PCP verifier).
Given this equivalence, our goal to prove \lref[Theorem]{thm:razverifier} can be equivalently stated as constructing for every constant $\delta$, robust PCPs for \textsc{Circuit-SAT} with robust soundness $\delta$ and query complexity some function of $\delta$ (but independent of $n$).

\section{Locally Checkable Codes}
Our goal is to construct a robust PCP for the \textsc{Circuit-SAT} over a constant size alphabet with constant number of queries for arbitrarily
small soundness error. To achieve this goal, we need to transform a NP-proof (or a certificate for an NP problem) 
to a proof that can be locally checked. To do this, we use locally checkable codes. There are two potential candidates for 
locally checkable codes.
\begin{enumerate}
\item  The first one is the \emph{Directed Product} code; the new proof of the PCP theorem by Dinur  and the proof of
the parallel repetition theorem of Raz are based on this encoding.
\item The second one is the \emph{Reed-Muller} code which is based
on the low-degree polynomials over a finite field $\F$, and the original proof of the PCP theorem is based on this encoding.
\end{enumerate}
We use the Reed-Muller code in construction of the robust PCP. 

A PCP, by definition, is a locally checkable encoding of the NP witness. In the rest of today's lecture, we shall construct locally checkable encodings of two very specific properties, namely ``low-degreeness'' and ``being zero on a sub-cube''. We will define these properties formally shortly, however it is worth noting that neither of these properties is a NP-complete property. In the next lecture, we will show how despite this, we can use the local checkability of these two properties to construct PCPs for all of NP.

\subsection{Reed-Muller Code}

Let $\F$ be a finite field, and let $\mathcal{P}_d^m$ be the set of all $m$-variate polynomials of degree at most $d$ over $\F$. 
The natural way of specifying a function $f\in \mathcal{P}_d^m$ is to list the coefficients of $f$. It is easy to check that 
a $m$-variate polynomial of degree $d$ has $\binom{m+d}{m}$ coefficients. The Reed-Muller encoding  of $f$ is the 
list of the evaluations of $f$ on all $x\in\F^m$; the codeword at the position indexed by $x\in\F^m$ has value $f(x)$. The 
length of this codeword is $\abs{\F^m}$.

This encoding is inefficient but there is an efficient ``local test'' to find out if a given codeword is close to a correct encoding of a 
low degree polynomial. 

\begin{itemize}
\item Question: Given a function $f:\F^m\rightarrow \F$, how does one check if $f$ is a Reed-Muller encoding:
The straightforward way to do this is to interpolate the polynomial and check if it has degree at most $d$.
\item Question: Given a function $f:\F^m\rightarrow \F$, how does one {\em locally} check if $f$ is close to a Reed-Muller encoding. A test for this purpose was first suggested by Rubinfeld and Sudan~\cite{RubinfeldS1996}
This test is based on the fact that a restriction of a low-degree polynomial (over $\F^m$) to a line (or any space with small dimension)
is also a low-degree polynomial.
\end{itemize}
\subsection{Low Degree Test (Line-Point Test)}
Given the evaluations of function $f$ on all points in $\F^m$. Our goal is to check if $f$ is close to a $m$-variate polynomial 
of degree at most $d$; we do this by checking the values of the function $f$ on  a random line. A set $\set{x+ty|t\in \F}$,
for some $x,y\in\F^m$, is called a line in $\F^m$.

\noindent {\bf Low Degree Test (LDT):}
\begin{enumerate}
\item Pick a random line $\ell$ in $\F^m$; this can be done by picking two random points $x,y\in\F^m.$
\item Query the function $f$ on all points of the line $\ell$. Let $f|_{\ell}$ denote the restriction of $f$ on the line $\ell$ 
(i.e., $f|_{\ell}(t)=f(x+ty)$).
\item Accept if $f|_{\ell}$ is an univariate low-degree polynomial (i.e., $f|_{\ell}\in \mathcal{P}_d^1$).
\end{enumerate}

Clearly, if $f\in \mathcal{P}_d^m$, then $f|_{\ell}$ is an univariate polynomial of degree at most $d$. Hence, we have the perfect
completeness.
\begin{description}
\item[Completeness:] $f\in\mathcal{P}_d^m \Rightarrow \prob{}{\mbox{LDT accepts}}=1$
\end{description}

Rubinfeld and Sudan~\cite{RubinfeldS1996} proved the following form of soundness for this test.
\begin{description}
\item[Soundness:] $\forall \delta, \exists \delta': \prob{}{\mbox{LDT accepts}}\geq 1-\delta \Rightarrow $ 
$f$ is $(1-\delta')$-close to some low-degree polynomial (i.e., $\agr (f,\mathcal{P}_d^m)\geq 1-\delta' $).
\end{description}

We will actually need the following stronger soundness that was proven by Arora and Sudan~\cite{AroraS2003}.
\begin{description}
\item[Stronger Soundness:] $\E[\agr(f|_{\ell},\mathcal{P}_d^1)]\geq \delta \Rightarrow \agr(f,\mathcal{P}_d^m)\geq \delta-m\epsilon$,
where $\epsilon={\rm poly}(m,d,\frac{1}{\abs{\F}})$.
\end{description}
Raz and Safra~\cite{RazS1997} proved an equivalent statement (with better dependence of $|\F|$ on $d$) for the plane-point test as opposed to the line-point test.

\subsection{Zero Sub-Cube Test}
In this section, we introduce another test that is used in the construction of robust PCPs.
Let $f$ be a polynomial over $\F^m$ and let $H$ be a subset of $\F$. We want to test if $f$ is a low degree polynomial
(i.e., $f\in \mathcal{P}_d^m$) and if it is zero on the sub-cube $H^m$ (i.e., $f|_{H^m}\equiv 0$).
Using the low degree test (LDT) we can check if $f\in\mathcal{P}_d^m$, but to test if $f$ is zero on $H^m$ it is not enough to
pick few random points from $H^m$ and test if $f$ is zero on those points.

Before describing the correct test, we
present two results about the polynomials.

\begin{lemma}[Schwartz-Zippel]
Let $f$ be a $m$-variate polynomial of degree $d$ over $\F^m$. If $f$ is not a zero polynomial (i.e., $f\not\equiv 0$), 
then $$\prob{x}{f(x)=0}\leq \frac{d}{\abs{\F}}.$$
\end{lemma}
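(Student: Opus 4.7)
The plan is to prove the Schwartz–Zippel lemma by induction on the number of variables $m$, which is the standard approach for this bound.

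For the base case $m=1$, I would just invoke the fundamental theorem of algebra (over a field): a nonzero univariate polynomial of degree $d$ has at most $d$ roots in $\F$, so a uniformly random $x \in \F$ satisfies $f(x)=0$ with probability at most $d/|\F|$.

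For the inductive step, assume the lemma holds for polynomials in fewer than $m$ variables. Given a nonzero $f \in \mathcal{P}_d^m$, I would single out one variable, say $x_m$, and write
\[ f(x_1,\dots,x_m) = \sum_{i=0}^{k} g_i(x_1,\dots,x_{m-1}) \, x_m^i, \]
where $k$ is the largest power of $x_m$ with $g_k \not\equiv 0$. Then $g_k$ is a nonzero $(m-1)$-variate polynomial of degree at most $d-k$, and for any fixing of $(x_1,\dots,x_{m-1})$ with $g_k(x_1,\dots,x_{m-1}) \neq 0$, the restricted polynomial in $x_m$ alone is a nonzero univariate polynomial of degree exactly $k$. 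I would then bound $\prob{x}{f(x) = 0}$ by splitting according to whether $g_k$ vanishes at $(x_1,\dots,x_{m-1})$: by the inductive hypothesis this happens with probability at most $(d-k)/|\F|$, and conditioned on it not happening, the base case applied to the univariate restriction bounds the conditional probability of $f(x)=0$ by $k/|\F|$. A union bound then gives
\[ \prob{x}{f(x)=0} \leq \frac{d-k}{|\F|} + \frac{k}{|\F|} = \frac{d}{|\F|}. \]

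There is no real obstacle here — the only small point of care is the choice of which variable to peel off (any variable that actually appears in $f$ works, with $k \geq 1$) and making sure the leading coefficient $g_k$ is nonzero as a polynomial, which is exactly how $k$ was chosen. The argument is clean because degree budgets add up: $\deg(g_k) + k \leq d$, which is precisely what makes the two terms sum to $d/|\F|$.
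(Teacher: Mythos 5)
Your proof is correct. The paper states the Schwartz--Zippel lemma without proof (it is invoked as a standard fact in setting up the zero sub-cube test), so there is nothing in the text to compare against; your induction on the number of variables --- peeling off the highest power of $x_m$, noting $\deg(g_k)+k\le d$, and splitting on whether the leading coefficient $g_k$ vanishes --- is the canonical argument and is carried out correctly.
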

The above lemma shows that if a low degree polynomial over a sufficiently large field is not zero at every point, then it can only be
zero on small fraction of points.

\begin{proposition}
Let $f$ be a polynomial of degree at most $d$ over $\F^m$. The restriction of $f$ to $H^m$ is a zero polynomial (i.e., 
$f|_{H^m}\equiv 0$)  if and only if there exist polynomials $q_1,\ldots, q_m$ of degree at most $d-\abs{H}$ such that
\begin{equation}
f(x)=\sum_{i=1}^{m}{g_{H}(x_i)q_i(x)},
\label{Eq}
\end{equation} 
where $g_H(x)=\prod_{h\in H}{(x-h)}$ is an univariate polynomial (of degree $\abs{H}$).
\end{proposition}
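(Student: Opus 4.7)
The easy direction is the ``if'' direction: if $f(x)=\sum_{i=1}^{m} g_{H}(x_i)q_i(x)$ and $x\in H^m$, then every $x_i\in H$, so $g_H(x_i)=0$ for each $i$ and therefore $f(x)=0$. So the plan is to focus entirely on the ``only if'' direction, which I will prove by induction on $m$, using polynomial division by $g_H$ as the key tool.

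For the base case $m=1$, $f$ is a univariate polynomial of degree at most $d$ vanishing on $H$. Since $g_H(x)=\prod_{h\in H}(x-h)$ is the unique monic polynomial of degree $|H|$ whose root set is exactly $H$, standard univariate polynomial division gives $f(x)=g_H(x)q_1(x)+r(x)$ with $\deg(r)<|H|$; evaluating at each $h\in H$ forces $r(h)=0$, and since $\deg(r)<|H|$ this means $r\equiv 0$. Hence $f=g_H\cdot q_1$ with $\deg(q_1)\leq d-|H|$.

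For the inductive step, view $f(x_1,\dots,x_m)$ as a polynomial in $x_1$ with coefficients in $\F[x_2,\dots,x_m]$, and perform division by $g_H(x_1)$ in this ring:
\[
f(x)=g_H(x_1)\, q_1(x)+r(x),\qquad \deg_{x_1}(r)<|H|,
\]
where one checks that $\deg(q_1)\leq d-|H|$ and $\deg(r)\leq d$. Now write $r(x)=\sum_{j=0}^{|H|-1} x_1^{j}\, r_j(x_2,\dots,x_m)$ with $\deg(r_j)\leq d-j$. Since $g_H(x_1)$ vanishes on $H^m$, we have $r|_{H^m}\equiv 0$; fixing any $(x_2,\dots,x_m)\in H^{m-1}$ gives a univariate polynomial in $x_1$ of degree $<|H|$ with $|H|$ roots, hence identically zero. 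This forces $r_j|_{H^{m-1}}\equiv 0$ for every $j$. The induction hypothesis applied to each $r_j$ yields
\[
r_j(x_2,\dots,x_m)=\sum_{i=2}^{m} g_H(x_i)\, q_{j,i}(x_2,\dots,x_m),\qquad \deg(q_{j,i})\leq (d-j)-|H|.
\]
Substituting back and grouping by $i$ gives $r(x)=\sum_{i=2}^{m} g_H(x_i)\,\widetilde{q}_i(x)$ with $\widetilde{q}_i(x)=\sum_{j} x_1^{j}\, q_{j,i}(x_2,\dots,x_m)$, and $\deg(\widetilde{q}_i)\leq \max_j (j+(d-j)-|H|)=d-|H|$. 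Combining with the $g_H(x_1)\, q_1(x)$ term completes the decomposition.

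The only delicate point — which I expect to be the main bookkeeping obstacle rather than a conceptual one — is verifying the degree bound $d-|H|$ on every $q_i$ simultaneously; this requires tracking degrees in the division step and in the recombination $\widetilde{q}_i=\sum_j x_1^j q_{j,i}$, where one needs the inductive bound to be $(d-j)-|H|$ rather than merely $d-|H|$ so that multiplication by $x_1^j$ still keeps the total degree at most $d-|H|$. Once this is built into the induction hypothesis, the proof goes through cleanly.
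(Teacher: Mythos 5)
Your proof is correct. The paper actually states this proposition without proof, so there is nothing to compare against directly, but your argument is the standard one that the PCP literature implicitly relies on: the ``if'' direction is immediate since $g_H(x_i)$ vanishes whenever $x_i\in H$, and the ``only if'' direction goes by induction on $m$, dividing by the monic polynomial $g_H(x_1)$ in the ring $\F[x_2,\dots,x_m][x_1]$, observing that the remainder $r$ vanishes on $H^m$ and hence (being of $x_1$-degree less than $|H|$) each coefficient $r_j$ vanishes on $H^{m-1}$, and then invoking the inductive hypothesis on each $r_j$. You correctly identify and handle the one point requiring care: the induction hypothesis must be applied with the sharper bound $\deg(q_{j,i})\leq (d-j)-|H|$ so that multiplying by $x_1^{j}$ in the recombination step still yields total degree at most $d-|H|$; with that strengthening built in, the argument is complete. (One could also note that the base case silently covers $d<|H|$, where the conclusion forces $f\equiv 0$ and all $q_i=0$, which is consistent since a nonzero univariate polynomial of degree less than $|H|$ cannot have $|H|$ roots.)
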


Now we describe the Zero Sub-cube Test. In the LDT we assumed that the evaluation of $f$ on all points are given in the proof table.
By the above Proposition if the polynomial $f$ of degree at most $d$ is zero on $H^m$ then there are polynomials $q_1,\ldots, q_m$ 
of degree at most $d-\abs{H}$ that satisfy Equation \eqref{Eq}. In the Zero Sub-Cube Test, we require that the proof table also
contains the evaluations of $q_1,\ldots, q_m$ (in addition to the evaluation of $f$) on all points in $\F^m$.

\noindent{\bf Zero Sub-cube Test}:
\begin{enumerate}
\item Choose a random line $\ell$ in $\F^m.$
\item For $f,q_1,\ldots, q_m$ check if $f|_{\ell}, q_1|_{\ell},\ldots, q_m|_{\ell}$ is a low degree polynomial. In more detail,
check if $f|_{\ell}$ has degree at most $d$, and for each $i=1,\ldots, m$ check if $q_i|_{\ell}$ has degree at most $d-\abs{H}$.
\item For each $x\in\ell$, check if $f(x)=\sum_{i=1}^{m}{g_{H}(x_i)q_{i}(x)}$.
\item Accept if each of the above tests passes, and reject otherwise.
\end{enumerate}

Combining the soundness of the low-degree test and the above properties of polynomials, we can prove the following completeness and soundness of the {\bf Zero Sub-cube Test}. Let $\mathcal{Z}_d^m$ denote the set of $m$-variate polynomials $P$ of degree $d$ such that $P|_{H^m} =0$. Also for any line $\ell$, let $\mathrm{acc}(\ell)$ denote the set of accepting local views of the {\bf Zero Sub-cube Test} for the random line $\ell$.

\begin{description}
\item[Completeness:] If $f\in\mathcal{Z}_d^m$, then $\prob{}{\mbox{Zero Sub-cube Test accepts}}=1$ or equivalently $\prob{}{(f|_\ell, q_1|_\ell,\ldots,q_m|_\ell) \in \mathrm{acc}(\ell)}= 1$.
\item[Soundness:] $\E[\agr((f|_{\ell},q_1|_\ell,\dots,q_m|_\ell),\mathrm{acc}(\ell))]\geq \delta \Rightarrow \agr(f,\mathcal{Z}_d^m)\geq \delta-m\epsilon-d/|\F|$,
where $\epsilon={\rm poly}(m,d,\frac{1}{\abs{\F}})$.

\end{description} 

\cleardoublepage

\lecture{6}{21 July, 2009}{Geetha Jagannathan  \& Aleksandar Nikolov}{Prahladh Harsha}{Proof
   of the PCP Theorem (Part II)} 
 %% eg: for Moses Charikar's lecture scribed by Alantha Newman, it'd be
 %%\lecture{9}{Jul 21, 2009}{Alantha Newman}{Moses
 %%Charikar}{Semi-definite programming and Unique Games}

 %%%% body goes in here %%%%
 \section{Recap from Part 1}

 Recall that we want to construct a robust PCP for the NP-Complete
 problem. I.e. for every $n$-sized instance $x$ of the NP-complete problem $L$ we want to construct a proof $\Pi$, which can be checked by a
 verifier using a random string $R$ of length $\log n$ and a
 constant-size query $Q$. The verifier computes a local predicate
 $\phi$ of the local view $\Pi|_Q$ and accepts iff $\phi(\Pi|_Q) =
 1$. We want the construction to satisfy the following properties.
 \begin{description}
 \item[Completeness:] If $x \in L$ then there exists a proof $\Pi$ such that $$\prob{R}{\phi(\Pi|_Q) = 1} = 1.$$
 \item[Soundness:]  If $x\notin L$ then for all proofs $\Pi$, $$\E[agr(\Pi|_Q, sat(\phi))] \leq \delta.$$
 \end{description}

 Recall further that in Part I we constructed a PCP with the parameters
 above not for any NP-complete property but for the specific ``Zero on a Subcube'' property. We say that a function
 $f:\F^m \rightarrow \F$ satisfies the ``Zero on Subcube'' property iff:
 \begin{itemize}
 \item $f$ is a low-degree polynomial $P$.
 \item $P$ vanishes on $H^m$, where $H \subseteq \F$.
 \end{itemize}

 \section{Robust PCP for \circuitsat }

 In this part of the proof we will show how to use the local test for
 Zero on a Subcube to construct a PCP for the \circuitsat~problem.

 \subsection{Problem Definition}
 \begin{figure}[htp]
   \centering
   \includegraphics[scale=0.7]{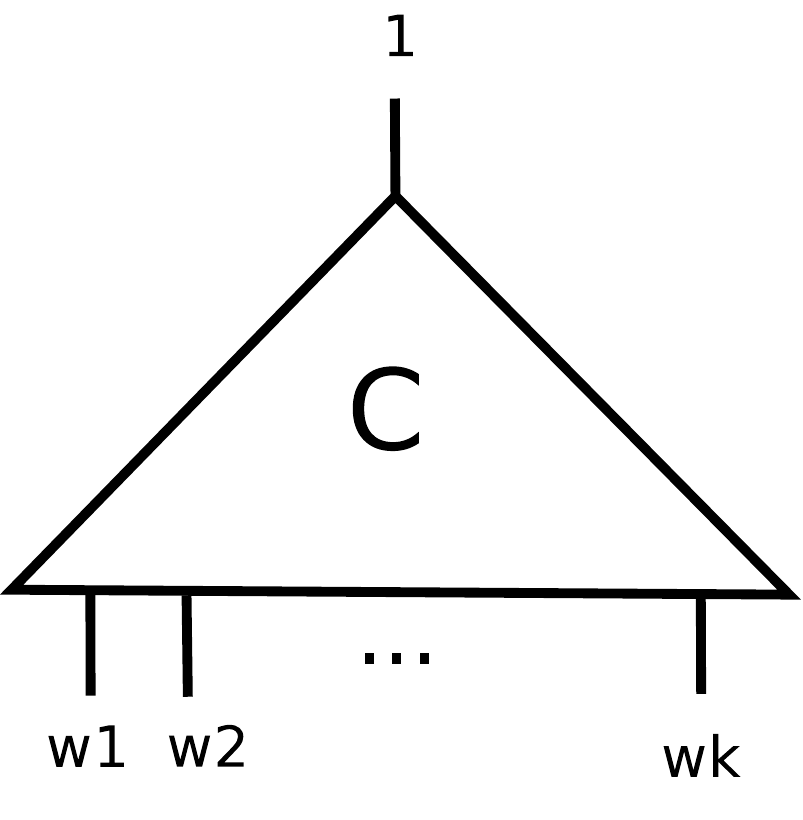}
   \caption{\circuitsat 's input}
   \label{fig:circuit}
 \end{figure}
 \circuitsat~is the following decision problem:
 \begin{itemize}
 \item \textbf{Input}: A circuit $C$ with $n$ gates (\lref[Figure]{fig:circuit}); $k$ of them are the input gates $w_1, \ldots,
   w_k$, and the rest are OR and NOT gates with fan in at most 2 and
   fanout at most 1. Let's associate variables $z_1, \ldots, z_n$ with
   each gate (including the input gates). Variable $z_i$ is the output
   of gate $i$.  The output gate outputs 1.
 \item \textbf{Output}: 1 iff there exists an assignment to $z_1, \ldots,
   z_n$ that respects the gate functionality, and 0 otherwise.
 \end{itemize}

 Note that a proof for this problem is an assignment to $z_1, \ldots,
 z_n$, and verifying the proof amounts to checking that the assignment
 respects gate functionality at each gate. To use our local Zero on a
 Subcube test for \circuitsat~we need to encode the assignment
 and the circuit $C$ algebraically, so that an assignment satisfies $C$
 iff a related function is a low-degree polynomial that vanishes on a
 small subcube. Representing the assignment and the circuit algebraically
 is performed by a process known as arithmetization.

 \subsection{Arithmetization of the Assignment}

 First we will map an assignment to the gate variables $z_1, \ldots,
 z_n$ to a low-degree polynomial over an arbitrary field $\F^m$ so that
 the assignment is encoded by the polynomial. 

 Let $|H^m| = n$ and choose an arbitrary bijection $H^m \leftrightarrow [n]$. The
 assignment maps each gate to either 0 or 1, so it is equivalent to a
 function $A:H^m \rightarrow \{0, 1\}$. We choose $H$ so that $\{0, 1\}
 \subseteq H \subseteq \F$, and we can write $A:H^m \rightarrow \F$.

 The following (easy-to-prove) algebraic fact will be used in the arithmetization of the circuit.

 \begin{fact}[Low-Degree Extension (LDE)]
   For any function $S:H^m \rightarrow \F$, there exists a polynomial
   $\hat{S}:\F^m \rightarrow \F$ such that $\hat{S}|_{H^m} \equiv S$
   and the degree of $\hat{S}$ for each variable is at most
   $|H|$. Therefore the total degree of $\hat{S}$ is at most $m|H|$.
 \end{fact}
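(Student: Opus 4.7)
The plan is to construct $\hat{S}$ explicitly by multivariate Lagrange interpolation on the grid $H^m$. For each coordinate value $a_i \in H$, I would first build the univariate indicator polynomial $\chi_{a_i}(x_i) := \prod_{h \in H \setminus \{a_i\}} \frac{x_i - h}{a_i - h}$, which has degree exactly $|H|-1$ in $x_i$ and satisfies $\chi_{a_i}(a_i)=1$ and $\chi_{a_i}(b)=0$ for every $b \in H \setminus \{a_i\}$. For each point $a = (a_1,\ldots,a_m) \in H^m$, take the tensor product $\chi_a(x) := \prod_{i=1}^m \chi_{a_i}(x_i)$, which evaluates to $1$ at $a$ and to $0$ at every other point of $H^m$. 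The candidate extension is then
$$\hat{S}(x) \;:=\; \sum_{a \in H^m} S(a)\cdot \chi_a(x).$$

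The verification step is then immediate. Each factor $\chi_{a_i}(x_i)$ contributes degree at most $|H|-1$ in $x_i$ and nothing in the other variables, so $\chi_a$, and hence $\hat{S}$, has degree at most $|H|-1 \leq |H|$ in each individual variable and total degree at most $m(|H|-1) \leq m|H|$. For any $b \in H^m$, the delta-function property of the $\chi_a$'s gives $\hat{S}(b)=\sum_{a\in H^m} S(a)\,\delta_{a,b}=S(b)$, so $\hat{S}|_{H^m}\equiv S$ as required.

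Since the claim only asserts existence, this would finish the proof; should uniqueness be wanted later, I would note that the difference of any two valid extensions is a polynomial of per-variable degree at most $|H|-1$ that vanishes on $H^m$, and a short induction on $m$ (using the fact that a univariate polynomial with more roots than its degree is identically zero) shows this difference is $0$. I do not foresee any real obstacle here: the only small thing to keep track of is the distinction between per-variable degree bounds and the total degree bound, and ensuring the statement's looser bound of $|H|$ (rather than $|H|-1$) is compatible with the tighter bound produced by the construction, which it trivially is.
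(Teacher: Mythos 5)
Your proof is correct and is exactly the standard multivariate Lagrange interpolation argument; the paper itself leaves this as an ``easy-to-prove'' fact without writing out a proof, and your construction (tensor products of univariate indicator polynomials of degree $|H|-1$, summed against the values $S(a)$) is precisely the intended one. The bookkeeping on per-variable versus total degree is handled correctly, and the optional uniqueness remark via induction on $m$ is also sound.
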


 Then $A:H^m \rightarrow \F$ is mapped by the low-degree extension to a
 polynomial $\hat{A}:\F^m \rightarrow \F$ and the degree of $\hat{A}$
 is at most $m|H|$.

 \subsection{Arithmetization of the Circuit}\label{arithmet}

 Our goal is to derive a rule from the circuit $C$ which maps any
 polynomial $\hat{A}:\F^m \rightarrow \F$ to a different low-degree
 polynomial $P_{\hat{A}}:\F^{3m+3} \rightarrow \F$, such that
 $P_{\hat{A}}|_{H^{3m+3}} \equiv 0$ if and only if $\hat{A}$ encodes a
 satisfying assignment. Note that the existence of such a rule is all
 we need to construct a PCP for \circuitsat, as it reduces verifying a
 satisfying assignment to testing the Zero on a Subcube property.

 We will specify the circuit in a slightly different fashion to enable the arithmetization. Consider a function $\bar{C}:
 H^{3m} \times H^3 \rightarrow \{0, 1\}$ that takes three indexes $i_1,
 i_2, i_3 \in [n]=H^m$ and three bits $b_1, b_2, b_3 \in \{0, 1\} \subseteq
 H$ and outputs a bit as follows based on the functionality of the gate whose input variables are $z_{i_1}$ and $z_{i_2}$ and output variable is $z_{i_3}$.
 \begin{equation*}
   \bar{C}(i_1, i_2, i_3, b_1, b_2, b_3) =
   \begin{cases}
     1,& \text{iff the assignment $z_{i_1} = \bar{b}_1$, $z_{i_2} = \bar{b}_2$ $z_{i_3} = \bar{b}_3$, where $i_1$ and $i_2$}\\
         &\text{ are input values to gate $i_3$ and $i_3$ is the output value is an}\\
         & \text{INVALID configuration for the gate $i_3$}\\
 \\
     0 & \text{otherwise}.
   \end{cases}
 \end{equation*}
 \lref[Figure]{fig:gate} illustrates the meaning of the arguments of $\bar{C}$.
 \begin{figure}[htp]
   \centering
   \includegraphics[scale=0.8]{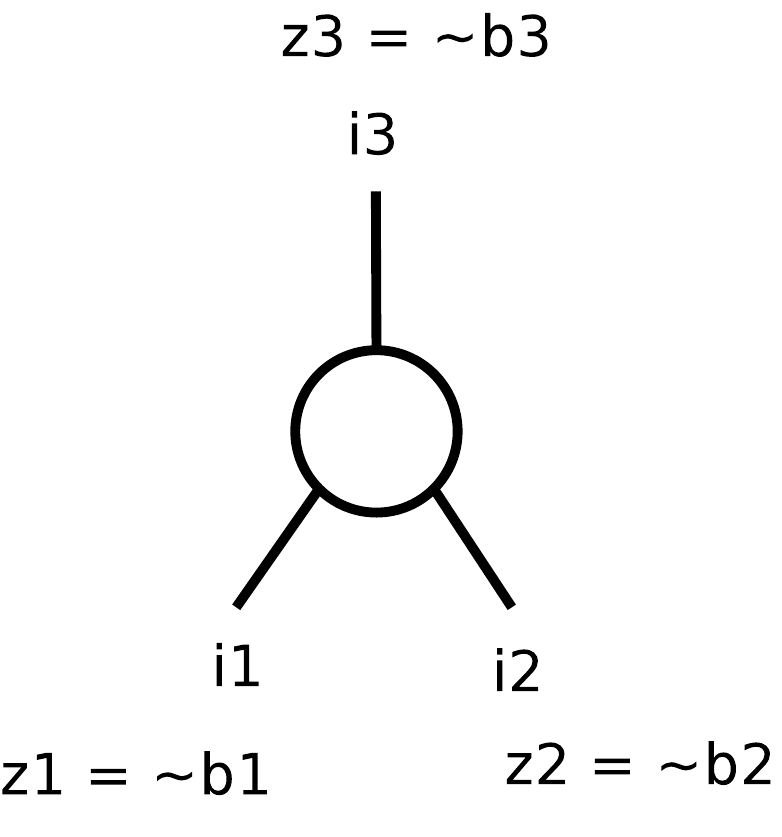}
   \caption{Setting of gate variables for $\bar{C}$}
   \label{fig:gate}
 \end{figure}

 Now once again we can use the LDE to map $\bar{C}:H^{3m+3} \rightarrow
 \F$ to a low-degree polynomial $\hat{C}:\F^{3m+3} \rightarrow \F$.

 We are ready to construct the rule we need. Given any $q: \F^m
 \rightarrow \F$ we define $P_{(q)}:\F^{3m+3} \rightarrow \F$ such that
 \begin{align*}
   P_{(q)}(\underbrace{x_1, \ldots, x_m}_{\mathbf{x}_1}, &\underbrace{x_{m+1}, \ldots, x_{2m}}_{\mathbf{x}_2}, \underbrace{x_{2m+1}, \ldots, x_{3m}}_{\mathbf{x}_3}, z_1, z_2, z_3)\\
   &= \hat{C}(\mathbf{x}_1, \mathbf{x}_2, \mathbf{x}_3, z_1, z_2,
   z_3)(q(\mathbf{x}_1) - z_1)(q(\mathbf{x}_2) - z_2)(q(\mathbf{x}_3) - z_3).
 \end{align*}
 Note that if $q$ is low-degree, $P_{(q)}$ is also low-degree.

 The motivation for defining $P_{(q)}$ in this way will become clear
 when we apply the definition to $\hat{A}$:
 \begin{equation}\label{phat}
   P_{(\hat{A})}(\mathbf{i}_1, \mathbf{i}_2, \mathbf{i}_3, b_1, b_2,
   b_3) = \hat{C}(\mathbf{i}_1, \mathbf{i}_2, \mathbf{i}_3, b_1, b_2,
   b_3)(\hat{A}(\mathbf{i}_1) - b_1)(\hat{A}(\mathbf{i}_2) - b_2)(\hat{A}(\mathbf{i}_3) - b_3).
 \end{equation}
 It is now an easy case-analysis to observe the following.
 \begin{observation}
    $P_{(\hat{A})}|_{H^{3m+3}} \equiv 0 \Leftrightarrow \text{$\hat{A}$
     is a satisfying assignment.}$
  \end{observation}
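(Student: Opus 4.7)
The plan is a direct pointwise analysis of the product
\[P_{(\hat A)}(\mathbf i,\mathbf b)=\hat C(\mathbf i,\mathbf b)\cdot\prod_{j=1}^{3}\bigl(\hat A(\mathbf i_j)-b_j\bigr)\]
on $H^{3m+3}$. Since $\hat C$ is the low-degree extension of the $0/1$-valued function $\bar C$, the two agree on $H^{3m+3}$, so I will work with $\bar C$ directly; and since $\hat A|_{H^m}=A$ takes values in $\{0,1\}$, the consistency factor $\hat A(\mathbf i_j)-b_j$ vanishes precisely when $b_j=A(\mathbf i_j)$.

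The first step is to dispense with the trivial points of the cube. By the defining property of $\bar C$, one has $\bar C(\mathbf i,\mathbf b)=0$ whenever $(\mathbf i_1,\mathbf i_2,\mathbf i_3)$ fails to index an actual gate of $C$ (with inputs $\mathbf i_1,\mathbf i_2$ and output $\mathbf i_3$), and also whenever some $b_j\notin\{0,1\}$; such points contribute $0$ to the product regardless of $\hat A$. This reduces the problem to points where $(\mathbf i_1,\mathbf i_2,\mathbf i_3)$ is a genuine gate triple and $\mathbf b\in\{0,1\}^3$.

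For the forward direction, assuming $A$ satisfies $C$, I would argue as follows at such a nontrivial point. If some $b_j$ already equals $A(\mathbf i_j)$, the corresponding consistency factor kills the product. Otherwise $b_j\neq A(\mathbf i_j)$ for all $j$, which in the Boolean setting forces $b_j=1-A(\mathbf i_j)$. But then the triple $(\bar b_1,\bar b_2,\bar b_3)$ that $\bar C$ examines for invalidity is precisely $(A(\mathbf i_1),A(\mathbf i_2),A(\mathbf i_3))$, a \emph{valid} configuration at that gate by hypothesis, so $\bar C(\mathbf i,\mathbf b)=0$ and the product still vanishes.

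For the converse I would exhibit an explicit witness by contrapositive. If $A$ violates some gate at $\mathbf i_3$ with inputs $\mathbf i_1,\mathbf i_2$, then $(A(\mathbf i_1),A(\mathbf i_2),A(\mathbf i_3))$ is an invalid configuration. Choosing $b_j:=1-A(\mathbf i_j)\in\{0,1\}\subseteq H$ makes the complemented triple $(\bar b_1,\bar b_2,\bar b_3)$ equal this invalid $A$-configuration, so $\bar C(\mathbf i,\mathbf b)=1$, while each consistency factor $\hat A(\mathbf i_j)-b_j=2A(\mathbf i_j)-1\in\{\pm 1\}$ is nonzero. Hence $P_{(\hat A)}(\mathbf i,\mathbf b)\neq 0$. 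The main bookkeeping obstacle will be being precise about how $\bar C$ extends to inputs with some $b_j\in H\setminus\{0,1\}$ and about the ``$\bar b$'' convention in its definition; once those are fixed the rest reduces to an inspection of the truth table of a single gate.
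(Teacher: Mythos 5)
Your case analysis is correct and is exactly the ``easy case-analysis'' the paper invokes without writing out: the key point in both is that on $H^{3m+3}$ the consistency factors force $b_j=1-A(\mathbf{i}_j)$, so $\bar{C}$ is evaluated precisely on the configuration that $A$ actually induces at the gate. You also correctly flag the only genuine bookkeeping issue (the convention that $\bar{C}$ vanishes when some $b_j\notin\{0,1\}$ or when the index triple is not a gate), which the paper likewise glosses over.
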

 % To see this note that when restricted to $H^{3m+3}$, $P_{(\hat{A})}$
 % is 0 iff
 % \begin{itemize}
 % \item $\bar{b}_1, \bar{b}_2, \bar{b}_3$ is NOT valid
 % \item OR $\bar{b}_1$ is NOT assigned to $\mathbf{i}_1$
 % \item OR $\bar{b}_2$ is NOT assigned to $\mathbf{i}_2$
 % \item OR $\bar{b}_3$ is NOT assigned to $\mathbf{i}_3$
 % \end{itemize}
 % Applying de Morgan's law, we see that this condition is equivalent to
 % the observation above.

 \subsection{The PCP Verifier}  
 Given a circuit $C$, the PCP proof consists of the oracles $\hat{A}:\F^{m} \rightarrow \F$ and 
 $P_{\hat{A}}:\F^{3m+3} \rightarrow \F$.

 The PCP verifier needs to make the following checks:
 \begin{itemize}
 \item $\hat{A}$ satisfies the low-degree test
 \item $P_{\hat{A}}$ satisfies the low-degree test
 \item $(P_{\hat{A}}, \hat{A})$ satisfies the rule described in \eqref{phat}.
 \item $P_{\hat{A}}$ is zero on the subcube $H^m$.
 \end{itemize}

 Given the low-degree test and zero-on-subcube test, it is straightforward to design a PCP that performs the above tests. 
 The PCP verifier expects as proofs the oracles $\hat{A}:\F^m \to \F, P_{\hat{A}}:\F^{3m+3}\to \F, q_1:\F^{3m+3}\to \F, \dots,q_{3m+3}:\F^{3m+3}\to \F$. The oracles $q_1,\dots,q_{3m+3}$ are the auxiliary oracles for performing the zero-on-subcube test. The verifier first picks a random line $\ell$ in $\F^{3m+3}$. It reads the value of all the oracles along the line $\ell$. It checks that the restrictions of all the oracles to the line is low-degree. It then checks that for each point $x$ on the line $l$, the ``zero-on-subcube'' test is satisified, namely $$P_{\hat{A}} (x) = \sum_{i=1}^{3m+3} q_i(x) g_H(x_i).$$
 It finally checks for each point on that \eqref{phat} is satisified. This completes the description of the PCP verifier.

 For want of time, we will skip the analysis of the Robust PCP (see \cite{BenSassonGHSV2006} and \cite{Harsha2004} for details). 

 Let us now compute the parameters of the PCP verifier. Here $n = H^m$ is the input length. 
 Let us assume $m$ = $\log(n)/\log\log(n)$. We can choose $|\F| = \poly(m |H|)$. 
 The PCP verifier makes $O(|\F|) =\poly\log n$ queries and the amount of randomness used is $O(m\log(|\F|))=O(\log n)$. The above construction yields a robust PCP of the following form
 \begin{theorem}\label{thm:polylog}\circuitsat\ has a robust PCP that uses $O(\log n)$ randomness, makes $\poly\log n$ queries and has $(1/\poly\log n$) robust soundness parameter.
 \end{theorem}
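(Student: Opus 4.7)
The plan is to instantiate the construction assembled in the previous section and the current one, set parameters so that all error terms are sub-constant, and then analyze it as a robust PCP by combining the robust soundness guarantees of the low-degree test and of the zero-on-subcube test with the Schwartz--Zippel lemma.

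First I would set $|H| = \Theta(\log n)$ and $m = \log n / \log \log n$, so that $|H^m| = n$ (padding the circuit with dummy gates if necessary), let $d = O(m|H|) = O(\log n)$ be the degree bound for all polynomial oracles, and choose $|\F| = \poly(m, |H|) = \poly\log n$ large enough that the error terms $m\epsilon$ (from the line-point test) and $d/|\F|$ (from Schwartz--Zippel) are both at most $1/\poly\log n$. The proof, as laid out above, consists of the tables $\hat{A}:\F^m \to \F$, $P_{\hat{A}}:\F^{3m+3} \to \F$, and auxiliary tables $q_1,\dots,q_{3m+3}:\F^{3m+3} \to \F$. The verifier picks a uniformly random line $\ell \subseteq \F^{3m+3}$ (using $O(m \log|\F|) = O(\log n)$ random bits), reads the restrictions of all oracles to $\ell$ (a total of $O(m \cdot |\F|) = \poly\log n$ queries), and accepts iff: (i) each restriction is a univariate polynomial of the appropriate degree; (ii) for every $x \in \ell$, $P_{\hat{A}}(x) = \sum_{i=1}^{3m+3} g_H(x_i)\, q_i(x)$; and (iii) for every $x \in \ell$, the gate identity \eqref{phat} holds, using the fact that the verifier can evaluate $\hat{C}$ locally from the explicit input circuit $C$.

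For completeness, if $C$ is satisfiable then the honest $\hat{A}$ is the low-degree extension of a satisfying assignment, so by the observation at the end of \lref[Section]{arithmet}, $P_{(\hat{A})}|_{H^{3m+3}} \equiv 0$, and hence by the Proposition in the previous lecture the required $q_i$'s of degree $\le d - |H|$ exist. All three conditions then hold at every point of $\F^{3m+3}$, so the verifier accepts with probability $1$.

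The main obstacle, and the heart of the argument, is the robust soundness analysis. Assume that for some proof the expected agreement of the local view on a random line $\ell$ with the set $\acc(\ell)$ of accepting views is at least $\delta$. I would now combine the two robustness lemmas from the previous lecture: the stronger line-point soundness gives a global low-degree polynomial $\tilde{A}:\F^m \to \F$ with $\agr(\hat{A}, \tilde{A}) \ge \delta - O(m\epsilon)$, and the zero-on-subcube soundness gives a global low-degree $\tilde{P}:\F^{3m+3} \to \F$ with $\tilde{P}|_{H^{3m+3}} \equiv 0$ and $\agr(P_{\hat{A}}, \tilde{P}) \ge \delta - O(m\epsilon + d/|\F|)$. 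Because condition (iii) is checked at all $|\F|$ points along the line and \eqref{phat} is itself a polynomial identity of total degree at most $\poly(m,|H|) \ll |\F|$, the same line-based agreement forces, via Schwartz--Zippel, the global identity
\[
\tilde{P}(\mathbf{x}_1,\mathbf{x}_2,\mathbf{x}_3, z_1,z_2,z_3) \;=\; \hat{C}(\mathbf{x}_1,\mathbf{x}_2,\mathbf{x}_3, z_1,z_2,z_3)\prod_{j=1}^{3}\bigl(\tilde{A}(\mathbf{x}_j) - z_j\bigr)
\]
to hold as polynomials, up to a further $O(d/|\F|)$ loss. Restricting to $H^{3m+3}$ and using $\tilde{P}|_{H^{3m+3}} \equiv 0$ together with the defining property of $\hat{C}$ then shows that $\tilde{A}|_{H^m}$ is a satisfying assignment, contradicting $C \notin \circuitsat$. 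Setting $\delta = 1/\poly\log n$ larger than all the accumulated error terms $O(m\epsilon + d/|\F|)$ yields robust soundness $1/\poly\log n$, proving \lref[Theorem]{thm:polylog}. The delicate step, and what I expect to be the main technical difficulty, is tracking how the robustness (average, rather than worst-case, agreement on a line) propagates simultaneously through all four checks without losing a union-bound factor that would wash out the gap; this is exactly why the two underlying tests are formulated in their ``expected agreement'' form in the previous lecture.
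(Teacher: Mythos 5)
Your proposal matches the paper's construction and parameter choices essentially verbatim: the same oracles $\hat{A}, P_{\hat{A}}, q_1,\dots,q_{3m+3}$, the same line-based verifier performing the low-degree, zero-on-subcube, and gate-identity checks, and the same setting $m = \log n/\log\log n$, $|\F| = \poly(m|H|)$ yielding $O(\log n)$ randomness and $\poly\log n$ queries. The paper explicitly skips the robust soundness analysis (deferring to \cite{BenSassonGHSV2006,Harsha2004}), and your sketch of it — decoding via the robust line-point and zero-subcube soundness lemmas plus Schwartz--Zippel, with the caveats you flag about list-decoding and avoiding union-bound losses — is the standard argument those references carry out.
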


 Observe that the robust PCP constructed in the above theorem has polylog query complexity and not constant, as we had originally claimed. In the next section, we give a high level outline of how to reduce number of queries (from polylog to constant) using a composition technique
 originally designed by Arora and Safra~\cite{AroraS1998}.

 \subsection{PCP Composition}
 In this section, we describe briefly the PCP composition (due to Arora and Safra~\cite{AroraS1998})
 that helps in reducing the number of queries made by the verifier form $\poly \log(n)$ to constant without affecting the other parameters too much.  Recall that the PCP verifier on input $x$ and an oracle access to a proof $\phi$,
 tosses some random coins and based on the randomness queries some locations. Denote the query locations 
 as the set $I$.  The PCP verifier evaluates a \circuitsat~
 predicate $\phi(\Pi|_I)$ and accepts or rejects based on the outcome of the predicate.  
 The PCP constructed in the previous section achieves $O(\log(n))$ randomness and $O(\poly \log(n))$ 
 number of queries.   We would like to reduce the query complexity from $\poly\log n$ to $O(1)$. How do we do this? How does one check that $\Pi|_I$ satisfies $\phi$ without reading all of $\Pi|_I$ and only reading a constant number of locations in $\Pi|_I$. Arora and Safra suggested that use another PCP to recursively perform this check: $\phi|_I = 1$.
 Let us denote the original PCP verifier as the {\it outer} verifier and the one that checks the predicate $\phi$
 without reading the entire set $I$ as the {\it inner} verifier.  The {\it inner} verifier gets the circuit $\phi$
 and $(\Pi|_I)$ as inputs.  But if it reads all the input bits then the query complexity is not reduced.  
 Instead, the {\it inner} gets as input the circuit and an oracle to the proof of $(\Pi|_I)$ and it now needs to check that the proof in this location satisfies $\phi$. This requires some care as a simple recursion will not do the job. A composition in the context of robust PCPs (or equivalently 2-query projective PCPs) was first shown by Moshkovitz and Raz~\cite{MoshkovitzR2008b}. A more generic and simpler composition paradigm for was then shown by Dinur and Harsha~\cite{DinurH2009}. For want of time, we will skip the details of the composition and conclude on the note that applying the composition theorem of Dinur and Harsha to the robust PCP constructed in \lref[Theorem]{thm:polylog}, one can obtain the constant query PCP with arbitrarily small error as claimed before (see~\cite{DinurH2009} for the details of this construction).

 % \subsection{Decodable PCP}
 % Recently, a new variant of PCP named as \dpcp~ was introduced by Dinur and Harsha~\cite{DinurH2009}. 
 % A PCP for \circuitsat~ is an encoding of a satisfying assignment so that the PCP verifier can 
 % probabilistically check the proof.  A \dpcp~ for \circuitsat~ is a satisfying assignment that can be probabilistically 
 % verified as well as locally decoded.  They defined a probabilistic algorithm called the \pcpdecoder~ that is given an
 % input circuit $C$, an oracle access to the \dpcp~ proof and an index $i$.  The \pcpdecoder , based on the given input and 
 % the randomness $r$, computes a window $I$ and a function $f$. It accepts if the function $f$ evaluates to the $i$th symbol
 % of the satisfying assignment for $C$.  Otherwise it rejects.  This variant allows a new proof composition that 
 % provides a alternate and simpler proof for the recent breakthrough result on 2-query PCP by Moshkovitz and Raz~\cite{MoshkovitzR2008b}.  

 %\begin{thebibliography}{AKS}

\cleardoublepage

\lecture{7}{21 July, 2009}{Dev Desai}{Subhash
  Khot}{\texorpdfstring{\hastad's $3$-Bit PCP}{Hastad's 3-bit PCP}}

%%%% body goes in here %%%%

Previously, we saw the proof of the PCP theorem and its connection to proving inapproximability results. The PCPs that we have seen so far have constant number of queries, but over a large alphabet. Now we are interested in designing useful PCPs while keeping the number of query bits low. The purpose of this lecture is to present such a PCP construction, which leads to optimal inapproximability results for various problems such as $\maxsat$ and $\maxlin$.

\section{Introduction}

The PCP theorem and Raz's parallel repetition theorem~\cite{Raz1998} give the $\np$-hardness of a problem called $\labelcover$ (which we will define shortly). This problem is the canonical starting point for reductions that prove inapproximability results. Such reductions can be broadly categorized into two:
\begin{description}
	\item[Direct reductions.] These have been successful in proving inapproximability for network problems, lattice based problems, etc.
	\item[Long code based reductions.] Salient examples of such results can be found in the paper of Bellare, Goldreich, and Sudan~\cite{BellareGS1998} and \hastad~\cite{Hastad2001}.
\end{description}

Long code based reductions have been successful for many important problems like $\maxcut$, albeit with a catch: many of these results depend on conjectures like the Unique Games Conjecture~\cite{Khot2002}, which will be the subject of the next lecture. The focus of this lecture is to prove the powerful result of \hastad's, which can be stated as follows:
\begin{theorem}[\hastad's $3$-bit PCP~\cite{Hastad2001}] \label{th:hastad}
 For every $\eps,\eta > 0$, $\np$ has a PCP verifier that uses $O(\log n)$ random bits, queries exactly $3$ bits to the proof, evaluates a linear predicate on these $3$ bits, and has completeness $1-\eps$ and soundness $1/2+\eta$.
\end{theorem}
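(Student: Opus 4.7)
The plan is to reduce from the low-error \labelcover\ instance guaranteed by the strong Projection Games Theorem (\lref[Theorem]{thm:razverifier}), encoding each label via the Long Code. Given a projection-games instance with alphabets $\Sigma_A, \Sigma_B$ and projections $\pi_e \colon \Sigma_A \to \Sigma_B$, the PCP proof will consist, for every vertex $v$, of a folded table $f_v \colon \bits^{\Sigma_v} \to \bits$ claimed to be the Long Code of $v$'s label; the folding condition $f_v(-x) = -f_v(x)$ forces $\fc{f_v}(\emptyset) = 0$. The verifier samples an edge $e = (u,w)$ uniformly, samples $x \in \bits^{\Sigma_B}$, $y \in \bits^{\Sigma_A}$ uniformly, and a noise vector $\mu \in \bits^{\Sigma_A}$ with each coordinate independently $-1$ with probability $\eps$; it sets $z_i = x_{\pi_e(i)} \cdot y_i \cdot \mu_i$ and accepts iff $f_u(x) \cdot f_w(y) \cdot f_w(z) = 1$. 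This is a single parity predicate on three queried bits, with $O(\log n)$ randomness coming from edge sampling and from $x, y, \mu$.

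For completeness, assume the starting instance has value $1$ under labelling $\ell$ and each $f_v$ is the honest Long Code $\mathrm{LC}_{\ell(v)}$. Writing $a = \ell(u)$, $b = \ell(w)$, and using $\pi_e(b) = a$,
\[ f_u(x) \cdot f_w(y) \cdot f_w(z) = x_a \cdot y_b \cdot (x_{\pi_e(b)} y_b \mu_b) = \mu_b, \]
which equals $+1$ with probability $1 - \eps$, giving completeness $1 - \eps$.

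For soundness, suppose the verifier accepts with probability at least $1/2 + \eta$, so $\expect[f_u(x) f_w(y) f_w(z)]$ averaged over edges is at least $2\eta$. Expanding in the $\bits$-Fourier basis, using independence of $x, y, \mu$ and $\expect[\chi_\beta(\mu)] = (1-2\eps)^{|\beta|}$, one obtains
\[ \expect_{x,y,\mu}\bigl[f_u(x) f_w(y) f_w(z)\bigr] = \sum_{\beta \subseteq \Sigma_A} \fc{f_u}(\pi_e(\beta)) \cdot \fc{f_w}(\beta)^2 \cdot (1-2\eps)^{|\beta|}, \]
where $\pi_e(\beta) \subseteq \Sigma_B$ denotes the symmetric-difference image under $\pi_e$. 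The principal obstacle is converting a large average value of this Fourier sum into a \labelcover\ labelling that satisfies many edges. The canonical decoding is: at each $w$ sample $\beta$ with probability $\fc{f_w}(\beta)^2$ and output a uniform $b \in \beta$; at each $u$ sample $\alpha$ with probability proportional to $\fc{f_u}(\alpha)^2$ and output a uniform $a \in \alpha$. A Cauchy--Schwarz argument, combined with the truncation provided by $(1-2\eps)^{|\beta|}$ (which effectively restricts attention to $|\beta| = O(1/\eps)$), shows that this random labelling satisfies an $\Omega(\eta^2 \eps)$ fraction of \labelcover\ edges in expectation; folding is essential so that the sampled $\alpha, \beta$ are nonempty and the uniform-element decoding is well-defined. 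Invoking Raz's verifier with soundness parameter $\delta \ll \eta^2 \eps$ then contradicts the assumption that the starting instance is a \NO-instance, completing the argument.
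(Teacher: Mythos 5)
Your construction coincides with the one in the lecture: the outer verifier is the low-error projection game, the inner encoding is the (folded) Long Code, and your $3$-bit test $f_u(x)f_w(y)f_w(z)=1$ with $z=(x\circ\pi_e)y\mu$ is precisely \hastad's Consistency Test, with the same completeness computation and the same Fourier-expansion soundness analysis followed by decoding labels from the Fourier weights. The only (immaterial) difference is the final decoding bound --- your Cauchy--Schwarz argument with the $1/|\beta|$ weighting gives $\Omega(\eta^2\eps)$, whereas the notes truncate to sets of size $O(\tfrac{1}{\eps}\log\tfrac{1}{\eta})$ and get $\eps^2\eta^3/\log^2(1/\eta)$ --- and either lower bound suffices for choosing the label-cover soundness $\delta$.
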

We can view the bits in this PCP proof as boolean variables. Then the test of the verifier can be interpreted as a system of linear equations, each equation corresponding to the triplet of bits tested for a given random string. Thus, we immediately obtain the hardness for the $\maxlin$ problem (where an instance of $\maxlin$ is a system of linear equations modulo $2$ with at most $3$ variables per equation, and we are interested in maximizing the fraction of equations that can simultaneously be satisfied).
\begin{corollary}
For every $\eps,\eta > 0$, given an instance of $\maxlin$, it is $\np$-hard to tell if $1-\eps$ fraction of the equations are satisfiable by some assignment or that no assignment satisfies more than $1/2+\eta$ fraction of the equations.
\end{corollary}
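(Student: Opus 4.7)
The plan is to observe that Håstad's 3-bit PCP directly yields a reduction from any NP language $L$ to $\maxlin$, producing the claimed hardness gap. First I would fix the verifier guaranteed by \lref[Theorem]{th:hastad} for parameters $\eps,\eta$: on input $x$, it uses $O(\log n)$ random bits, so there are only $N = 2^{O(\log n)} = \poly(n)$ possible random strings; for each such string $r$ it reads exactly three proof bits $\pi_{i(r)}, \pi_{j(r)}, \pi_{k(r)}$ and evaluates a linear predicate on them. Since every linear predicate over $\mathbb{F}_2$ on three bits has the form $y_1 + y_2 + y_3 = b_r$ for some $b_r \in \{0,1\}$ (possibly after absorbing negations into $b_r$), the check at randomness $r$ is exactly a $\maxlin$ equation.

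Next I would build the reduction. Introduce one variable $x_\ell$ for each location $\ell$ of the PCP proof (there are at most $\poly(n)$ such locations because the running time is polynomial). For each random string $r$ add the single equation
\[
x_{i(r)} + x_{j(r)} + x_{k(r)} = b_r \pmod{2}.
\]
This produces a $\maxlin$ instance $\Phi_x$ with $N = \poly(n)$ equations in polynomial time. An assignment to the variables is precisely a proof $\pi$, and the fraction of equations satisfied by this assignment equals the probability that the verifier accepts $\pi$ on a uniformly random string $r$.

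Completeness and soundness then follow immediately from the PCP guarantees. If $x \in L$, the completeness clause provides a proof $\pi$ accepted with probability at least $1-\eps$, so the corresponding assignment satisfies at least a $(1-\eps)$ fraction of the equations of $\Phi_x$. If $x \notin L$, the soundness clause says every proof is accepted with probability at most $1/2+\eta$, so every assignment satisfies at most a $(1/2+\eta)$ fraction of the equations. Since distinguishing $x \in L$ from $x \notin L$ is NP-hard, distinguishing the two cases for $\Phi_x$ is NP-hard, which is exactly the claim.

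There is essentially no obstacle beyond unpacking definitions: the only subtlety is verifying that an arbitrary linear predicate on three bits over $\mathbb{F}_2$ can indeed be written as a single equation $y_1+y_2+y_3 = b_r$, which I would handle by noting that degenerate linear predicates (depending on fewer than three bits) can be padded with dummy variables or, without loss of generality, assumed to depend on all three queried bits since Håstad's construction is set up that way; the reduction works identically either way.
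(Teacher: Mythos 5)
Your proposal is correct and is exactly the argument the paper intends: the text preceding the corollary states that the verifier's linear test on each triplet of queried bits yields one $\maxlin$ equation per random string, so the acceptance probability of a proof equals the fraction of equations satisfied by the corresponding assignment, and the gap transfers directly from \lref[Theorem]{th:hastad}. Your additional remark about handling degenerate linear predicates is a reasonable bit of care but does not change the route.
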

In other words, getting an efficient algorithm with approximation guarantee better than $(1/2+\eta)/(1-\eps)$, which is $\approx 1/2$ (since we can choose $\eps$ and $\eta$ to be arbitrarily small), for $\maxlin$ is impossible unless $\p=\np$. This result is tight since a random assignment satisfies half of the equations in expectation. Note that the imperfect completeness in the result is essential if $\p\neq\np$, since Gaussian elimination can be used to efficiently check whether any system of linear equations can be completely satisfied.

We now go on to the proof of \lref[Theorem]{th:hastad}. Here, the concepts of Proof composition, Long codes and Fourier analysis play a pivotal role. We will start with a high-level picture of the PCP and work our way down to the actual $3$-bit test.

\section{Proof Composition}

The standard method to construct a Long code based PCP is by composing an Outer PCP with an Inner PCP. These two concepts are explained below.

\subsection*{The Outer PCP}

The Outer PCP is based on a hard instance of the $\labelcover$ problem.
\begin{definition}
A $\labelcover$ problem $\lc(G(V,W,E),[m],[n],\{\pi_{vw}|(v,w)\in E\})$ consists of:
\begin{enumerate}
 \item A  bipartite graph $G(V,W,E)$ with bipartition $V$, $W$.
 \item Every vertex in $V$ is supposed to get a label from a set $[m]$ and every vertex in $W$ is supposed to get a label from a set $[n]$ ($n\geq m$).
 \item Every edge $(v,w)\in E$ is associated with a projection $\pi_{vw}:[n]\mapsto[m]$.
\end{enumerate}
We say that a labeling $\phi:V\mapsto[m]$, $\phi:W\mapsto[n]$ satisfies an edge $(v,w)$ if $\pi_{vw}(\phi(w))=\phi(v)$. The goal is to find a labeling that maximizes the number of satisfied edges.
\end{definition}
Let us define $\opt(\lc)$ to be the maximum fraction of edges that are satisfied by any labeling. As mentioned earlier, the hardness of $\labelcover$ is obtained by combining the PCP theorem~\cite{FeigeGLSS1996, AroraS1998, AroraLMSS1998} with Raz's parallel repetition theorem~\cite{Raz1998}.
\begin{theorem} \label{th:lcover}
 For every $\delta>0$, there exist $m$ and $n$ such that given a $\labelcover$ instance $\lc(G,[m],[n],\{\pi_{vw}\})$, it is $\np$-hard to tell if $\opt(\lc)=1$ or $\opt(\lc)\leq\delta$.
\end{theorem}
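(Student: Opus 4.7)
The plan is a two-stage reduction. First I will invoke the (weak) PCP theorem for Label-Cover already proved in the excerpt (Theorem~\ref{thm:weakpcp}, or equivalently the robust/2-prover projection PCP built in Lectures 5--6), which gives some absolute constant $\delta_0 < 1$, constants $m_0, n_0$, and a polynomial-time reduction from \circuitsat\ to $\labelcover$ instances $\lc_0(G_0, [m_0], [n_0], \{\pi_e^{(0)}\})$ such that \YES-instances of \circuitsat\ map to $\opt(\lc_0) = 1$ and \NO-instances map to $\opt(\lc_0) \leq \delta_0$. This is the starting ``constant-gap'' hardness. It does not yet suffice because $\delta_0$ is a fixed constant, while the theorem requires arbitrarily small soundness $\delta$.

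Second, I will amplify the gap using Raz's parallel repetition theorem. Given $\delta > 0$, pick an integer $k = k(\delta, \delta_0, m_0, n_0)$ (to be set in a moment) and form the $k$-fold parallel repetition $\lc_0^{\otimes k}$: its left vertex set is $V^k$ with alphabet $[n_0]^k$, its right vertex set is $W^k$ with alphabet $[m_0]^k$, edges $((v_1,\dots,v_k),(w_1,\dots,w_k))$ exist iff every $(v_i,w_i)$ is an edge of $G_0$, and the projection acts coordinatewise, $\pi((a_1,\dots,a_k)) = (\pi^{(0)}_{v_i w_i}(a_i))_{i=1}^k$. This remains a projection game, i.e.\ a $\labelcover$ instance. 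Completeness is trivial: if a labeling satisfies every edge of $\lc_0$, then its coordinatewise product satisfies every edge of $\lc_0^{\otimes k}$, so $\opt(\lc_0) = 1 \Rightarrow \opt(\lc_0^{\otimes k}) = 1$.

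For soundness, Raz's theorem~\cite{Raz1998} guarantees a function $c = c(\delta_0, m_0, n_0) < 1$ such that $\opt(\lc_0) \leq \delta_0$ implies $\opt(\lc_0^{\otimes k}) \leq c^k$. Choose $k$ large enough that $c^k \leq \delta$; then set $m := m_0^k$ and $n := n_0^k$, both constants depending only on $\delta$. The reduction from \circuitsat\ to the resulting $\lc(G,[m],[n],\{\pi_{vw}\})$ runs in time polynomial in the input (since $k$ is a fixed constant and $|V(G_0)^k| \cdot |W(G_0)^k|$ is polynomial in the original label-cover instance's size), and distinguishing $\opt(\lc) = 1$ from $\opt(\lc) \leq \delta$ therefore solves \circuitsat, proving \NP-hardness.

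The only nontrivial ingredient is Raz's parallel repetition theorem itself, which is the main obstacle and is taken as a black box here: the naive bound $\opt(\lc^{\otimes k}) \leq \opt(\lc)^k$ is \emph{false} (there are well-known examples where the value does not drop multiplicatively under repetition, because provers can correlate their answers across coordinates), and Raz's theorem is needed to recover an exponential decay with a worse base $c$ depending on the alphabet size. Everything else — projection-preservation under coordinatewise product, perfect completeness, and polynomial running time for constant $k$ — is routine. The resulting alphabet sizes $m = m_0^k$ and $n = n_0^k$ grow (depending on $\delta$), matching the quantifier order in the statement: $m,n$ are allowed to depend on $\delta$.
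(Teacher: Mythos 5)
Your proposal is correct and follows exactly the route the paper indicates for this theorem: establish a constant-gap \NP-hardness for Label-Cover from the (weak) PCP theorem, then amplify the soundness to an arbitrary $\delta$ via Raz's parallel repetition theorem, taken as a black box. The only discrepancy is cosmetic — you swap which side of the bipartition carries the alphabet $[m]$ versus $[n]$ relative to the paper's convention — and the rest (coordinatewise projections, perfect completeness, polynomial size for constant $k$) is handled correctly.
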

It is useful to think of the above theorem as a PCP that makes $2$ queries over the constant (but large) alphabets of size $m$ and $n$. The verifier just picks an edge at random, queries the labels of the endpoints of this edge and accepts if and only if these labels satisfy the edge. This PCP, which is based on a hard instance of $\labelcover$ forms our Outer PCP.

\subsection*{The Inner PCP}

The Outer PCP verifier expects some labels as the answers to its two queries. We will \textit{compose} this verifier with an Inner PCP verifier, which expects the proof to contain some encoding (in our case, the Long Code) of the labels, rather than the labels themselves. We choose to have an encoding of the labels so that we can check just a few bits of the proof and tell with a reasonable guarantee whether the labeling is valid.

Thus, the Inner verifier expects large bit strings (supposed to be encodings) for each vertex in $G$. Let the edge picked by the Outer verifier be $(v,w)$. Then the Inner verifier checks $1$ bit from $g_v$ (the supposed encoding of the label of $v$) and $2$ bits from $f_w$ (the supposed encoding of the label of $w$). Note that the Inner verifier is trying to simulate the Outer PCP. It needs to check the following two things in one shot:
\begin{description}
 \item[Codeword Test] The strings $f_w$ and $g_v$ are correct encodings of \textit{some} $j\in[n]$ and $i\in[m]$.
 \item[Consistency Test] These $i$ and $j$ satisfy $\pi(j)=i$.
\end{description}
We can therefore convert a PCP which asked $2$ large queries to a PCP which asks $3$ `bit' queries. We now need to show the following two implications:
\begin{enumerate}
 \item (\textit{Completeness}) $\opt(\lc)=1\implies\exists\ \text{Proof}\quad\Pr[\acc]\geq 1-\eps$.
 \item (\textit{Soundness}) $\opt(\lc)\leq\delta\implies\forall\ \text{Proofs}\quad\Pr[\acc]< \frac{1}{2}+\eta$.
\end{enumerate}
The completeness follows by the design of the PCP and should be regarded as a sanity check on the construction. The soundness will be proved by contraposition. We will assume that $\Pr[\acc]\geq 1/2+\eta$ and then decode the labels to satisfy a lot of edges.

\section{The Long Code and its Test}

\hastad's Inner PCP verifier does the codeword test and consistency test in one shot. For clarity, let us first analyze just the codeword test. We will see how to incorporate the consistency test in the next section. For the codeword test, we need to look at the particular encoding that the Inner PCP will use: the Long Code. It is defined below.
\begin{definition}
 The Long Code encoding of $j\in[n]$ is defined to be the truth table of the boolean dictatorship function on the $j$th coordinate, $f:\bits^n\mapsto\bits$ such that $f(x_1,\ldots,x_n)=x_j$.
\end{definition}
Some observations are in order. Note that we are representing bits by $\bits$ and not $\{0,1\}$. This is done just for the sake of clarity, since the calculations done with $\bits$ are less messy. Also note that the Long Code is huge. An element $j\in[n]$ will require $\log n$ bits to represent, but the Long Code of $j$ requires $2^n$ bits, a doubly-exponential blowup! We can get away with this because $n$, the alphabet size, is a constant.

Now for a short aside on Fourier analysis. Recall that for each $S\subseteq[n]$, the Fourier character $\chi_S:\bits^n\mapsto\bits$ is defined as
\[
\chi_S(x)=\prod_{i\in S}x_i.
\]
These characters form an \textit{orthonormal basis} for the set of Boolean functions $f:\bits^n\mapsto\bits$. All such functions can therefore be written in terms of this basis, called the \textit{Fourier expansion}, as
\[
f(x)=\sum_{S\subseteq [n]}\fc{f}(S)\chi_S(x).
\]
where $\fc{f}(S)$ is called the \textit{Fourier coefficient} of set $S$. For Boolean functions, these coefficients satisfy Parseval's identity, namely $\sum_{S\subseteq [n]}\fc{f}(S)^2=1$.

Back to the Long Code. In terms of Fourier expansion, the Long Code of element $j$ is the same as the function $\chi_{\{j\}}$. Thus it is simple to write down, since the only non-zero coefficient is $\fc{f}(\{j\}) = 1$. The Long Code then fits into a general class of functions which have \textit{high Fourier coefficients of low order}. This fact will be useful in the soundness analysis of the test.

On to the codeword test. This will essentially be a linearity test, that is, we will check whether 
\[
f(x+y)=f(x)+f(y). 
\]
Since we are in the $\bits$ domain, this linearity translates to checking whether
\[
f(xy)=f(x)f(y).
\]
We will also introduce a small randomized perturbation in the linearity test. This is done to improve the overall soundness of the test. 
\begin{definition}
An $\eps$-perturbation vector is a string of $\pm 1$ bits, where each bit is independently set to $-1$ with probability $\eps$ and $1$ with probability $1-\eps$.
\end{definition}
The final codeword test is described below. It is a randomized $3$-bit linear test that checks whether the input function is close to a Long Code.

\noindent {\bf Long Code Test:}{\em \mbox{}

Input: Function $f:\bits^n\mapsto\bits$ and error parameter $\eps$.

Test: Pick $x,y\in\bits^n$ at random. Pick an $\eps$-perturbation vector $\mu\in\bits^n$ and let $z=xy\mu$. Accept if and only if
\[
 f(z)=f(x)f(y).
\]}

Let us analyze the completeness and soundness of this test. We have the following theorem.
\begin{theorem} \label{th:lc}
Given a truth table of a function $f:\bits^n\mapsto\bits$ and $\eps >0$, the following are true for the Long Code Test:
\begin{enumerate}
 \item If $f=\chi_{\{j\}}$ for some $j$, then $\Pr[\acc]=1-\eps$.
 \item If $\Pr[\acc]\geq 1/2+\eta$, then $f$ ``resembles'' a Long Code in the following sense: $\exists\ S\subseteq[n]$ such that $|\fc{f}(S)|\geq\eta$ and $|S|\leq \bigO((1/\eps)\log(1/\eta))$, in other words, there exists a large Fourier coefficient of low order.
\end{enumerate}
\end{theorem}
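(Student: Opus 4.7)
The plan is to analyze both parts via direct Fourier expansion, which is the standard toolkit for such $\bits$-valued tests. For completeness, I would simply plug $f = \chi_{\{j\}}$ into the test: the predicate becomes $x_j y_j \mu_j = x_j \cdot y_j$, which holds iff $\mu_j = 1$. Since $\mu_j = -1$ with probability $\epsilon$ and $+1$ otherwise, $\Pr[\mathrm{acc}] = 1-\epsilon$.

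For soundness, the first step is to arithmetize the acceptance indicator. For $a,b \in \bits$, $\mathbf{1}[a=b] = (1+ab)/2$, so
\[
\Pr[\mathrm{acc}] \;=\; \tfrac{1}{2} + \tfrac{1}{2}\,\expect_{x,y,\mu}\bigl[f(x)\,f(y)\,f(xy\mu)\bigr].
\]
Next, expand each occurrence of $f$ in the Fourier basis and use that $\chi_U(xy\mu) = \chi_U(x)\chi_U(y)\chi_U(\mu)$, together with the independence of $x$, $y$, $\mu$ and orthonormality of characters. The expectation over $x$ forces $S = U$, the expectation over $y$ forces $T = U$, and $\expect[\chi_U(\mu)] = (1-2\epsilon)^{|U|}$ since the coordinates of $\mu$ are independent with mean $1-2\epsilon$. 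This collapses the triple sum to
\[
\expect\bigl[f(x)f(y)f(xy\mu)\bigr] \;=\; \sum_{S\subseteq [n]} \fc{f}(S)^{3}\,(1-2\epsilon)^{|S|}.
\]

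Thus the hypothesis $\Pr[\mathrm{acc}] \geq 1/2 + \eta$ yields $\sum_S \fc{f}(S)^3 (1-2\epsilon)^{|S|} \geq 2\eta$. The final step is a two-way split at threshold $k := \tfrac{1}{2\epsilon}\ln(2/\eta)$, which will be the only slightly delicate calculation. For a contradiction, assume $|\fc{f}(S)| < \eta$ for every $S$ with $|S| \leq k$. Then, using $\fc{f}(S)^3 \leq |\fc{f}(S)|\cdot \fc{f}(S)^2$ and Parseval,
\[
\sum_{|S|\leq k} \fc{f}(S)^3 (1-2\epsilon)^{|S|} \;\leq\; \eta \sum_S \fc{f}(S)^2 \;\leq\; \eta,
\]
while for the high-degree tail, $(1-2\epsilon)^{|S|} \leq e^{-2\epsilon k} \leq \eta/2$ together with $|\fc{f}(S)| \leq 1$ gives
\[
\Bigl|\sum_{|S|>k} \fc{f}(S)^3 (1-2\epsilon)^{|S|}\Bigr| \;\leq\; \tfrac{\eta}{2}\sum_S \fc{f}(S)^2 \;\leq\; \tfrac{\eta}{2}.
\]
Summing gives a bound of $3\eta/2 < 2\eta$, contradicting the hypothesis. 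Hence some $S$ with $|S| \leq k = O((1/\epsilon)\log(1/\eta))$ satisfies $|\fc{f}(S)| \geq \eta$, as claimed.

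The only real obstacle is the tail estimate; the rest is bookkeeping. The trick $\fc{f}(S)^3 \leq |\fc{f}(S)|\cdot \fc{f}(S)^2$ is what converts the hypothesized Fourier bound into a Parseval-controlled quantity, and the noise factor $(1-2\epsilon)^{|S|}$ coming from the perturbation $\mu$ is precisely what kills the high-degree contributions and forces the winning character to have bounded size.
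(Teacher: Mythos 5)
Your proof is correct and follows essentially the same route as the paper's: arithmetize the acceptance probability as $\tfrac12+\tfrac12\expect[f(x)f(y)f(xy\mu)]$, expand in the Fourier basis, and use independence plus orthonormality to collapse the triple sum to $\sum_S \fc{f}(S)^3(1-2\eps)^{|S|}\ge 2\eta$. The only divergence is the final extraction step --- the paper reads the sum as a convex combination with weights $\fc{f}(S)^2$ to get a single $S$ with $\fc{f}(S)(1-2\eps)^{|S|}\ge 2\eta$ (hence simultaneously a large coefficient and $|S|=O(\eps^{-1}\log(1/\eta))$), while you split at a degree threshold and argue by contradiction; both are valid and yield the stated bounds.
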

\begin{proof}
To prove the completeness part, assume that $f=\chi_{\{j\}}$ for some $j\in[n]$, that is, it is a Long Code. Then the test will ``Accept'' if and only if
\[
 z_j=x_jy_j\Longleftrightarrow x_jy_j\mu_j=x_jy_j\Longleftrightarrow\mu_j=1
\]
which happens with probability $1-\eps$.

For the soundness analysis, assume that $\Pr[\acc]\geq 1/2+\eta$. We can write this probability in terms of the test as
\[
 \Pr[\acc]=\expect_{x,y,\mu}\left[\frac{1+f(z)f(x)f(y)}{2}\right].
\]
This is a standard PCP trick that is used often in analyzing such tests. Substituting for $\Pr[\acc]$ and using the Fourier expansion of $f$, 
\begin{align*}
 \frac{1}{2}+\eta &\leq \frac{1}{2}+\frac{1}{2}\expect_{x,y,\mu}[f(xy\mu)f(x)f(y)] \\
 2\eta &\leq \expect_{x,y,\mu}\left[ \left( \sum_{S\subseteq[n]}\fc{f}(S)\chi_S(xy\mu) \right) \left( \sum_{T\subseteq[n]}\fc{f}(T)\chi_T(x) \right)\left( \sum_{U\subseteq[n]}\fc{f}(U)\chi_U(y) \right)\right] \\
  &= \sum_{S,T,U}\fc{f}(S)\fc{f}(T)\fc{f}(U)\expect_{x,y,\mu}\left[ \chi_S(xy\mu)\chi_T(x)\chi_U(y) \right] \\
  &= \sum_{S,T,U}\fc{f}(S)\fc{f}(T)\fc{f}(U)\expect_{x,y,\mu}\left[ \chi_S(x)\chi_S(y)\chi_S(\mu)\chi_T(x)\chi_U(y) \right] \\
  &= \sum_{S,T,U}\fc{f}(S)\fc{f}(T)\fc{f}(U)\expect_x\left[ \chi_S(x)\chi_T(x) \right]\expect_y\left[ \chi_S(y)\chi_U(y) \right]\expect_\mu\left[ \chi_S(\mu) \right].
\end{align*}
We can simplify the last expression by using orthonormality of the $\chi$'s to argue that
\begin{equation*}
\expect_x[\chi_S(x)\chi_T(x)] = \expect_x\left[ \prod_{i\in S}x_i\prod_{j\in T}x_j \right] = \prod_{i\in S\Delta T}\expect_x[x_i]=\begin{cases}
1 \quad\text{if }S=T, \\
0 \quad\text{otherwise}
\end{cases}\end{equation*}
where $S\Delta T$ stands for the symmetric difference between sets $S$ and $T$. Thus the terms in the summation will vanish unless $S=T=U$. We then get
\begin{equation}
2\eta\leq\sum_S\fc{f}(S)^3\expect_\mu[\chi_S(\mu)].
\label{eq:fc} 
\end{equation}
As an aside, the Long Code Test without perturbations was analyzed long before \hastad~by Blum, Luby and Rubinfeld~\cite{BlumLR1993}. In that case, we just get
\[
 \sum_S\fc{f}(S)^3\geq 2\eta \implies |\fc{f}_{\max}| \sum_S\fc{f}(S)^2\geq 2\eta \implies \exists\ \text{large }|\fc{f}| \quad(\text{since }\sum\fc{f}^2=1).
\]
Now, with perturbation, we have
\[
 \expect_\mu\left[ \chi_S(\mu) \right]=\expect_\mu\left[ \prod_{i\in S}\mu_i \right]=[1(1-\eps)+(-1)\eps]^{|S|}=(1-2\eps)^{|S|}.
\]
Substituting this in Inequality~\eqref{eq:fc}, we get
\[
 2\eta\leq\sum_S\fc{f}(S)^3(1-2\eps)^{|S|}.
\]
We can again think of the above sum as a convex combination (since $\sum\fc{f}^2=1$). This implies that there exists an $S$ such that $\fc{f}(S)(1-2\eps)^{|S|}\geq 2\eta$. Thus we have
\[
 \Pr[\acc]\geq\frac{1}{2}+\eta\implies\exists\ S:\quad |\fc{f}(S)|\geq 2\eta\quad\text{and}\quad |S|\leq\bigO\left(\frac{1}{\eps}\log\frac{1}{\eta}\right). \qedhere
\]
\end{proof}
The Long Code test can be thought of as an analog of the concept of \textit{gadget} in $\np$ reductions. In particular, \lref[Theorem]{th:lc} is very important and is the crux of the PCP. We will prove similar results in the analysis of later tests. 

\section{Incorporating Consistency}

Let us restate what we want from our $3$-bit test. Recall that the inputs to the Inner PCP verifier are two supposed Long codes, $g$ and $f$, of two vertices $v$ and $w$, and the projection $\pi$ between them. We have to check two things in one shot:
\begin{enumerate}
	\item $g$ and $f$ are Long codes of some $i\in[m]$ and $j\in[n]$.
	\item $\pi(j)=i$.
\end{enumerate}
We are going to do this by reading $1$ bit from $g$ and $2$ bits from $f$ and applying a $3$-bit linear test similar to the Long code test. 

\noindent {\bf Consistency Test:}{\em \mbox{}

Input: Functions $g:\bits^m\mapsto\bits$ and $f:\bits^n\mapsto\bits$, projection $\pi:[n]\mapsto[m]$, and error parameter $\eps$.

Test: Pick $x\in\bits^m$, $y\in\bits^n$ at random. Pick an $\eps$-perturbation vector $\mu:\bits^n$. Let $z=(x\circ\pi)y\mu$. Accept if and only if
\[
f(z)=g(x)f(y).
\]}

In the above test, the vector $(x\circ\pi)$ is defined as $(x\circ\pi)_j=x_{\pi(j)}\quad\forall\ 1\leq j\leq n$. Such a definition is needed because $x$ and $y$ are vectors of different sizes. 

The analysis of the above test is similar to that of the Long Code test. Hence, we will skip a rigorous proof and state only the important details. The completeness is simple to analyze. 

For the soundness analysis, we can imagine a restricted case where $n=m$, $\pi=\id$ (identity permutation), and $f=g$. Then the test is exactly the $3$-bit Long Code test that we saw in the previous section and we get
\[
\Pr[\acc]\geq \frac{1}{2}+\eta\ \xrightarrow{\text{\lref[Theorem]{th:lc}}}\sum_{\substack{S\subseteq[n]\\|S|\leq\bigO(\frac{1}{\eps}\log\frac{1}{\eta})}} \fc{f}(S)^3 \geq \eta\ \xrightarrow{\text{Cauchy-Schwarz}}\sum_{\substack{S\subseteq[n]\\|S|\leq\bigO(\frac{1}{\eps}\log\frac{1}{\eta})}} \fc{f}(S)^4 \geq \eta^2.
\]
Now if we have different $f$ and $g$, one can verify by analysis similar to that in \lref[Theorem]{th:lc} that instead of the above inequality, we would get
\[
\sum_{\substack{S\subseteq[n]\\|S|\leq\bigO(\frac{1}{\eps}\log\frac{1}{\eta})}} \fc{g}(S)^2\fc{f}(S)^2 \geq \eta^2.
\]
Further, now if $\pi\neq\id$ and $n\neq m$, then we would end up with the inequality in the following theorem.
\begin{theorem} \label{th:ct}
The following are true for Consistency Test($g,f,\pi,\eps$):
\begin{enumerate}
	\item If $f=\chi_{\{j\}}$, $g=\chi_{\{i\}}$ and $\pi(j)=i$, then $\Pr[\acc]=1-\eps$.
  \item If $\Pr[\acc]\geq 1/2+\eta$, then $f$ and $g$ are correlated in the following sense:
  \[
  \sum_{\substack{S\subseteq[m],T\subseteq[n]\\|S|,|T|\leq\bigO(1/\eps\log(1/\eta))\\S,T \text{correlated\ by\ }\pi}} \fc{g}(S)^2\fc{f}(T)^2 \geq \eta^2
  \]
  where ``$S,T \text{correlated\ by\ }\pi$'' means that there exist $i\in S$, $j\in T$ such that $\pi(j)=i$.
\end{enumerate}
\end{theorem}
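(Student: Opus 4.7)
My plan mirrors the Long Code Test analysis of Theorem~\ref{th:lc} with one essential addition: the projection $\pi$ must be tracked in the Fourier domain. First I verify completeness by direct substitution: when $g = \chi_{\{i\}}$, $f = \chi_{\{j\}}$, and $\pi(j) = i$, the acceptance predicate $f(z) = g(x)f(y)$ reads $x_{\pi(j)}\, y_j\, \mu_j = x_i\, y_j$, equivalently $\mu_j = 1$, which holds with probability exactly $1-\eps$.

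For soundness, assume $\Pr[\acc] \geq 1/2 + \eta$ and rewrite this, via the standard $\pm 1$ indicator trick, as $\expect_{x,y,\mu}[f(z)\, g(x)\, f(y)] \geq 2\eta$. I then substitute the Fourier expansions $f = \sum_T \fc{f}(T)\chi_T$ and $g = \sum_S \fc{g}(S)\chi_S$. The crucial identity to establish is
\[
\chi_T(x \circ \pi) \;=\; \prod_{j \in T} x_{\pi(j)} \;=\; \chi_{\pi_2(T)}(x), \qquad \pi_2(T) := \{\, i \in [m] : |\pi^{-1}(i) \cap T| \text{ is odd} \,\},
\]
using that $x_i^2 = 1$ in $\bits$. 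Since $\chi_T(z) = \chi_T(x\circ\pi)\,\chi_T(y)\,\chi_T(\mu)$ and the characters are orthonormal, the expectation over $x$ forces $S = \pi_2(T)$, the expectation over $y$ forces the two $T$-indices on $f$ to coincide, and $\expect_\mu[\chi_T(\mu)] = (1-2\eps)^{|T|}$. The triple sum collapses to
\[
2\eta \;\leq\; \sum_{T \subseteq [n]} \fc{g}(\pi_2(T))\, \fc{f}(T)^2\, (1-2\eps)^{|T|}.
\]

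Next I truncate by choosing $d = O\!\left((1/\eps)\log(1/\eta)\right)$ so that $(1-2\eps)^{|T|} \leq \eta/2$ for $|T| > d$. Using $|\fc{g}(\pi_2(T))| \leq 1$ (from Parseval on $g$) together with $\sum_T \fc{f}(T)^2 = 1$, the tail contributes at most $\eta/2$, leaving
\[
\sum_{|T|\leq d} \fc{f}(T)^2\, |\fc{g}(\pi_2(T))| \;\geq\; \eta.
\]
A single Cauchy--Schwarz application against $\sum_{|T|\leq d} \fc{f}(T)^2 \leq 1$ then yields
\[
\sum_{|T|\leq d} \fc{f}(T)^2\, \fc{g}(\pi_2(T))^2 \;\geq\; \eta^2.
\]
Setting $S := \pi_2(T)$ finishes the argument: one has $|S| \leq |T| \leq d$, and whenever $S$ and $T$ are both nonempty the pair is $\pi$-correlated, because every $i \in S$ is $\pi(j)$ for some $j \in T$ by construction. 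The contributions from $S = \emptyset$ or $T = \emptyset$ can be absorbed by losing a small additive constant in $\eta$, which I would handle by tightening the initial soundness slack.

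The principal obstacle I expect is the bookkeeping around $\pi_2(T)$: one must carefully verify that $\chi_T(x \circ \pi)$ collapses to a single character on $\bits^m$, and that the resulting $S = \pi_2(T)$ genuinely witnesses $\pi$-correlation with $T$ in the sense required by the theorem statement. Once that identity is in place, the remainder is a direct adaptation of the Fourier-analytic argument from Theorem~\ref{th:lc}, with the $\mu$-perturbation again supplying the geometric weight $(1-2\eps)^{|T|}$ that restricts attention to low-order Fourier mass.
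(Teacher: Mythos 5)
Your proof follows essentially the same Fourier-analytic route the paper intends (the paper itself only sketches the argument, deferring to the Long Code Test analysis), and your computation is correct: the identity $\chi_T(x\circ\pi)=\chi_{\pi_2(T)}(x)$, the collapse of the triple sum to $\sum_T \fc{g}(\pi_2(T))\fc{f}(T)^2(1-2\eps)^{|T|}$, the truncation at $|T|=O((1/\eps)\log(1/\eta))$, and the Cauchy--Schwarz step all check out, and your single-$S$ sum $\sum_T \fc{f}(T)^2\fc{g}(\pi_2(T))^2$ indeed lower-bounds the theorem's sum over all correlated pairs since every term there is nonnegative. The one inaccuracy is your treatment of the empty-set terms: if $f$ and $g$ are both the constant function $+1$ the test accepts with probability $1$ while the correlated sum is $0$, so the contribution of $T=\emptyset$ (or $\pi_2(T)=\emptyset$) cannot be ``absorbed by losing a small additive constant in $\eta$'' --- the correct fix is \emph{folding} of the proof tables, which forces $\fc{f}(\emptyset)=\fc{g}(\emptyset)=0$, exactly as the paper points out in its concluding remarks.
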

We are now ready to describe \hastad's composed PCP verifier.

\noindent {\bf \hastad's 3-bit PCP:}{\em \mbox{}

Input: Hard instance of $\labelcover$, $\lc(G(V,W,E),[m],[n],\{\pi_{vw}\})$ (given by \lref[Theorem]{th:lcover}) and error parameter $\eps$.

Verifier: Pick edge $(v,w)\in E$ at random. Let $g_v$ and $f_w$ be the supposed Long codes of the two vertices. Run Consistency Test($g_v,f_w,\pi_{vw},\eps$). }

The following theorem gives the completeness and soundness of the verifier.
\begin{theorem} \label{th:final}
Given a hard instance of $\labelcover$, $\lc$, \hastad's PCP guarantees
\begin{enumerate}
	\item (Completeness) If $\opt(\lc)=1$, then there exists a proof for which $\Pr[\acc]\geq 1-\eps$.
	\item (Soundness) If $\Pr[\acc]\geq 1/2+2\eta$, then $\opt(\lc)\geq\eps^2\eta^3/\log^2(1/\eta)$, that is, there is a labeling to $\lc$ that satisfies at least $\eps^2\eta^3/\log^2(1/\eta)$ fraction of edges.
\end{enumerate}
\end{theorem}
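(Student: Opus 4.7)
The plan is straightforward: completeness follows by exhibiting the honest proof, and soundness follows by a standard ``decoding'' argument that converts a good PCP strategy into a good labeling of $\lc$.

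For completeness, suppose $\opt(\lc)=1$ and fix a satisfying labeling $\phi$. For each $v\in V$ let $g_v=\chi_{\{\phi(v)\}}$ and for each $w\in W$ let $f_w=\chi_{\{\phi(w)\}}$; these are the Long codes of the correct labels. For every edge $(v,w)$ we have $\pi_{vw}(\phi(w))=\phi(v)$, so by part 1 of \lref[Theorem]{th:ct} the Consistency Test accepts with probability at least $1-\eps$. Averaging over the (uniformly random) choice of edge preserves this bound.

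For soundness, we argue by contrapositive: assume $\Pr[\acc]\geq 1/2+2\eta$ and produce a randomized labeling whose expected value is at least $\eps^2\eta^3/\log^2(1/\eta)$. First, apply Markov's inequality to the nonnegative quantity $\Pr[\acc\mid (v,w)]-1/2$: at least an $\eta$-fraction of edges $(v,w)$ are ``good'' in the sense that the Consistency Test on $(g_v,f_w,\pi_{vw})$ accepts with probability $\geq 1/2+\eta$. For each such good edge, part 2 of \lref[Theorem]{th:ct} gives sets $S\subseteq[m]$, $T\subseteq[n]$, both of size at most $k=O((1/\eps)\log(1/\eta))$, such that
\[
\sum_{\substack{|S|,|T|\leq k\\ S,T\ \pi\text{-correlated}}}\fc{g_v}(S)^2\fc{f_w}(T)^2 \;\geq\; \eta^2.
\]

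Now define the randomized labeling: for each vertex $v$, sample $S\subseteq[m]$ with probability $\fc{g_v}(S)^2$ (a valid distribution by Parseval) and then assign $v$ a uniformly random element of $S$ (label arbitrarily if $S=\emptyset$); do the analogous thing at each $w$ using $\fc{f_w}(T)^2$. On a good edge $(v,w)$, the probability of sampling a $\pi_{vw}$-correlated pair of nonempty sets of size at most $k$ is at least $\eta^2$; conditioned on this, the probability that the two uniform choices produce $i\in S$, $j\in T$ with $\pi_{vw}(j)=i$ is at least $1/(|S|\cdot|T|)\geq 1/k^2$. Hence each good edge is satisfied with probability at least $\eta^2/k^2$, and the overall expected fraction of satisfied edges is at least
\[
\eta\cdot\frac{\eta^2}{k^2}\;=\;\Omega\!\left(\frac{\eps^2\eta^3}{\log^2(1/\eta)}\right).
\]
Some deterministic realization must match this expectation, proving $\opt(\lc)\geq \eps^2\eta^3/\log^2(1/\eta)$.

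I expect no real obstacle in this argument; the heavy lifting has already been done in \lref[Theorem]{th:ct}. The only minor care points are (i) verifying Parseval to make $\fc{g_v}(S)^2$ a probability distribution, (ii) noting that the correlated condition forces both $S,T$ to be nonempty so the uniform draws are well-defined, and (iii) keeping track that the size truncation $|S|,|T|\leq k$ is exactly what yields the $1/k^2$ factor in the final bound.
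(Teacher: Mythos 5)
Your proposal is correct and follows essentially the same route as the paper's proof: the same honest Long-code proof for completeness, the same averaging argument isolating an $\eta$-fraction of good edges, and the same randomized decoding (sample $S,T$ with probabilities $\fc{g}_v(S)^2,\fc{f}_w(T)^2$ and pick uniform elements), yielding the product $\eta\cdot\eta^2\cdot\eps^2/\log^2(1/\eta)$. The extra care points you flag (Parseval, nonemptiness of correlated sets, the $1/k^2$ factor from the size truncation) are exactly the details the paper's sketch leaves implicit.
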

\begin{proof}
For the completeness part, we can assume that the proof contains correct encodings on correct labels. Then by \lref[Theorem]{th:ct}, the Verifier accepts with probability $1-\eps$ on every choice of edge. Therefore, the overall acceptance probability also remains $1-\eps$.

Now for the soundness part. If $\Pr[\acc]\geq 1/2+2\eta$, then by an averaging argument, for at least $\eta$ fraction of the edges $(v,w)\in E$, Consistency Test($g_v,f_w,\pi_{vw},\eps$) accepts with probability at least $1/2+\eta$.

Fix any such \textit{good} edge. Then by \lref[Theorem]{th:ct}, we have
\begin{equation} \label{eq:goodsets}
\sum_{\substack{|S|,|T|\leq\bigO(1/\eps\log(1/\eta))\\\exists\ i\in S,j\in T\ :\ \pi(j)=i}}\fc{g}_v(S)^2\fc{f}_w(T)^2\geq\eta^2.
\end{equation}
We will now define labels for vertices $v$ and $w$. First, we pick sets $S\subseteq[m]$ and $T\subseteq[n]$ with probability $\fc{g}_v(S)^2$ and $\fc{f}_w(T)^2$ respectively. Next, we pick labels $i\in S$ and $j\in T$ at random. This is a randomized labeling and by the probabilistic method, the argument goes through. In expectation, at least
\[
\underbrace{\eta}_{\substack{\Pr[\text{Pick}\\\text{good edge}]}}\cdot\underbrace{\eta^2}_{\substack{\Pr[\text{Pick correlated }S,T]\\\text{by \lref[Inequality]{eq:goodsets}}}}\cdot\underbrace{\frac{\eps^2}{\log^2(1/\eta)}}_{\substack{\Pr[\text{Pick }i\in S,j\in T\\ \text{s.t. }\pi(j)=i]}}
\]
fraction of label-cover edges are satisfied.
\end{proof}
Finally, observe that if we set $\delta=c\eps^2\eta^3/\log^2(1/\eta)$, then \lref[Theorem]{th:lcover} and \lref[Theorem]{th:final} together prove \lref[Theorem]{th:hastad}.

\section{Concluding Remarks}

In this lecture, we have seen and analyzed \hastad's powerful $3$-bit PCP. There are a few subtleties in the construction of this PCP that are worth mentioning. The first subtlety, that we have already seen in the Introduction, is that we have to sacrifice perfect completeness if we want our Verifier to have linear predicates.

The second issue is that any linear test can be satisfied if everything is $0$ (or $+1$ in our case). Moreover, in the soundness analysis of \lref[Theorem]{th:final}, the sets $S$ and $T$ that are chosen by the probability distributions $\{\fc{g}_v(S)^2\}$ and $\{\fc{f}_w(T)^2\}$ respectively should not be empty. This problem is taken care by an operation called \textit{folding}. For more information, the reader can look at the references below.

Some good references for more information on \hastad's PCP are Chapter~$22$ in Arora and Barak's textbook~\cite{AroraB2009}, Khot's article on Long code based PCPs~\cite{Khot2005a} and \hastad's original paper~\cite{Hastad2001}.

\cleardoublepage

\lecture{8}{21 July, 2009}{Alantha Newman}{Moses
Charikar}{Semidefinite Programming and Unique Games}

\section{Unique Games}

The topic of Unique Games has generated much interest in the past few
years.  The {\em Unique Games Conjecture} was posed by
Khot~\cite{Khot2002}.  We will discuss the associated optimization problem
and the algorithmic intuition and insight into the conjecture, as well
as the limits of these algorithmic techniques.  Finally, we mention
the amazing consequences implied for many optimization problems if the
problem is really as hard as conjectured.

We now define the Unique Games problem.  The input is a set of
variables $V$ and a set of $k$ labels, $L$, where $k$ is the size of
the {\em domain}.  Our goal is to compute a mapping, $\ell: V
\rightarrow L$, satisfying certain constraints that we now describe.
Let $E$ denote a set of pairs of variables, $\{(u,v)\} \subset V
\times V$.  For each $(u,v) \in E$, there is an associated constraint
represented by $\pi_{uv}$, indicating that $\ell(v)$ should be equal
to $\pi_{uv}(\ell(u))$; we assume that the constraint $\pi_{vu}$ is
the inverse of the constraint $\pi_{vu}$ i.e, $\pi_{uv} =
\pi_{vu}^{-1}$.  Thus, our goal is to compute the aforementioned
mapping, $\ell: V \rightarrow L$, so as to {\em maximize the number of
  satisfied constraints}.

Each constraint, $\pi_{uv}$, can be viewed as a permutation on $L$.
Note that this permutation may be different for each pair $(u,v) \in
E$.  For a pair $(u,v) \in E$, if $v$ is given a particular label from
$L$, say $\ell(v)$, then there is only one label for $u$ that will
satisfy the constraint $\pi_{uv}$.  Specifically, $\ell(u)$ should
equal $\pi_{vu}(\ell(v))$.  Hence, the ``unique'' in Unique Games.
The practice of calling this optimization problem a unique ``game''
stems from the connection of this problem to 2-prover 1-round
games~\cite{FeigeL1992}.  The Unique Games problem is a
special case of Label Cover (discussed in other lectures in the
workshop), in which each constraint forms a bijection from $L$ to $L$.
Having such a bijection turns out to be useful for hardness results.

\section{Examples}

We will refer to $E$ as a set of edges, since we can view an instance
of Unique Games as a graph $G=(V,E)$ in which each edge $(u,v) \in E$
is labeled with a constraint $\pi_{uv}$.  We now give some specific
examples of optimization problems that are special cases of Unique
Games.

\subsection{\texorpdfstring{Linear Equations Mod $p$}{Linear Equations
  Mod p}}

We are given a set of equations in the form $x_i - x_j \equiv c_{ij}~
(\bmod ~p)$.  The goal is to assign each variable in $V = \{x_i\}$ a label
from the set $L = [0, 1, \dots p-1]$ so as to maximize the number of
satisfied equations.  Note that each constraint is a bijection.

\subsection{MAXCUT}

Given an undirected graph $G=(V,E)$, the Max Cut problem is to find a
bipartition of the vertices that maximizes the weight of the edges
with endpoints on opposite sides of the partition.  

We can represent this problem as a special case of Linear Equations
$\bmod ~p$ and therefore as a special case of Unique Games.  For each
edge $(i,j) \in E$, we write the equation $x_i-x_j \equiv 1 ~(\bmod
~2)$.  Note that the domain size is two, since there are two possible
labels, $0$ and $1$.

\section{Satisfiable vs Almost Satisfiable Instances}

If an instance of Unique Games is satisfiable, it is easy to find an
assignment that satisfies all of the constraints.  Can you see why?
Essentially, the uniqueness property says that if you know the correct label
of one variable, then you know the labels of all the neighboring
variables.  So we can just guess all possible labels for a variable;
at some point your guess is correct and this propagates correct labels
to all neighbors, and to their neighbors, and so on.  This is a
generalization of saying that if a graph is bipartite (e.g. all
equations in the Max Cut problem are simultaneously satisfiable), then
such a bipartition can be found efficiently.  So when all constraints
in an instance of Unique Games are satisfiable, this is an ``easy''
problem.

In contrast, the following problem has been conjectured to be
``hard'': If 99$\%$ of the constraints are satisfiable, can we satisfy
1$\%$ of the constraints?  The precise form of the conjecture is known
as the Unique Games Conjecture~\cite{Khot2002}: For all small constants
$\eps, \delta > 0$, given an instance of Unique Games where $1-\eps$
of the constraints are satisfied, it is hard to satisfy a $\delta$
fraction of satisfiable constraints, for some $k > f(\eps, \delta)$,
where $k$ is the size of the domain and $f$ is some function of $\eps$
and $\delta$.

How does $f$ grow as a function of $\eps$ and $\delta$?  We claim that
$f(\eps, \delta) > 1/\delta$.  This is because it can easily
be shown that we can satisfy a $1/k$ fraction of the constraints:
Randomly assigning a label to each variable achieves this guarantee.
Thus, in words, the conjecture is that for a sufficiently large domain
size, it is hard to distinguish between almost satisfiable and close
to unsatisfiable instances.

\subsection{Almost Satisfiable Instances of MAXCUT}

We can also consider the Max Cut problem from the viewpoint of
distinguishing between almost satisfiable and close to unsatisfiable
instances.  However, for this problem, a conjecture as strong as that
stated above for general Unique Games is clearly false.  This is
because we can always satisfy at least half of the equations.  (See
Sanjeev's lecture.)  We now consider the problem of satisfying the
maximum number of constraints given that a $(1-\eps)$ fraction of the
constraints are satisfiable.  We write the standard semidefinite
programming (SDP) relaxation in which each vertex $u$ (with a slight abuse of notation) is represented
by a unit vector, $u$.
\begin{eqnarray*}
& \max &  \sum_{(u,v) \in E} \frac{1-u \cdot v}{2} \\
u\cdot u & = & 1 \quad \forall u \in V\\
u & \in & \R^n \quad \forall u \in V.
\end{eqnarray*}

For a fixed instance of the Max Cut problem, let $OPT$ denote the fraction
of constraints satisfied by an optimal solution, and let $OPT_{SDP}$
denote the value of the objective function of the above SDP on this
instance.  If $OPT \geq (1-\eps)|E|$, then $OPT_{SDP} \geq
(1-\eps)|E|$, since $OPT_{SDP} \geq OPT$.  In Lecture 1 (Sanjeev's
lecture), it was shown that using the random hyperplane rounding of
Goemans-Williamson~\cite{GoemansW1995}, we can obtain
a $.878$-approximation algorithm for this problem.  We will now try to
analyze this algorithm for the case when $OPT$ is large, e.g. at least
$(1-\eps)|E|$.  From a solution to the above SDP, we obtain a collection
of $n$-dimensional unit vectors, where $n = |V|$.  We choose a random
hyperplane, represented by a vector $r \in N(0,1)^n$ (i.e. each
coordinate is chosen according to the normal distribution with mean 0
and variance 1).  Each vector $u \in V$ has either a positive or a
negative dot product with the vector $r$, i.e $r \cdot u > 0$ or $r
\cdot u < 0$.  Let us now analyze what guarantee we can obtain for the
algorithm in terms of $\eps$.

As previously stated, we have the following inequality for the SDP
objective function:
\begin{eqnarray*}
\sum_{(u,v) \in E} \frac{(1-u\cdot v)}{2}  & \geq & (1-\eps)|E|.\
\end{eqnarray*}
Let $\theta_{uv}'$ represent the angle between vectors $u$ and $v$,
i.e. $\arccos(u \cdot v)$.
Let $\theta_{uv}$ denote the angle $(\pi - \theta_{uv}')$.  Then we
can rewrite the objective function of the SDP as:
\begin{eqnarray*}
\sum_{(u,v) \in E} \frac{1 + \cos(\theta_{uv})}{2}.
\end{eqnarray*}
Further rewriting of the objective function results in the following:
\begin{eqnarray*}
\sum_{(u,v) \in E} \frac{1+\cos(\theta_{uv})}{2} & = & \sum_{(u,v) \in
  E} 1 - \frac{1-\cos{(\theta_{uv})}}{2}\\
& = & |E| - \sum_{(u,v) \in E} \frac{1-\cos{(\theta_{uv})}}{2}\\
& = & |E| - \sum_{(u,v) \in E} \sin^2{(\frac{\theta_{uv}}{2})}\\
& \geq & |E| - \eps|E|.
\end{eqnarray*}
We say that vertices $u$ and $v$ are ``cut'' if they fall on opposite
sides of the bipartition after rounding.
\begin{eqnarray*}
\Pr[u \text{ and } v \text{ cut}] ~ = ~ \frac{\theta_{uv}'}{\pi} ~ = ~
1 - \frac{\theta_{uv}}{\pi}.
\end{eqnarray*}
The expected size of $S$---the number of edges cut in a solution---is:
\begin{eqnarray*}
\text{E}[S] & = & \sum_{(u,v) \in E} 1 - \frac{\theta_{uv}}{\pi}\\
& = & |E| - \sum_{(u,v) \in E} \frac{\theta_{uv}}{\pi}.
\end{eqnarray*}
Assume for all $(u,v)\in E$ that $\sin^2{(\frac{\theta_{uv}}{2})} = \eps$.
Then $\sin{(\frac{\theta_{uv}}{2})} = \sqrt{\eps}$.  For small
$\theta$, we have that $\sin{(\theta)} \approx \theta$.  Therefore,
$\theta_{uv}/2 \approx \sqrt{\eps}$.  

Thus, the expected value $\text{E}[S] \geq |E|(1-c \sqrt{\eps})$ for
some constant $c$.  In other words, if we are given a Max Cut instance
with objective value $(1-\eps)|E|$, we can find a solution of size
$(1-c\sqrt{\eps})|E|$.  In other words, an almost satisfiable instance can be
given an almost satisfying assignment, although the assignment has a
weaker guarantee.

\section{General Unique Games}

What happens for a large domain?  How do we write an SDP for this
problem?  Before we had just one vector per vertex.  Now for each
variable, we have $k$ values.  So we have a vector for each variable
and for each value that it can be assigned.  First, we will write a $\{0,1\}$
integer program for Unique Games and then we relax this to obtain an SDP relaxation.  

\subsection{Integer Program for Unique Games}

Recall that $L$ is a set of $k$ labels.  For each variable $u$ and
each label $i \in L$, let $u_i$ be an indicator variable that is 1 if
$u$ is assigned label $i$ and $0$ otherwise.  
Note that the expression in the
objective function is 1 exactly when a constraint $\pi_{uv}$ is satisfied.
\begin{eqnarray*}
& \max & \sum_{(u,v) \in E} \sum_{i \in L} u_i \cdot v_{\pi_{uv}(i)}\\
\sum_{i \in L} u_i  & = & 1 \quad \forall u \in V.
\end{eqnarray*}
%here
Now we move to a vector program.  The objective function stays the
same, but we can add some more equalities and inequalities to the
relaxation that are valid for an integer program.  Below, we write
quadratic constraints since our goal is ultimately to obtain a quadratic program.
\begin{eqnarray*}
\sum_{i \in L} u_i \cdot u_i & = & 1 \quad \forall u \in V, ~i \in L,\\
u_i \cdot u_j & = & 0 \quad \forall u \in V, ~ i\neq j \in L.
\end{eqnarray*}
Additionally, we can also add triangle-inequality constraints on 
triples of vectors, $\{u_i, v_j, w_h\}$ for $u,v,w \in V$ and $i,j,h
\in L$:
\begin{eqnarray}
||u_i - w_h||^2 & \leq & ||u_i - v_j||^2 ~ + ~ ||v_j-w_h||^2,\label{tri_ineq_1}\\
||u_i-v_j||^2 & \geq & ||u_i||^2 - ||v_j||^2.\label{tri_ineq_2}
\end{eqnarray}
These constraints are easy to verify for 0/1 variables, i.e. for
integer solutions.  Note that these constraints are not necessary for
the integer program, but they make the SDP relaxation stronger.

\subsection{Trevisan's Algorithm}

We now look at an algorithm due to
Trevisan~\cite{Trevisan2008}.  Recall that if we know
that every constraint in a given instance is satisfiable, then we
can just propagate the labels and obtain a satisfiable assignment.
The algorithm that we discuss is roughly based on this idea.

How can we use a solution to the SDP relaxation to obtain a solution that
satisfies many constraints?  Suppose that $OPT$ is $|E|$ and consider two
vertices $u$ and $v$ connected by an edge.  In this case, the set of $k$
vectors corresponding to $u$ is the same constellation of $k$ vectors
corresponding to vertex $v$, possibly with a different labeling.  If $OPT$
is $(1-\eps)|E|$, then although these two constellations may no longer be
identical, they should be ``close''.  The correlation of the vectors
corresponds to the distance, i.e. high correlation corresponds to
small distance.  Thus, we want to show that the vector corresponding
to the label of the root vertex $r$ is ``close'' to other vectors,
indicating which labels to assign the other vertices.

\subsubsection{An Algorithm for Simplified Instances}\label{simplified}

Consider the following ``simplified instance''.  Recall that the
constraint graph consists of a vertex for each variable and has an
edge between two variables if there is a constraint between these two
variables.  Suppose the constraint graph has radius $d$: there exists
a vertex $r$ such that every variable is a distance at most $d$ from 
vertex $r$.  The following lemma can be proved using the ideas discussed above.
\begin{lemma}\label{simple_instance} If every edge contributes $1-
  \eps/8(d+1)$ to the SDP objective value, then it is possible to
  efficiently find an assignment satisfying a $(1-\eps)$-fraction of
  the constraints.
\end{lemma}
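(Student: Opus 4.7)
The plan is to exploit the near-optimal SDP solution via a tree-propagation rounding, relying essentially on the SDP triangle inequalities \eqref{tri_ineq_1}--\eqref{tri_ineq_2} together with the orthogonality constraints $u_i\cdot u_{i'}=0$ for $i\ne i'$. First I would translate the per-edge hypothesis into a squared-distance bound: since $\pi_{uv}$ is a bijection, $\sum_i \|u_i\|^2=\sum_i \|v_{\pi_{uv}(i)}\|^2=1$, so
\[
\sum_i \|u_i - v_{\pi_{uv}(i)}\|^2 \;=\; 2 - 2\sum_i u_i \cdot v_{\pi_{uv}(i)} \;\leq\; \frac{\varepsilon}{4(d+1)}
\]
for every edge $(u,v)$. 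Next I would root a BFS tree at the radius center $r$ and, for each vertex $u$ at tree-distance $t_u\le d$, let $\sigma_u$ denote the composition of edge permutations along the unique BFS path from $r$ to $u$. Applying \eqref{tri_ineq_1} telescopically edge-by-edge along that path (using that each inner sum is invariant under the permutations indexing it), we get
\[
\sum_i \|r_i - u_{\sigma_u(i)}\|^2 \;\leq\; t_u \cdot \frac{\varepsilon}{4(d+1)} \;\leq\; \frac{\varepsilon}{4}
\]
for every vertex $u$.

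The proposed algorithm is to try each of the $k$ candidate labels $i^*\in L$ for $r$, propagate by setting $\ell(u):=\sigma_u(i^*)$, and output the assignment satisfying the most constraints. To analyze this, I would sample $i^*$ with probability $\|r_{i^*}\|^2$ and bound the expected fraction of violated constraints. Tree edges are automatically satisfied by construction, so only a non-tree edge $(u,v)$ can be violated. Applying \eqref{tri_ineq_1} via the intermediate point $r$ gives
\[
\sum_i \|v_{\sigma_v(i)} - v_{\pi_{uv}(\sigma_u(i))}\|^2 \;\leq\; \frac{\varepsilon}{4} + \frac{\varepsilon}{4} + \frac{\varepsilon}{4(d+1)} \;\leq\; \frac{3\varepsilon}{4},
\]
and the orthogonality of distinct label-vectors at $v$ converts this into the combinatorial bound
\[
\sum_{i:\;\sigma_v(i)\ne \pi_{uv}(\sigma_u(i))} \bigl(\|v_{\sigma_v(i)}\|^2 + \|v_{\pi_{uv}(\sigma_u(i))}\|^2\bigr) \;\leq\; \frac{3\varepsilon}{4}.
\]

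The step I expect to be the main obstacle is bridging from this $v$-side bound to the actual violation probability $\sum_{i\text{ bad}}\|r_i\|^2$ under the chosen distribution on $i^*$: a naive Cauchy--Schwarz argument only yields $O(\sqrt{\varepsilon})$, which is too weak. The rescue should come from the second SDP triangle inequality \eqref{tri_ineq_2}, which gives $\|r_i\|^2 \le \|v_{\sigma_v(i)}\|^2 + \|r_i - v_{\sigma_v(i)}\|^2$. Summing over the bad indices and using the path estimate from the first paragraph,
\[
\sum_{i\text{ bad}} \|r_i\|^2 \;\leq\; \sum_{i\text{ bad}}\|v_{\sigma_v(i)}\|^2 + \sum_i \|r_i - v_{\sigma_v(i)}\|^2 \;\leq\; \frac{3\varepsilon}{4} + \frac{\varepsilon}{4} \;=\; \varepsilon.
\]
Averaging over edges, the expected fraction of unsatisfied constraints is at most $\varepsilon$, so some deterministic choice of $i^*$ achieves at least a $(1-\varepsilon)$-fraction of satisfied constraints; that $i^*$ is precisely what the enumeration over $L$ will find.
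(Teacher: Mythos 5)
Your proof is correct, but it is a genuinely different rounding from the one in the notes. The notes keep the root's label random (label $i$ with probability $||r_i||^2$) and then assign every other vertex $u$ the label $j$ whose vector $u_j$ is \emph{closest} to $r_i$; the heart of the analysis is \lref[Lemma]{lucas_lemma}, which shows that whenever this closest-vector choice disagrees with the propagated label $\pi_u(i)$, one must have $||r_i||^2 \leq 2||r_i - u_{\pi_u(i)}||^2$, so the disagreement probability at each vertex is at most $2\sum_i ||r_i - u_{\pi_u(i)}||^2 \leq \eps/2$, and a union bound over the two endpoints gives $\eps$ per edge. You instead round by \emph{pure propagation} $\ell(u)=\sigma_u(i^*)$ and derandomize by enumerating the $k$ choices of $i^*$; tree edges are then satisfied for free, and you only need to control non-tree edges, which you do by pushing both endpoints' vectors to the common vertex $v$, invoking orthogonality of distinct label-vectors at $v$ to get $\sum_{i\ \mathrm{bad}} ||v_{\sigma_v(i)}||^2 \leq 3\eps/4$, and then transferring back to the sampling distribution $||r_i||^2$ via \eqref{tri_ineq_2}. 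Both arguments rest on the same ingredients --- the telescoped path bound $\sum_i ||r_i - u_{\sigma_u(i)}||^2 \leq \eps/4$, the $\ell_2^2$ triangle inequalities, and orthogonality --- but your route avoids the case analysis of \lref[Lemma]{lucas_lemma} entirely and gets the per-edge failure bound $\eps$ in one shot rather than as $\eps/2+\eps/2$; the paper's closest-vector rounding has the mild advantage of not distinguishing tree from non-tree edges and of being stated as a single randomized assignment rather than an enumeration. Your identification of the $O(\sqrt{\eps})$ trap from naive Cauchy--Schwarz, and its repair via \eqref{tri_ineq_2}, is exactly the role that \lref[Lemma]{lucas_lemma} plays in the original.
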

\noindent
We now give the steps of the rounding algorithm.

\vspace{3mm}

\fbox{\parbox{14cm}{

{\bf Rounding the SDP}
\begin{itemize}

\item[(i)] Find root vertex, $r$, such that every other vertex is
  reachable from $r$ by a path of length at most $d$.

\item[(ii)] Assign label $i$ to $r$ with probability $||r_i||^2$.  

\item[(iii)] For each $u \in V$, assign $u$ label $j$, where $j$ is the
  label that minimizes the quantity $||u_j - r_i||^2$.

\end{itemize}
}}

\vspace{3mm}

As mentioned earlier, the intuition for this label assignment is that
$u_j$ is the vector that is ``closest'' to $r_i$.  We now prove the
following key claim: For each edge $(u,v)$, the probability that
constraint $\pi_{uv}$ is satisfied is at least $1-\eps$.  In
particular, recall that edge $(u,v)$ is mislabeled if
$\ell(v) \neq \pi_{uv}(\ell(u))$.   Thus, we want to show that the
probability that edge $(u,v)$ is mislabeled is at most $\eps$.  

Since $r$ is at most a distance $d$ from all other vertices, a BFS
tree with root $r$ has the property that each $u$ has a path to $r$ on
the tree of distance at most $d$.  Fix a BFS tree and consider the
path from $r$ to $u$: $r = u^0, u^1, u^2, \dots, u^{t-1}, u^t = u$,
where $t \leq d$.  Let $\pi_{u^1}$ denote the permutation $\pi_{u^0,
  u^1}$, and recursively define $\pi_{u^k}$ as the composition of
permutations $(\pi_{u^k, u^{k-1}}) \cdot (\pi_{u^{k-1}})$.  Let $\pi_v
= (\pi_{uv})\cdot (\pi_{u})$.  We now compute the probability that
vertex $u$ is assigned label $\pi_{u(i)}$ and that vertex $v$ is
assigned label $\pi_{v(i)}$, given that $r$ is assigned label $i$.
Note that if both these assignments occur, then edge $(u,v)$ is
satisfied.  (Since edge $(u,v)$ may also be satisfied with another
assignment, we can think of our calculation as possibly being an {\em
  underestimate} on the probability that edge $(u,v)$ is satisfied.)

Let $A(u)$ denote the label assigned to vertex $u$ by the rounding
algorithm.  We will show:
\begin{eqnarray*}
\Pr[A(u) = \pi_{u}(i)] ~ \geq ~ 1-\frac{\eps}{2}\quad {\text{ and }}\quad
\Pr[A(v) = \pi_{v}(i)] ~ \geq ~ 1-\frac{\eps}{2}.
\end{eqnarray*}
This implies that the probability that constraint $\pi_{uv}$ is
satisfied is at least $1-\eps$.  Now we compute the probability that
$A(u) \neq \pi_{u}(i)$.  Suppose that $u_j$ for $j \neq \pi_u(i)$ is
closer to vector $r_i$ than $u_{\pi_u(i)}$ is.  In other words, suppose:
\begin{eqnarray}
||u_j-r_i||^2 ~ \leq ~ ||u_{\pi_u(i)} - r_i||^2.\label{key_ineq}
\end{eqnarray}
Let $B_u$ be the set of labels such that if $r$ is assigned label $i \in
B_u$, then $u$ is not assigned label $\pi_u(i)$.  Note that label $j$
belongs to $B_u$ iff inequality \eqref{key_ineq} holds for $j$.
Thus, the probability
that $u$ is not labeled with $\pi_u(i)$ is exactly:
\begin{eqnarray*}
\Pr[A(u) \neq \pi_u(i)] ~ = ~ \sum_{i \in B_u} ||r_i||^2.
\end{eqnarray*}
One can verify that if there is some label $j$ such that
inequality \eqref{key_ineq} holds, then the quantity $||r_i||^2$ is at most $2 ||r_i - u_{\pi_u(i)}||^2$.  This proof makes use of
inequalities from the SDP,
\eqref{tri_ineq_1} and \eqref{tri_ineq_2}, as well as inequality
\eqref{key_ineq}.  (See \lref[Lemma]{lucas_lemma} from \cite{Trevisan2008}, which we
include in the Appendix.)  Recall that each edge in
the graph (and thus each edge on the path from $r$ to $u$ in the BFS
tree) contributes at most $1-\eps/8(d+1)$ to the objective value.  By
triangle inequality, this implies that $\sum_{i \in L} ||r_i -
u_{\pi_u(i)}||^2 \leq \eps/4$.  Thus, we conclude:
\begin{eqnarray*}
\Pr[A(u) \neq \pi_u(i)] & = & \sum_{i \in B_u} ||r_i||^2 \\
& \leq & 2 \sum_{i \in B_u} ||r_i - u_{\pi_u(i)}||^2 \\
& \leq & 2 \sum_{i \in L} ||r_i - u_{\pi_u(i)}||^2 \\
& \leq & \frac{\eps}{2}.
\end{eqnarray*} 
Similarly, we conclude that $\Pr[A(v) \neq \pi_v(i)] \leq \eps/2$,
which implies that the probability that constraint $\pi_{uv}$ is not
satisfied is at most $\eps$.

\subsubsection{Shift Invariant Instances}\label{si}

In the case of Linear Equations $\bmod ~p$, we can add more
constraints to the SDP relaxation, which allow for a simplified
analysis of the rounding algorithm.  For any assignment of labels, we
can shift each of the labels by the same fixed amount, i.e, by adding
a value $k \in L$ to each label, and obtain an assignment with the
same objective value.  This property of a solution has been referred
to as {\em shift invariance}.  In these instances, the following are
valid constraints.  Note that $p = |L|$.
\begin{eqnarray*}
||u_i||^2 & = & \frac{1}{p} \quad u \in V, ~ i \in L, \\
u_i \cdot v_j & = & u_{i+k} \cdot v_{j+k} \quad u,v \in V, ~ i,j, k \in L. 
\end{eqnarray*}
In this case, we obtain a stronger version of \lref[Lemma]{simple_instance}.  
\begin{lemma}\label{si_instance} In a shift invariant instance in
  which every edge contributes more than $1- 1/2(d+1)$ to the SDP objective
  value, it is possible to efficiently find an assignment that
  satisfies all of the constraints.
\end{lemma}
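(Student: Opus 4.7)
The plan is to run the same BFS-based rounding as in the proof of Lemma~\ref{simple_instance}, but sharpen the analysis using the shift-invariance constraints so that \emph{every} constraint is satisfied rather than merely a $(1-\eps)$-fraction. The starting point is to find a vertex $r$ at distance at most $d$ from every other variable and fix a BFS tree rooted at $r$. Under shift invariance we may skip the random draw of a label for $r$ altogether and just set $\ell(r)=0$: the invariance of the SDP solution under simultaneous relabeling of all variables means any fixed choice will do. For every other vertex $u$ we then set $\ell(u)=\arg\min_j \|u_j - r_0\|^2$ exactly as before.

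To show that the resulting assignment satisfies every constraint, it suffices to prove that for every vertex $u$, the minimizer above equals $\pi_u(0)$, where $\pi_u$ is the composition of edge permutations along the unique $r$-to-$u$ path in the BFS tree. By Lemma~\ref{lucas_lemma}, if the minimizer is any other label, then $\|r_0\|^2 \le 2 \|r_0 - u_{\pi_u(0)}\|^2$. Shift invariance forces $\|r_0\|^2 = 1/p$, so the task reduces to showing $\|r_0 - u_{\pi_u(0)}\|^2 < 1/(2p)$.

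For this last bound I would exploit two consequences of shift invariance. First, for every edge $(a,b)$ along the BFS path the contribution bound $\sum_i a_i\cdot b_{\pi_{ab}(i)} > 1 - 1/(2(d+1))$ translates to $\sum_i \|a_i - b_{\pi_{ab}(i)}\|^2 < 1/(d+1)$; and because the equalities $a_i\cdot b_{\pi_{ab}(i)} = a_{i+k}\cdot b_{\pi_{ab}(i)+k}$ imply that all $p$ terms in this sum are identical, each term is strictly less than $1/(p(d+1))$. Second, applying the triangle inequality and Cauchy–Schwarz along the path $r=u^0, u^1, \ldots, u^t = u$ of length $t\le d$ yields
\[
\|r_0 - u_{\pi_u(0)}\|^2 \;\le\; t\sum_{k=1}^{t}\bigl\|u^{k-1}_{\pi_{u^{k-1}}(0)} - u^k_{\pi_{u^k}(0)}\bigr\|^2,
\]
which combined with the per-edge bound will give the needed strict inequality $\|r_0 - u_{\pi_u(0)}\|^2 < 1/(2p)$.

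The main obstacle is to push the per-edge bound through the path tightly. In the previous lemma a random choice of root label introduced a factor-of-two slack that the proof paid for using the $\sum_{i\in B_u}\|r_i\|^2$ sum; here shift invariance both removes the need for the random root (all $\|r_i\|^2$ are equal) and turns the sum-of-$p$-terms edge bound into a pointwise bound, so the slack built into $1/(2(d+1))$ per edge should be exactly the amount needed after propagating through $d$ edges to force $B_u = \varnothing$ at every $u$. Once $B_u$ is empty for all $u$, every edge permutation is satisfied by the propagated labeling, completing the proof.
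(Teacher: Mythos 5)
Your overall strategy (BFS tree, deterministic root label by shift invariance, per-edge bound sharpened to a pointwise bound, propagation along the path) matches the paper's up to the last step, but the last step as you set it up does not close. You route the argument through \lref[Lemma]{lucas_lemma}, which requires you to establish $\|r_0-u_{\pi_u(0)}\|^2<\tfrac{1}{2}\|r_0\|^2=1/(2p)$. The hypothesis cannot deliver this. Each edge contributes more than $1-1/(2(d+1))$, so $\sum_i\|a_i-b_{\pi_{ab}(i)}\|^2<1/(d+1)$, and shift invariance makes each of the $p$ terms equal, hence less than $1/(p(d+1))$ per edge; iterating the SDP's squared triangle inequality \eqref{tri_ineq_1} along a path of length $t\le d$ then gives only $\|r_0-u_{\pi_u(0)}\|^2< d/(p(d+1))<1/p$ --- a factor of $2$ short of what your reduction needs (and your displayed path inequality, with the extra factor of $t$ from Cauchy--Schwarz, is both unnecessary, since \eqref{tri_ineq_1} already gives the sum without that factor, and much too weak: it yields only $t^2/(p(d+1))$, which can be as large as roughly $d/p$).

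The paper closes the argument differently, without \lref[Lemma]{lucas_lemma}: from $\|r_0-u_{\pi_u(0)}\|^2<1/p$ and $\|r_0\|^2=\|u_{\pi_u(0)}\|^2=1/p$ one gets $r_0\cdot u_{\pi_u(0)}>1/(2p)$, while for any $j\neq\pi_u(0)$ orthogonality gives $\|u_j-u_{\pi_u(0)}\|^2=2/p$, and the triangle inequality $2/p\le\|u_j-r_0\|^2+\|r_0-u_{\pi_u(0)}\|^2$ forces $r_0\cdot u_j\le 1/(2p)$. Hence $u_{\pi_u(0)}$ is strictly the closest vector to $r_0$ and the rounding recovers the propagated label at every vertex. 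In other words, the equal-norm and orthogonality constraints available in the shift-invariant SDP substitute for the factor of $2$ that \lref[Lemma]{lucas_lemma} would otherwise demand; you should replace your final step with this direct inner-product comparison (or else strengthen the per-edge hypothesis, which is not what the lemma states).
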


We will show that in this case, the vector $r_i$ is closer to 
vector $u_{\pi_u(i)}$ than to vector $u_j$ for any label $j \neq
\pi_u(i)$.  In other words, $r_i \cdot u_{\pi_u(i)} > r_i \cdot u_j$
for all $j \in L$.
If each edge contributes more than $1-1/2(d+1)$ to the objective value,
then $||r_i - u_{\pi_u(i)}||^2 < 1/p$.  This implies that
$r_i \cdot u_{\pi_u(i)} > 1/2p$.
By triangle inequality, we have:
\begin{eqnarray*}
||u_j-u_{\pi_u(i)}||^2 & \leq & ||u_j-r_i||^2 ~ + ~ ||r_i-u_{\pi_u(i)}||^2 \\
\frac{2}{p} & \leq & \frac{2}{p} ~ - ~ 2 r_i \cdot u_j ~ + ~
\frac{1}{p} \quad \Rightarrow \\
r_i \cdot u_j & \leq & \frac{1}{2p}.
\end{eqnarray*}
Assuming that vector $u_j$ is closer to $r_i$ than vector
$u_{\pi_u(i)}$, we obtain the following contradiction:
\begin{eqnarray*}
\frac{1}{2p} & < & r_i \cdot u_{\pi_u(i)} 
~ \leq ~ r_i \cdot u_j  ~ \leq ~  \frac{1}{2p}.
\end{eqnarray*}

Note that in the case of shift invariance, $r$ is assigned each label
from $L$ with equal probability.  Because of shift invariance, it does
not actually matter which label $r$ is assigned.  
Thus, we can just assign $r$ a label $i$ arbitrarily (we
no longer need randomization) and then proceed with the rest of the
SDP rounding algorithm.  

\subsubsection{Extension to General Instances}

Applying this SDP rounding to general graphs may not yield such good
results as in Lemmas \ref{simple_instance} and \ref{si_instance},
since the radius of an arbitrary graph can be large, and the objective
values of the SDP relaxation would therefore have to be very high for
the lemmas to be applicable.  In order to apply these lemmas, we break
the graph into pieces, each with a radius of no more than
$O(\log{n}/\eps)$.  Doing this requires throwing out no more than an
$\eps$-fraction of the constraints.  
The following lemma is originally due to Leighton and
Rao \cite{LeightonR1999} and can also be found in \cite{Trevisan2008}.

\begin{lemma}
For a given graph $G=(V,E)$ and for all $\eps > 0$, there is a
polynomial time algorithm to find a subset of edges $E' \subseteq E$
such that $|E'| > (1-\eps)|E|$, and every connected connected
component of $E'$ has diameter $O(\log{|E|}/\eps)$.
\end{lemma}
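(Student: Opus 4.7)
The plan is to use a region-growing (ball-growing) decomposition, essentially the same primitive that underlies Leighton--Rao sparsest cut rounding. Iteratively peel off low-diameter balls whose boundary is a small fraction of their volume, and argue that the accumulated boundary edges across all iterations are only an $\eps$-fraction of $|E|$.

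Concretely, I would proceed as follows. Fix a vertex $v$ in the current remaining graph and, for $i = 0, 1, 2, \dots$, let $B_i$ denote the set of edges with both endpoints within BFS distance $i$ of $v$, and let $\partial_i$ denote the set of edges with exactly one endpoint within BFS distance $i$ (the ``boundary'' at radius $i$). I claim that there exists some $i \le O(\log|E|/\eps)$ for which $|\partial_i| \le \eps (|B_i| + 1)$. Indeed, if the reverse inequality held for every $i$ up to depth $d$, then $|B_{i+1}| + 1 \ge |B_i| + |\partial_i| + 1 > (1+\eps)(|B_i| + 1)$, giving $|B_d| + 1 > (1+\eps)^d$; choosing $d = \lceil \log(|E|+1)/\log(1+\eps) \rceil = O(\log|E|/\eps)$ forces $|B_d| > |E|$, a contradiction. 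So at some radius $i \le d$ the inequality holds, and we output the vertex set of $B_i$ as one component of the decomposition (which has diameter at most $2d = O(\log|E|/\eps)$), place $\partial_i$ into the ``cut'' set, delete this ball (including its incident edges) from the graph, and recurse on what remains.

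To bound the total number of cut edges, observe that each edge placed in the cut at an iteration is charged to the corresponding ball $B_i$ and satisfies $|\partial_i| \le \eps(|B_i|+1)$. Since the internal edge sets $B_i$ from different iterations are disjoint subsets of $E$, summing yields at most $\eps(|E| + N) \le 2\eps|E|$ cut edges, where $N$ is the number of components produced (bounded by $|E|+1$ with an easy argument, or handled by treating isolated vertices separately so that the additive $+1$ per ball contributes only to non-trivial components). Rescaling $\eps$ by a constant gives the claimed $|E'| \ge (1-\eps)|E|$. Each iteration is polynomial time (BFS plus counting), and there are at most $|V|$ iterations, so the algorithm runs in polynomial time.

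The main obstacle, and the place that requires the most care, is the accounting that turns the per-ball local guarantee $|\partial_i| \le \eps(|B_i|+1)$ into the desired global bound on $|E'|$ without double-counting: one must be careful that the $+1$ slack terms do not accumulate to dominate (handled by discarding singleton balls or by a minor rescaling of $\eps$), and that edges counted as ``internal'' to one ball are never reused as internal to a later ball, which is immediate because a ball is removed from the graph once committed. With these points in place, the radius bound $O(\log|E|/\eps)$ translates directly into the claimed diameter bound.
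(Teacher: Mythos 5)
Your proof is correct. The paper itself does not prove this lemma --- it only cites Leighton--Rao and Trevisan --- and the argument in those references is exactly the region-growing decomposition you give: grow a BFS ball until the boundary is at most an $\eps$-fraction of the interior (which must happen within radius $O(\log|E|/\eps)$ by the geometric-growth contradiction), cut it off, and recurse, with the charging $\sum|\partial_i|\le\eps\sum(|B_i|+1)\le 2\eps|E|$ handled as you describe by noting that balls with $|B_i|=0$ have empty boundary. So your write-up supplies the standard proof the paper omits, and the accounting details you flag (the $+1$ slack and the disjointness of internal edge sets across iterations) are handled correctly.
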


Using this lemma, we obtain the following guarantee for general
instances: Given an instance for which OPT is at least $(1 - c
\eps^3/\log{n})|E|$, we can efficiently find a labeling satisfying a
$1-\eps$ fraction of the constraints.  Note that $c$ is an absolute
constant.  For shift invariant instances, we can satisfy $(1-\eps)|E|$
of the constraints for an instance where OPT is at least $(1 - c
\eps^2/\log{n})|E|$.

Given a graph, we remove the $\frac{\eps}{3}$ fraction of constraints
that contribute the least to the objective value.  This leaves us with
at least $(1 - \eps/3)|E|$ constraints that each contributes at least
$1 - 3 c \eps^2/\log{n}$ (or $1- 3 c \eps/\log{n}$ for shift invariant
instances) to the objective value.  We can apply
\lref[Lemma]{simple_instance} (or \lref[Lemma]{si_instance}) with $d =
\log{n}/\eps$, satisfying at least $(1-2\eps/3)|E|$ constraints (or
$(1-\eps/3)|E|$ constraints).

\section{Improving the Approximation Ratio}

Algorithms with improved approximation guarantees for Unique Games
have been presented in \cite{GuptaT2006, CharikarMM2006}.  The latter
work gives an algorithm with the following guarantee: Given an
instance of Unique Games with a domain size $k$ for which OPT is at
least $(1-\eps)|E|$, the algorithm produces a solution that satisfies
at least $\max \{1- \sqrt{\eps \log{k}}, ~k^{-\eps/(2-\eps)} \}$
fraction of the constraints.  Furthermore, it has been shown that the
existence of an efficient algorithm that can distinguish between
instances in which $(1-\eps)|E|$ constraints can be satisfied and
those at which less than $k^{-\eps/2}$ constraints can be satisfiable,
would disprove the Unique Games Conjecture~\cite{KhotKMO2007}.
Moreover, it is sufficient to refute the conjecture if this algorithm
works only for the special case of Linear Equations $\bmod ~p$.  Thus,
focusing on shift invariant instances is a reasonable approach.

Additionally, the Unique Games problem has been studied for cases in
which the constraint graph is an expander; in an instance in which OPT
is at least $(1-\eps)|E|$, one can efficiently find a solution
satisfying at least $1 - O(\frac{\eps}{\lambda})$ fraction of the
constraints, where $\lambda$ is a function of the expansion of the
graph~\cite{AroraKKSTV2008, MakarychevM2009}.

\section{Consequences}

The interest in the Unique Games Conjecture has grown due to the many
strong, negative consequences that have been proved for various
optimization problems.  Assuming the Unique Games Conjecture, it has
been shown that the Goemans-Williamson algorithm for Max Cut
(presented in Sanjeev's lecture) achieves the optimal approximation
ratio~\cite{KhotKMO2007}.  More surprisingly,
there are many other NP-complete optimization problems for which the
best-known approximation guarantees are obtained via extremely simple
algorithms.  Nevertheless, no one has been able to find algorithms
with improved approximation guarantees, even when resorting to
sophisticated techniques such as linear and semidefinite programming.
Such optimization problems include the Minimum Vertex Cover problem
and the Maximum Acyclic Subgraph problem, for which the best-known
approximation factors are $1/2$ and $2$, respectively.  If the Unique
Games Conjecture is true, then these approximation ratios are
tight~\cite{KhotR2008,GuruswamiMR2008}.
This phenomena has been investigated for several other optimization
problems as well.  A recent result shows that for a whole class of
constraint satisfaction problems, which can be modeled using a
particular integer program, the integrality gap of a particular SDP
relaxation is exactly equal to its approximability threshold under the
Unique Games Conjecture~\cite{Raghavendra2008}.

%\begin{thebibliography}{AKS}
%\newcommand{\etalchar}[1]{$^{#1}$}

%%%%%% References go here

\section*{Appendix}\label{append}

We include the following lemma from \cite{Trevisan2008} and its proof:
\begin{lemma}\label{lucas_lemma}
Let $\rvector, \vectoru, \vectorv$ be vectors such that: (i) $\vectoru\cdot \vectorv = 0$, (ii) $||\rvector -
\vectoru||^2 \geq ||\rvector - \vectorv||^2$, and $(iii)$ the vectors $\rvector,\vectoru,\vectorv$ satisfy the
triangle inequality constraints from the SDP.  Then $||\rvector - \vectoru||^2 \geq
\frac{1}{2} ||\rvector||^2$.
\end{lemma}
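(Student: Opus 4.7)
The plan is to extract the inequality $\|\rvector - \vectoru\|^2 \geq \tfrac{1}{2}\|\rvector\|^2$ by playing the two flavors of the triangle-inequality constraints against one another, with condition (ii) allowing us to turn a symmetric bound into a one-sided bound on $\|\rvector - \vectoru\|^2$, and with orthogonality converting $\|\vectoru - \vectorv\|^2$ into $\|\vectoru\|^2 + \|\vectorv\|^2$.

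First I would apply the triangle-inequality constraint \eqref{tri_ineq_1} to the triple $\rvector, \vectoru, \vectorv$ to obtain
\[
\|\vectoru - \vectorv\|^2 \;\leq\; \|\rvector - \vectoru\|^2 + \|\rvector - \vectorv\|^2,
\]
and then combine this with (ii) and (i): the inequality $\|\rvector - \vectorv\|^2 \leq \|\rvector - \vectoru\|^2$ from (ii) upgrades the right-hand side to $2\|\rvector - \vectoru\|^2$, and the orthogonality $\vectoru \cdot \vectorv = 0$ turns the left-hand side into $\|\vectoru\|^2 + \|\vectorv\|^2$. This yields the upper estimate $\|\vectoru\|^2 + \|\vectorv\|^2 \leq 2\|\rvector - \vectoru\|^2$.

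Next I would produce a matching lower estimate for $\|\vectoru\|^2 + \|\vectorv\|^2$ by invoking the second SDP constraint \eqref{tri_ineq_2} twice, once for the pair $(\rvector, \vectoru)$ and once for $(\rvector, \vectorv)$: these give $\|\vectoru\|^2 \geq \|\rvector\|^2 - \|\rvector - \vectoru\|^2$ and $\|\vectorv\|^2 \geq \|\rvector\|^2 - \|\rvector - \vectorv\|^2$, and summing followed by another application of (ii) yields
\[
\|\vectoru\|^2 + \|\vectorv\|^2 \;\geq\; 2\|\rvector\|^2 - 2\|\rvector - \vectoru\|^2.
\]

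Finally, chaining the upper and lower estimates for $\|\vectoru\|^2 + \|\vectorv\|^2$ produces $2\|\rvector - \vectoru\|^2 \geq 2\|\rvector\|^2 - 2\|\rvector - \vectoru\|^2$, i.e., $4\|\rvector - \vectoru\|^2 \geq 2\|\rvector\|^2$, which rearranges to the claimed bound. There is no real obstacle here; the only subtlety is being careful that \eqref{tri_ineq_2} is the correct one-sided constraint to use (the version with $\rvector$ as the ``large'' vector), and that (ii) is applied in both places to collapse the symmetric inequalities into a statement only about $\|\rvector - \vectoru\|^2$.
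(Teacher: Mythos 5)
Your proof is correct, and it takes a genuinely different route from the one in the paper. The paper argues by a three-way case analysis on the sizes of $||\vectoru||^2$ and $||\vectorv||^2$ relative to $\frac{1}{2}||\rvector||^2$: when either is small, a single application of \eqref{tri_ineq_2} (preceded by (ii) in the second case) finishes immediately, and only in the remaining case does the paper invoke \eqref{tri_ineq_1} together with orthogonality to relate $||\rvector-\vectoru||^2$ to $||\vectoru||^2+||\vectorv||^2$. You instead run a uniform argument with no case split: you derive an upper bound $||\vectoru||^2+||\vectorv||^2 \le 2||\rvector-\vectoru||^2$ (via \eqref{tri_ineq_1}, (ii), and orthogonality) and a lower bound $||\vectoru||^2+||\vectorv||^2 \ge 2||\rvector||^2-2||\rvector-\vectoru||^2$ (via \eqref{tri_ineq_2} applied twice and (ii)), and chain them to get $4||\rvector-\vectoru||^2 \ge 2||\rvector||^2$. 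Both proofs use exactly the same ingredients --- the two SDP constraints, orthogonality, and hypothesis (ii) --- but yours packages them into a single inequality chain, which is arguably cleaner and avoids the need to verify that the three cases are exhaustive; the paper's version has the mild advantage that in two of its three cases the conclusion follows from fewer hypotheses (case 1 uses neither (i) nor (ii)). All of your individual steps check out: the instance of \eqref{tri_ineq_1} you use (with $\rvector$ as the middle vertex) and the two instances of \eqref{tri_ineq_2} (with $\rvector$ as the minuend) are legitimate members of the constraint family guaranteed by (iii).
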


\begin{proof}
There are three cases:
\begin{itemize}
\item[1.] {If $||\vectoru||^2 ~\leq~ \frac{1}{2} ||\rvector||^2$, then by
  \eqref{tri_ineq_2}, we have: $$||\rvector-\vectoru||^2
  ~\geq~ ||\rvector||^2 - ||\vectoru||^2 ~\geq~ \frac{1}{2} ||\rvector||^2.$$}

\item[2.] {If $||\vectorv||^2 ~ \leq ~ \frac{1}{2}||\rvector||^2$, then by
  \eqref{tri_ineq_1}, and subsequently \eqref{tri_ineq_2}, we have: 
$$||\rvector - \vectoru||^2 ~ \geq ~ ||\rvector-\vectorv||^2 ~ \geq ~ ||\rvector||^2 - ||\vectorv||^2 ~
  \geq ~ \frac{1}{2} ||\rvector||^2.$$
}

\item[3.] {If $||\vectoru||^2, ||\vectorv||^2 \geq \frac{1}{2}||\rvector||^2$, then from
  \eqref{tri_ineq_1} and assumption (ii), we have: $$||\vectorv - \vectoru||^2 ~
  \leq ~ ||\vectorv - \rvector||^2 + ||\rvector - \vectoru||^2 ~\leq ~ 2 ||\rvector - \vectoru||^2.$$
By Pythagoras theorem and by orthogonality of $\vectoru$ and $\vectorv$
(assumption (i)), we have:  $$||\vectorv - \vectoru||^2 ~ = ~ ||\vectorv||^2 +
||\vectoru||^2.$$  Finally, we have: $$||\rvector- \vectoru||^2 ~ \geq ~ \frac{1}{2}
||\vectorv - \vectoru||^2 ~ = ~ \frac{1}{2} ||\vectorv||^2 + \frac{1}{2} ||\vectoru||^2 ~ \geq
~ \frac{1}{2} ||\rvector||^2.$$}

\end{itemize}

\end{proof}

\cleardoublepage

\lecture{9}{21 July, 2009}{Igor Gorodezky}{Subhash Khot}{Unique Games Hardness for MAXCUT}

\section{Introduction: MAXCUT and Unique Games}

In this lecture we sketch the proof of one of the more remarkable
consequences of the Unique Games Conjecture:
MAX-CUT is inapproximable to any constant
better than $\alpha$, where
\beq\label{alpha}
\alpha = \min_{-1 \leq \rho \leq 1} \frac{2}{\pi} \frac{\arccos(\rho)}{1-\rho} \approx .87856
\eeq
is the approximation ratio of the Goemans-Williamson algorithm. 

\subsection{The Goemans-Williamson algorithm}

Recall that 
given a graph $G=(V,E,w)$ with edge weights $w_{ij} \geq 0$,
the MAX-CUT problem asks for $S \subseteq V$ that 
maximizes $\sum_{i \in S, j \notin S} w_{ij}$ (we call this 
the \emph{weight} of the cut induced by $S$). 
We will write $\mc(G)$ for the maximum weight
of a cut in $G$.  

MAX-CUT is NP-hard. The best known approximation 
algorithm for MAX-CUT is due to Goemans and Williamson \cite{GoemansW1995}
and is as follows: given $G=(V,E,w)$, first solve the MAX-CUT SDP
\begin{alignat}{2}\label{sdp}
 \text{max }  & \sum_{ij} w_{ij} \frac{1-v_i \cdot v_j}{2} & &\\
&  ||v_i||^2 =1 ,& & i =1, \dots, n \notag \\
&  v_i \in \mathbb{R}^n ,& & i =1, \dots, n \notag 
\end{alignat}
to get a set of unit vectors $v_1, \dots, v_n$ in $\mathbb R^n$. 
Then, uniformly sample a hyperplane through the origin
and define $S \subseteq V$ to be the set of $i$ such that
$v_i$ lies ``above" this hyperplane. 

We saw in Sanjeev's lecture (Lecture 1) that the Goemans-Williamson
algorithm gives, in expectation, an $(\alpha-\epsilon)$-approximation 
for any $\epsilon>0$ (this additive error of $\epsilon$
stems from the fact that semidefinite programs must be
solved to within some fixed, arbitrarily small accuracy). 
The algorithm can be derandomized (see \cite{MahajanH1999}) to yield
a deterministic $(\alpha-\epsilon)$-approximation to MAX-CUT. 

A series of subsequent hardness results culminated in
H\aa stad's PCP-based proof in \cite{Hastad2001} that MAX-CUT is NP-hard to approximate
to within $16/17 \approx .941$. Then,
roughly a decade after the publication of the Goemans-Williamson
algorithm, Khot, Kindler, Mossel and O'Donnell proved in \cite{KhotKMO2007}
that, assuming the Unique Games Conjecture, it is NP-hard
to approximate MAX-CUT to within any factor greater than $\alpha$.
This suggests, as Khot et al.~note, that the geometric nature
of the Goemans-Williamson algorithm is intrinsic to the MAX-CUT
problem. 

\subsection{Label Cover and Unique Games}

The inapproximability of MAX-CUT is conditional
on the Unique Games Conjecture, which we state in this section. 

A \emph{unique game} $\uginst$ is a bipartite
graph with left-side vertex set $V,$ right-side vertex set $W,$
edge set $E$, and a set of labels of size $M$. Each
edge $(v,w)$ has an associated constraint function $\pi_{v,w}$
which is a permutation of $[M]$ (i.e.~a bijection from the set of
labels to itself). We will sometimes refer to a unique game 
with these parameters in the longhand $\uginst(V,W,E,[M],\{\pi_{v,w}\})$. 

A \emph{labeling} of a unique game $\uginst$ is an assignment of a label
from $[M]$ to each vertex of $\uginst$. A labeling \emph{satisfies}
the edge $(v,w)$ if $\pi_{v,w}$ maps the label of $w$ to the
label of $v$. We define
\[
\opt(\uginst) = \max\{ \ r \mid \text{there exists a labeling of }\uginst\text{ that satisfies } r|E| \text{ edges}\}.
\]
The \emph{unique Label Cover problem with parameter $\delta$}
is the problem of deciding, given a unique game
$\uginst(V,W,E,[M],\{\pi_{v,w}\})$, whether $\opt(\uginst) \geq 1-\delta$
or $\opt(\uginst) \leq \delta$. That is, given $\uginst$, we
are asked to decide whether there exists a labeling that
satisfies nearly all edge constraints, or whether no labeling
can satisfy more than a tiny fraction of them. 
We will write this decision problem as ULC($\delta$).

Intuition tells us that computing $\opt(\uginst)$
should be a hard problem, but what about ULC($\delta$)?
That is, if we are asked not to compute $\opt(\uginst)$ but
simply to decide whether it is very large or very small, does
the problem become easier? 
The Unique Games Conjecture claims that the answer is no. \\

\begin{figure}[h] %  figure placement: here, top, bottom, or page
   \centering
   \includegraphics[width=3.5in]{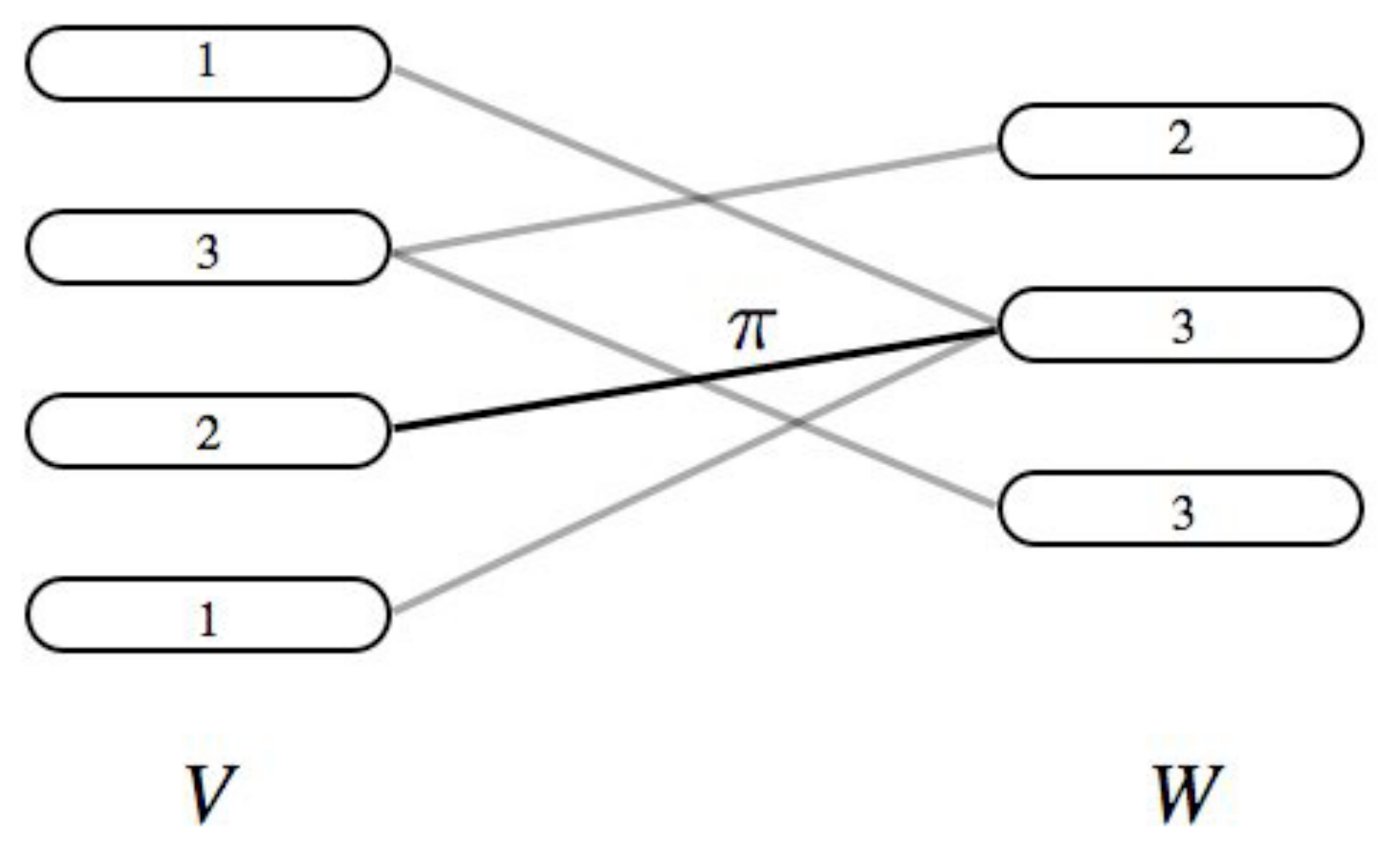} 
   \caption[A unique game with $M=3$]{A unique game with $M=3$. The constraint $\pi$ is associated
   with the highlighted edge. The edge is satisfied by the labeling in the figure
   if $\pi(3)=2$. }
   \label{fig:ug}
\end{figure}

The following is (a slightly weakened form of) the conjecture, 
first stated by Khot in \cite{Khot2002}.\\

\noindent {\bf Unique Games Conjecture (\cite{Khot2002}) : }{\em For any
  $\delta > 0$ there exists a constant $M$ such that it is NP-hard to
  decide ULC($\delta$) on instances with a label set of size $M$. }

\subsection{The Main Result}

We are ready to present the main result of \cite{KhotKMO2007}. 
This result can be stated in the form of a PCP for 
the ULC($\delta$) problem, by which we mean a 
probabilistically checkable proof system such that
given a unique game $\uginst$ and a proof, 
if $\opt(\uginst) \geq 1-\delta$ then the verifier accepts with high
probability $c$ (completeness), while if $\opt(\uginst) \leq \delta$
then the verifier accepts with low probability $s$ (soundness). 
As usual, the verifier only uses $O(\log n)$ random bits
on an instance of size $n$. 

\begin{theorem}\label{thm_main}
For every $\rho \in (-1,0)$ and $\epsilon > 0$ there exists
$\delta > 0$ such that
there is a PCP for ULC($\delta$)
in which the verifier reads two bits from the proof and accepts
iff they are unequal, and which has completeness 
\[
c \geq \frac{1-\rho}{2} - \epsilon
\]
and soundness 
\[
s \leq \frac{1}{\pi} \arccos(\rho)+\epsilon . 
\]
\end{theorem}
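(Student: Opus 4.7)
The plan is a long-code based reduction in which the PCP proof contains, for every $w\in W$, a function $f_w:\{-1,+1\}^M\to\{-1,+1\}$ that is supposed to be the long code (dictator function) of $w$'s label; we insist each $f_w$ is \emph{folded} ($f_w(-z)=-f_w(z)$), so $\widehat{f_w}(\emptyset)=0$ and only odd-sized Fourier coefficients contribute. The verifier picks $v\in V$ at random, picks two independent random neighbors $w_1,w_2$ of $v$, picks $x\in\{-1,+1\}^M$ uniformly together with $y\in\{-1,+1\}^M$ that is coordinate-wise $\rho$-correlated with $x$ (so $\mathbb{E}[x_iy_i]=\rho$), and accepts iff $f_{w_1}(x\circ\pi_{v,w_1})\neq f_{w_2}(y\circ\pi_{v,w_2})$, where $(z\circ\pi)_i:=z_{\pi(i)}$. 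The verifier clearly reads exactly two bits and accepts on inequality, and uses $O(\log|V|+M)$ random bits, which is $O(\log n)$ since $M=M(\epsilon,\rho)$ is a constant to be fixed later.

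For completeness, given a labeling $\ell$ of $\uginst$ satisfying $(1-\delta)|E|$ edges, define $f_w(z):=z_{\ell(w)}$. By a union bound, with probability $\geq 1-2\delta$ both sampled edges $(v,w_1),(v,w_2)$ are satisfied, so $\pi_{v,w_i}(\ell(w_i))=\ell(v)$ and the verifier reads $x_{\ell(v)},y_{\ell(v)}$, which disagree with probability exactly $(1-\rho)/2$. Choosing $\delta$ small enough yields $c\ge (1-\rho)/2-\epsilon$.

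For soundness, I would first Fourier-analyze over $x,y$ using $\mathbb{E}[\chi_S(x)\chi_T(y)]=\rho^{|S|}\mathbf{1}_{S=T}$, and then over the independent choices of $w_1,w_2$, to obtain the clean identity
\[
\Pr[\text{accept}]=\tfrac12-\tfrac12\,\mathbb{E}_v\,\mathrm{Stab}_\rho(G_v),\qquad G_v(z):=\mathbb{E}_{w\sim v}\bigl[\,f_w(z\circ\pi_{v,w})\,\bigr],
\]
where $G_v:\{-1,+1\}^M\to[-1,1]$ is mean-zero because of folding. The main technical input is now the Majority-Is-Stablest theorem of Mossel--O'Donnell--Oleszkiewicz in its $\rho<0$ form: for any mean-zero $G:\{-1,+1\}^M\to[-1,1]$ whose low-degree influences $\mathrm{Inf}_i^{\le k}(G)$ are all at most $\tau$, one has $\mathrm{Stab}_\rho(G)\ge 1-\tfrac{2}{\pi}\arccos(\rho)-\epsilon/2$, provided $k=k(\epsilon,\rho)$ and $\tau=\tau(\epsilon,\rho)$ are chosen appropriately. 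Hence, if $\Pr[\text{accept}]>\arccos(\rho)/\pi+\epsilon$, then for at least an $\epsilon$-fraction of vertices $v\in V$ there must exist a coordinate $i_v\in[M]$ with $\mathrm{Inf}_{i_v}^{\le k}(G_v)>\tau$.

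Finally I would extract a good labeling. By Cauchy--Schwarz applied to $\widehat{G_v}(S)=\mathbb{E}_{w\sim v}[\widehat{f_w}(\pi_{v,w}^{-1}(S))]$, an influential coordinate of $G_v$ forces, on an $\Omega(\tau)$ fraction of neighbors $w$ of $v$, an influential coordinate of $f_w$ at the $\pi_{v,w}$-preimage; the permutation $\pi_{v,w}$ therefore acts as the consistency check between a candidate label at $v$ and a candidate label at $w$. The standard randomized decoding---for each $w\in W$, pick a random label from its ``short list'' $\{j:\mathrm{Inf}_j^{\le k}(f_w)>\tau/2\}$ (of size $O(k/\tau)$), and similarly for each $v\in V$---then satisfies an expected $\delta=\mathrm{poly}(\epsilon,\tau,1/k)$ fraction of edges, which can be made a positive constant independent of $|V|,|W|$. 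This contradicts $\opt(\uginst)\le\delta$ and completes the soundness argument. The hard part is thus the invocation of the Majority-Is-Stablest theorem for $\rho<0$, whose proof requires the full invariance principle of Mossel--O'Donnell--Oleszkiewicz; everything else is Fourier bookkeeping plus the now-standard ``influential coordinate implies good label'' extraction.
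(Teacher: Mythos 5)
Your proposal is correct and follows essentially the same route as the paper: the identical two-query noise-stability test on long-coded labels (sample $v$, two independent neighbors $w,w'$, $\rho$-correlated $x,y$, accept on inequality), the same completeness computation via a union bound over the two edges, and the same soundness argument reducing $\Pr[\mathrm{accept}]$ to $\E_v[\ns(g_v)]$ and invoking the negative-$\rho$ corollary of Majority is Stablest followed by the standard influential-coordinate decoding. The only cosmetic difference is your insistence on folding to make $g_v$ mean-zero, which is harmless but not needed here since the $\rho<0$ form of the corollary drops the balancedness hypothesis.
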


Before sketching the proof of this theorem in \lref[Section]{sec_pcp}, 
let us see how
it implies the inapproximability of MAX-CUT. Recall that
$\mc(G)$ is the maximum weight of a cut in $G$. 

\begin{corollary}\label{cor_reduc}
For every $\rho \in (-1,0)$ and $\epsilon > 0$ there exists $\delta > 0$
and a polynomial-time
reduction from an instance $\uginst$ of ULC($\delta$) to 
an instance $G=(V,E,w)$ of MAX-CUT such that
\begin{alignat*}{2}
&\opt(\uginst) \geq 1-\delta \quad & &\Longrightarrow \quad \mc(G) \geq  \frac{1-\rho}{2} - \epsilon \\
&\opt(\uginst) \leq \delta \quad & & \Longrightarrow \quad \mc(G) \leq \frac{1}{\pi} \arccos(\rho)+\epsilon.
\end{alignat*}
\end{corollary}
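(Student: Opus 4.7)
The plan is to use the standard textbook reduction that turns a $2$-query PCP whose acceptance predicate is ``the two bits read are unequal'' into a weighted MAX-CUT instance in the natural way. Given the unique game $\uginst$, invoke \lref[Theorem]{thm_main} with the given $\rho$ and $\epsilon$ to obtain a PCP verifier for ULC($\delta$) that tosses $R = O(\log n)$ coins, reads two proof bits, and accepts iff they differ. Build a graph $G = (V,E,w)$ whose vertex set $V$ is the set of proof locations. For each random string $r \in \{0,1\}^R$, let $i(r), j(r)$ be the two positions queried by the verifier on coin sequence $r$; add an edge between the corresponding vertices with weight $2^{-R}$ (if multiple random strings induce the same pair, sum their weights). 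Since $R = O(\log n)$, the construction is polynomial-time.

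The key observation is the bijection between $\{0,1\}$-valued proofs $\pi$ and cuts $S \subseteq V$: identify $S$ with $\{v : \pi(v) = 1\}$. For this identification, the total weight of the cut induced by $S$ is exactly
\[
\sum_{r : \pi(i(r)) \neq \pi(j(r))} 2^{-R} \;=\; \Pr_r[\text{verifier accepts } \pi],
\]
because the verifier accepts on $r$ iff the two queried bits differ, iff the corresponding edge crosses the cut. Consequently $\mc(G) = \max_\pi \Pr[\text{verifier accepts }\pi]$.

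With this equality in hand, the two implications follow directly from \lref[Theorem]{thm_main}. In the YES case $\opt(\uginst) \ge 1-\delta$, there is a proof whose acceptance probability is at least $(1-\rho)/2 - \epsilon$, so $\mc(G) \ge (1-\rho)/2 - \epsilon$. In the NO case $\opt(\uginst) \le \delta$, every proof has acceptance probability at most $\arccos(\rho)/\pi + \epsilon$, so every cut, and therefore $\mc(G)$ itself, has weight at most $\arccos(\rho)/\pi + \epsilon$. Choosing $\delta$ as supplied by \lref[Theorem]{thm_main} for the given $\rho, \epsilon$ completes the reduction.

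There is essentially no obstacle here beyond \lref[Theorem]{thm_main}; the whole content of the corollary is that the PCP's two-bit ``inequality'' predicate is the MAX-CUT predicate, so the PCP acceptance probability is literally a cut weight. The only small technical point to handle is that the graph should be simple/weighted: parallel edges arising from multiple random strings querying the same pair are merged by summing weights, which preserves the acceptance-probability identity. Combining the two bounds and letting $\epsilon \to 0$ while minimizing the ratio $\frac{2\arccos(\rho)/\pi}{1-\rho}$ over $\rho \in (-1,0)$ yields the MAX-CUT inapproximability constant $\alpha$ of \eqref{alpha}, which is why $\rho \in (-1,0)$ is the relevant range (the minimizer of $\alpha$ lies in this interval).
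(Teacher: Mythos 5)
Your proposal is correct and follows essentially the same route as the paper: vertices are proof bits, edges are the pairs queried by the two-bit verifier, proofs correspond to cuts, and the cut weight equals the acceptance probability, so both implications drop out of the completeness and soundness of \lref[Theorem]{thm_main}. If anything, your weighting of each edge by the probability that the verifier samples that pair is slightly more careful than the paper's ``weight $1$ on all edges,'' since it makes the identity $\mc(G)=\max_\pi\Pr[\text{accept}]$ hold exactly even when the induced distribution over queried pairs is non-uniform.
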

\begin{proof}
Given $\rho$ and $\epsilon$, let $\delta$ be the same as in
\lref[Theorem]{thm_main}. Given an instance $\uginst$ of ULC($\delta$),
consider the PCP given by that theorem. Define the graph $G$
to have the bits of the proof as vertices,\footnote{As we will later see, 
if $\uginst$ is of the form
$\uginst(V,W,E,[M],\{ \pi_{v,w}\})$ then there will be $|W|2^M$ vertices in $G$.}
and create an edge between two bits if there is a non-zero probability
of that pair of bits being sampled by the verifier. Finally, set $w$ to be 
the trivial weight function that is 1 on all edges. 

Observe that a proof, which is an assignment of a value in $\{-1,1\}$ to the bits,
corresponds to a cut in $G$, and the number of edges crossing this 
cut is precisely the probability that this proof is accepted by the verifier. 
The claim now follows from the completeness and
soundness of the PCP. 
\end{proof}

Assuming the Unique Games Conjecture, for any $\delta > 0$
there is some constant $M$ such that it is NP-hard to decide
ULC($\delta$) on instances with a label set of size $M$. 
Now, by standard arguments, \lref[Corollary]{cor_reduc} implies that it is 
NP-hard to approximate MAX-CUT to within
\[
\frac{\arccos(\rho)/\pi+\epsilon}{(1-\rho)/2-\epsilon} > \frac{\arccos(\rho)/\pi}{(1-\rho)/2} 
\]
for any $\rho \in (-1,0)$. Therefore, MAX-CUT
is hard to approximate to within any constant larger than
\[
\min_{-1 \leq \rho \leq 0} \frac{2}{\pi} \frac{\arccos(\rho)}{1-\rho} =
\min_{-1 \leq \rho \leq 1} \frac{2}{\pi} \frac{\arccos(\rho)}{1-\rho} = \alpha
\]
which is the promised inapproximability result. 

Let us turn our attention, then, to proving \lref[Theorem]{thm_main}.
The proof will rely on a highly nontrivial result in
boolean Fourier analysis that we state in the next section.

\section{Majority is Stablest}\label{sec_mis}

The proof of \lref[Theorem]{thm_main} makes crucial use of
the Majority is Stablest (MIS) theorem, which is an extremal result
in boolean Fourier analysis. In this section we
state this theorem after defining the necessary concepts. 

We will use the common convention that bits take value in $\{-1,1\}$
rather than $\{0,1\}$ (in particular, we identify $x \in \{0,1\}$ with
$y \in \{-1,1\}$ using the bijection $y=(-1)^x$). 
Thus, a \emph{boolean function} is a map $f: \{-1,1\}^n \to \{-1,1\}$. 
In what follows we will assume familiarity with the basic
concepts of boolean Fourier analysis, as we lack the space for 
a thorough review of the subject; 
a reader seeking such a review is directed to the survey \cite{Odonnell2008}. \\

We begin with several definitions. 
Given $f:\{-1,1\}^n \to \{-1,1\}$, define the \emph{influence} of $x_i$
on $f$ to be the probability over all $n$-bit strings that $f$ changes value when
the $i$th bit is flipped:
\[
\infl(f) = \prob{x \in \{-1,1\}^n}{f(x) \neq f(x_1, \dots, x_{i-1}, -x_i, x_{i+1}, \dots, x_n)}.
\]
It is not hard to show that
\beq\label{def_infl}
\infl(f) = \sum_{S \mid i \in S} \hat f(S)^2
\eeq
and indeed, equation \eqref{def_infl} can be used to \emph{define}
the influence of a variable on a non-boolean function $f:\{-1,1\}^n \to \R$. 

Given a bit string $x$ and some $\rho \in (-1,1)$, let us define 
a distribution over $y \in \{-1,1\}^n$ by setting 
\[
y_i=
\begin{cases}
x_i \text{ with probability } \frac{1+\rho}{2} \\
-x_i \text{ with probability } \frac{1-\rho}{2} 
\end{cases}
\]
We write $y \sim_{\rho} x$ to mean a $y$ sampled from such a
distribution. 
Now, given a boolean function $f$ and $\rho \in (-1,1)$, define the
\emph{noise sensitivity} of $f$ at rate $\rho$ to be
\[
\ns(f) = \prob{x, \, y \sim_{\rho} x}{f(x) \neq f(y)}
\]
where $x$ is sampled uniformly and $y \sim_{\rho} x$. 
It can be shown that
\beq\label{def_ns}
\ns(f)=\frac{1}{2}-\frac{1}{2} \sum_S \hat f(S)^2 \rho^{|S|}.
\eeq
As before, equation \eqref{def_ns} serves as the \emph{definition}
of noise sensitivity for non-boolean functions. 

%Define $\gamma=(1-\rho)/2$. 
We observe that if $f$ is a \emph{dictator function}, i.e.~$f(x_1,\dots,x_n)=x_i$
for some $i$, then $\ns(f)$ is exactly $ (1-\rho)/2$ since $\hat f(S)=0$
when $S \neq \{x_i\}$ and is 1 otherwise. If $f$ is the majority
function (the boolean function whose value on a bit string 
is equal to the value of the majority of the bits), then it can be shown
(see \cite{KhotKMO2007} for references) that 
\[
\ns(f) = \frac{1}{\pi}\arccos(\rho)+o(1).
\]
Observe that when $\rho \in [0,1)$,
the noise sensitivity of a dictator function 
is lower than that of the majority function. 
The MIS theorem (proven in \cite{MosselOO2005}) tells us that if
we disqualify dictators by restricting 
our attention to functions in which no 
coordinate has large influence, then the majority function
achieves the smallest possible noise sensitivity. 

\begin{theorem}[Majority is Stablest, \cite{MosselOO2005}]\label{thm_mis}
For every $\rho \in [0,1), \epsilon > 0$ there exists $\delta$
such that if $f: \{-1,1\}^n \to [-1,1]$ with $\Exp{}{f}=0$ and $\infl(f) \leq \delta \ \forall i$,
then 
\[
\ns(f) \geq \frac{1}{\pi} \arccos \rho - \epsilon. 
\]
\end{theorem}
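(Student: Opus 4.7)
The plan is to transport the question from the Boolean cube to Gaussian space, where the extremal problem becomes a classical isoperimetric inequality. Using the Fourier identity from \eqref{def_ns}, proving the lower bound on $\ns(f)$ is equivalent to proving the upper bound $\sum_S \hat f(S)^2 \rho^{|S|} \leq 1 - \tfrac{2}{\pi}\arccos\rho + O(\epsilon)$ on the \emph{noise stability} of $f$. Write $\mathbb{S}_\rho(f) = \sum_S \hat f(S)^2 \rho^{|S|} = \langle f, T_\rho f \rangle$ where $T_\rho$ is the Bonami–Beckner noise operator.

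First I would smooth and truncate. Replace $f$ by $\tilde f := T_{1-\eta} f$ for a tiny $\eta = \eta(\epsilon)$, which decreases $\mathbb{S}_\rho$ by only $O(\epsilon)$ (the multiplicative factor $(1-\eta)^{|S|}$ changes each Fourier coefficient mildly), then truncate $\tilde f$ to degree at most $d = d(\eta, \epsilon) = O(\eta^{-1}\log(1/\epsilon))$; the tail above degree $d$ contributes at most $\epsilon$ to $\mathbb{S}_\rho$ because it is exponentially damped by the smoothing. The resulting polynomial is a low-degree multilinear polynomial whose variable influences are still at most $d \cdot \delta$, and whose mean is still $0$.

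Next I would apply the invariance principle of Mossel–O'Donnell–Oleszkiewicz: for any low-degree multilinear polynomial $P$ in which every coordinate has influence at most $\tau$, the distribution of $P(x_1,\dots,x_n)$ with uniform $\pm 1$ inputs agrees, when integrated against any smooth (thrice-differentiable) test function $\psi$, with the distribution of $P(g_1,\dots,g_n)$ for i.i.d.\ standard Gaussians $g_i$, up to error that tends to $0$ with $\tau$. Applied with $\psi$ a smooth truncation to $[-1,1]$ and then to the function $u\mapsto \mathbb{E}[\psi(u)\cdot \psi(\text{noisy copy})]$, this lets me replace the Boolean noise stability $\langle \tilde f, T_\rho \tilde f\rangle$ by the Gaussian noise stability $\langle \tilde f, U_\rho \tilde f\rangle$ of the same multilinear polynomial evaluated on Gaussians, where $U_\rho$ is the Ornstein–Uhlenbeck operator, at the cost of an additional $O(\epsilon)$. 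The $[-1,1]$ bound on the Boolean side carries over to the Gaussian side up to the same small error after smooth rounding.

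Finally, I invoke Borell's Gaussian noise stability theorem: among measurable $h:\mathbb{R}^n\to[-1,1]$ with $\mathbb{E}[h]=0$, the quantity $\langle h, U_\rho h\rangle$ is maximized by half-space indicators (signs of a single linear form), and the supremum value equals $\tfrac{2}{\pi}\arcsin\rho = 1 - \tfrac{2}{\pi}\arccos\rho$. Chaining the three reductions gives $\mathbb{S}_\rho(f) \leq 1 - \tfrac{2}{\pi}\arccos\rho + O(\epsilon)$, hence $\ns(f) \geq \tfrac{1}{\pi}\arccos\rho - O(\epsilon)$, and choosing $\delta$ small enough as a function of $\epsilon, \rho$ makes the hidden constant absorb into $\epsilon$. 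The main obstacle is the invariance principle itself; its proof uses a Lindeberg-style hybrid argument that swaps one coordinate at a time from $\pm 1$ to Gaussian, with the remainder controlled by hypercontractivity (the $(2,q)$-hypercontractivity of $T_\rho$) applied to the partial derivatives. The parameters $\eta, d, \tau$ must be tuned jointly so that the smoothing error, the truncation error, the invariance error, and the rounding error are each $O(\epsilon)$, which forces $\delta$ to be much smaller than $\epsilon$ in a way that depends on $\rho$ but is independent of $n$.
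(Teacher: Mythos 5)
The paper does not prove this theorem at all: it is quoted as a black box from Mossel--O'Donnell--Oleszkiewicz \cite{MosselOO2005}, so there is no in-text argument to compare against. Your sketch is a faithful outline of the actual proof in that reference: reduce via the Fourier identity \eqref[]{def_ns} to an upper bound on the noise stability $\sum_S \hat f(S)^2\rho^{|S|}$, smooth and truncate to a low-degree polynomial with small influences, transfer to Gaussian space by the invariance principle (Lindeberg exchange controlled by hypercontractivity), and finish with Borell's theorem that mean-zero halfspaces maximize Gaussian noise stability at value $\tfrac{2}{\pi}\arcsin\rho = 1-\tfrac{2}{\pi}\arccos\rho$. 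Two small remarks: the influences of $T_{1-\eta}f$ truncated to degree $d$ are in fact at most $\delta$ (each Fourier weight only shrinks), so your bound $d\cdot\delta$ is unnecessarily weak though still sufficient; and it is worth flagging explicitly that the hypothesis $\rho\in[0,1)$ is what makes halfspaces the \emph{maximizers} of stability here, since for $\rho<0$ the inequality reverses, which is exactly the content of \lref[Corollary]{cor_mius} used in the MAX-CUT soundness analysis.
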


Note the additional requirement that $\Exp{}{f}=0$; such functions 
are called \emph{balanced}. 
Our application requires the following corollary from \cite{KhotKMO2007}.
It states that if we choose a negative rather than positive $\rho$,
the majority function becomes the \emph{least} stable. 

\begin{corollary}\label{cor_mius}
For every $\rho \in (-1,0), \epsilon > 0$ there exists $\delta$
such that if $f: \{-1,1\}^n \to [-1,1]$ with $\infl(f) \leq \delta \ \forall i$,
then 
\[
\ns(f) \leq \frac{1}{\pi} \arccos \rho + \epsilon. 
\]
\end{corollary}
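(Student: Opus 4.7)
The plan is to reduce to the Majority is Stablest theorem (Theorem \ref{thm_mis}) by isolating the odd part of $f$. Write $f = f_{\text{odd}} + f_{\text{even}}$ where $f_{\text{odd}}(x) = (f(x) - f(-x))/2$ and $f_{\text{even}}(x) = (f(x) + f(-x))/2$. Since $|f| \leq 1$, both $f_{\text{odd}}$ and $f_{\text{even}}$ take values in $[-1,1]$; moreover $f_{\text{odd}}$ is balanced (by antisymmetry under $x \mapsto -x$), and its Fourier coefficients satisfy $\widehat{f_{\text{odd}}}(S) = \hat f(S)$ for $|S|$ odd and $0$ otherwise. Consequently the Fourier formula \eqref{def_infl} gives
\[
\infl(f_{\text{odd}}) \;=\; \sum_{i \in S,\; |S|\text{ odd}} \hat f(S)^2 \;\leq\; \infl(f) \;\leq\; \delta \qquad \text{for every } i,
\]
so $f_{\text{odd}}$ meets all the hypotheses of Theorem \ref{thm_mis}.

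Next, I would use the Fourier formula \eqref{def_ns} for noise sensitivity. Writing $\rho = -|\rho|$ with $|\rho| \in (0,1)$, and splitting the Fourier sum by parity of $|S|$, one gets
\[
\sum_S \hat f(S)^2 \rho^{|S|} \;=\; \sum_{|S|\text{ even}} \hat f(S)^2 |\rho|^{|S|} \;-\; \sum_{|S|\text{ odd}} \hat f(S)^2 |\rho|^{|S|}.
\]
The right-hand side equals $\sum_S \widehat{f_{\text{even}}}(S)^2 |\rho|^{|S|} - \sum_S \widehat{f_{\text{odd}}}(S)^2 |\rho|^{|S|}$, which is manifestly at least $-\sum_S \widehat{f_{\text{odd}}}(S)^2 |\rho|^{|S|}$ since the first sum is nonnegative term by term. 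Plugging this into \eqref{def_ns} yields
\[
\ns(f) \;\leq\; \tfrac{1}{2} + \tfrac{1}{2}\sum_S \widehat{f_{\text{odd}}}(S)^2 |\rho|^{|S|} \;=\; \tfrac{1}{2} + \tfrac{1}{2}\bigl(1 - 2\,\mathrm{NS}_{|\rho|}(f_{\text{odd}})\bigr) \;=\; 1 - \mathrm{NS}_{|\rho|}(f_{\text{odd}}).
\]

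Now I would invoke Theorem \ref{thm_mis} with the positive parameter $|\rho|$ and tolerance $\epsilon$: it produces a $\delta>0$ such that, whenever all influences of $f_{\text{odd}}$ are at most $\delta$, we have $\mathrm{NS}_{|\rho|}(f_{\text{odd}}) \geq \tfrac{1}{\pi}\arccos(|\rho|) - \epsilon$. Combining with the previous display,
\[
\ns(f) \;\leq\; 1 - \tfrac{1}{\pi}\arccos(|\rho|) + \epsilon \;=\; \tfrac{1}{\pi}\bigl(\pi - \arccos(-\rho)\bigr) + \epsilon \;=\; \tfrac{1}{\pi}\arccos(\rho) + \epsilon,
\]
where the last equality uses the identity $\arccos(-t) = \pi - \arccos(t)$. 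This $\delta$ is precisely what the corollary demands.

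I do not anticipate a real obstacle: the two substantive ingredients are the parity decomposition (which transfers bounded range, balancedness, and influence bounds to $f_{\text{odd}}$ essentially for free) and the sign-flip trick that converts stability at a negative $\rho$ into stability at $|\rho|$ of the odd part. The only mild care needed is verifying that $\widehat{f_{\text{even}}}$ contributes nonnegatively to $\mathrm{Stab}_{|\rho|}$ (immediate since each Fourier-squared weight is nonnegative and $|\rho|>0$), and the trigonometric rewriting in the final line.
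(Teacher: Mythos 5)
Your argument is correct and complete: the odd-part decomposition transfers the range, balancedness, and influence bounds to $f_{\mathrm{odd}}$ essentially for free, the even part contributes nonnegatively to the stability at $|\rho|$ so it can be discarded, and the identity $\arccos(-t)=\pi-\arccos(t)$ closes the computation, with the required $\delta$ being exactly the one supplied by \lref[Theorem]{thm_mis} at parameter $|\rho|$. The paper states this corollary without proof, deferring to \cite{KhotKMO2007}; your derivation is the standard one from that source, so there is nothing to flag.
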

Observe that $f$ is no longer required to be balanced. 

Looking ahead to the proof of \lref[Theorem]{thm_main},
the test that the verifier will perform on the PCP in
the theorem will be, in a way, a noise-sensitivity
test on a boolean function. Thus, we will be able to 
bound the soundness
of this test by appealing to the MIS theorem.

\section{\texorpdfstring{Proving {\lref[Theorem]{thm_main}}}{Proving Theorem}}\label{sec_pcp}

In this section we sketch the construction
of the PCP whose existence is claimed in 
\lref[Theorem]{thm_main}. Recall that 
the PCP is for the problem ULC($\delta$), 
so the verifier is given a unique game
$\uginst(V,W,E,[M],\{ \pi_{v,w}\})$ and a proof
that is accepted with high probability if $\opt(\uginst) \geq 1-\delta$
and accepted with low probability if $\opt(\uginst) \leq \delta$. 
The PCP will be parameterized by $\rho \in (-1,0)$ and $\epsilon > 0$. 

It follows from the results of \cite{KhotR2008} that given $\uginst$,
we may assume with no loss of generality
that all $v \in V$ have the same degree. Thus, uniformly sampling
$v \in V$ and then uniformly sampling a neighbor $w \in W$ of $v$
yields a uniformly random edge $(v,w)$. Therefore, 
if we define a proof to be a labeling of $\uginst$ that
maximizes the proportion of satisfied edge constraints,
and define the verifier's test to be uniformly sampling 
$(v,w)$ and checking if this labeling satisfies $\pi_{v,w}$,
then we would have a proof with completeness $1-\delta$
and soundness $\delta$. 
However, such a test involves sampling $\Omega(\log M)$ bits, 
and we require a test that samples only 2.

We therefore look for a way to encode elements
of the label set $[M]$ in a way that will allow such a test. 
To this end, we will encode labels using the Long Code, which
we first saw in Subhash's lecture (Lecture 7) on H\aa stad's 3-bit PCP. 

\subsection{Motivation: the Long Code}

Recall that in the Long Code, the codeword encoding $i \in [M]$ is the
truth-table for the dictator function $f(x_1, \dots, x_n)=x_i$. 
In our PCP, the proof will be a labeling of $\uginst$ with each
label encoded using the Long Code. 
It remains to design a test for the verifier with the properties
specified in \lref[Theorem]{thm_main}. Before explicitly
stating the test in the next section, we use this section
to motivate its construction. 

Given a boolean function $f$, let us say that $f$ is \emph{far from
a dictator} if all coordinates have negligible influence.
It is not hard to design a 2-bit test that, given a truth-table for a function 
$f$, accepts with high probability if $f$ is 
a dictator and with low probability if $f$ is far from a dictator.
The test required for our PCP must clearly do more than this,
but for the moment let us warm up with this simpler problem. 

Consider the following noise-sensitivity test:
sample $x \in \{-1,1\}^n$ uniformly, sample $y\sim_{\rho} x$
as in \lref[Section]{sec_mis}, and accept iff $f(x) \neq f(y)$. 
By definition, the probability of accepting is $\ns(f)$, the
noise sensitivity of $f$. 

The completeness of this test is thus exactly the noise-sensitivity
of a dictator function, which is $(1-\rho)/2$. On the other hand,
we can use \lref[Corollary]{cor_mius} to bound the soundness: if $f$
is far from a dictator its noise-sensitivity is at most $\arccos(\rho)/\pi+\epsilon$. 

Returning to our PCP, we require a 2-bit noise-sensitivity test with
(almost) exactly these parameters that instead of testing whether
a boolean function is a dictator or far from it, tests whether an
encoded labeling of $\uginst$ (which consists of many boolean
functions) satisfies many edge constraints, or far from it. 

\subsection{The Test}

In this section we describe our PCP's verifier test. First, some notation.
For $x \in \{-1,1\}^M$ and a bijection $\pi : [M] \to [M]$, let
$x \circ \pi$ denote the string $(x_{\pi(1)}, \dots, x_{\pi(M)})$. 
Given a unique game $\uginstfull$ and the associated PCP
proof (which the verifier expects is the Long Code encoding of a
labeling), let $f_v$ be the Long Code encoding of the label
given to $v \in V$, and define $f_w$ for $w \in W$ analogously. 
%That is, in a correct proof, if $v$ is labeled with $i \in [M]$ then
%$f_v$ is the $i$th dictator function (in an incorrect proof, $f$
%can be any boolean function on $M$ bits). 

\begin{center}
\framebox{\parbox{14cm}{
%\vspace{.2cm}
\begin{center}The 2-bit verifier test: \end{center}
%\vspace{.2cm}
\begin{enumerate}
\item Given $\uginstfull$ and a proof, sample $v \in V$ uniformly, then

sample two of its neighbors $w,w' \in W$ uniformly and independently. 
\item Let $\pi=\pi_{v,w}$ and $\pi'=\pi_{v,w'}$ be the constraints for edges 
$(v,w)$ and $(v,w')$. 
\item Sample $x \in \{-1,1\}^M$ uniformly and sample $y \sim_{\rho} x$. 
\item Accept if $f_w(x \circ \pi) \neq f_{w'}(y \circ \pi')$. 
\end{enumerate}
\vspace{.2cm}
}}
\end{center}

\textbf{Completeness}. Assume that $\opt(\uginst) \geq 1-\delta$ and that
the proof given to the verifier encodes all labels correctly (i.e.~as dictator
functions). Let the labels of $v,w,w'$ be $i,j,j' \in [M]$, respectively. 
With probability at least $1-2\delta$, 
both $(v,w)$ and $(v,w')$ are satisfied by the labeling, which
implies $\pi(j)=\pi'(j')=i$. Conditioning on this event, we have
\[
f_w(x \circ \pi) = x_{\pi(j)}=x_i \quad \text{ and } \quad f_{w'}(y \circ \pi') = y_{\pi'(j')}=y_i. 
\]
Since $x_i=y_i$ with probability $(1-\rho)/2$, the test accepts with the
same probability. We conclude that the completeness is at least
$(1-2\delta)(1-\rho)/2$. Tweaking our choice of $\rho$, we 
conclude that completeness is at least $(1-\rho)/2-\epsilon$, as desired. \\

\begin{figure}[h] %  figure placement: here, top, bottom, or page
   \centering
   \includegraphics[width=4in]{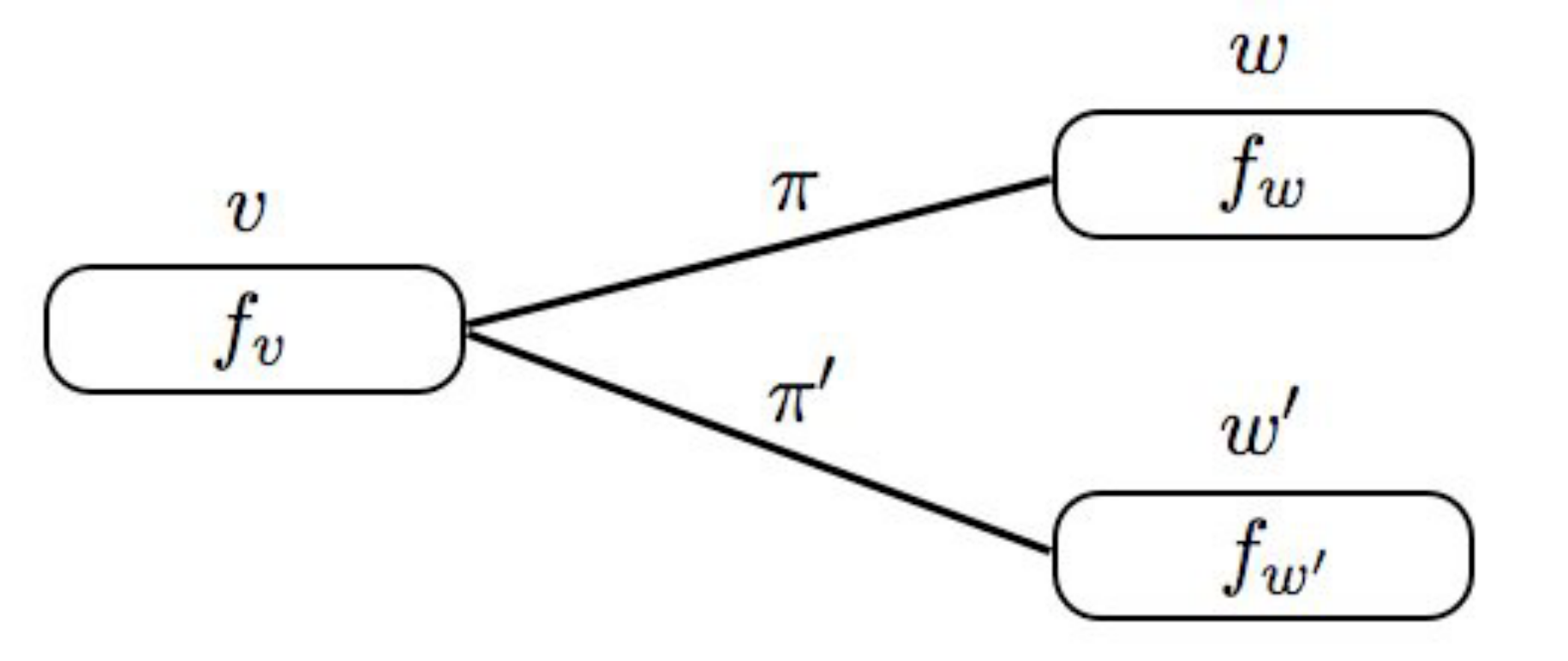} 
   \caption[The 2-bit verifier test]{The 2-bit verifier test samples $v \in V$, two neighbors $w,w' \in W$, then
   compares $f_{w}$ and $f_{w'}$ on certain inputs.}
   \label{fig:test}
\end{figure}

\textbf{Soundness}. As usual, bounding the soundness is the 
difficult part; we only sketch the argument. The proof is in
the contrapositive direction: assuming that the test accepts
with probability greater than $\arccos(\rho)/\pi+\epsilon$,
we prove the existence of a labeling that satisfies many edges
by exploiting the resulting Fourier-analytic properties
of the boolean functions encoded in the proof (which is
where \lref[Corollary]{cor_mius} comes in). 

The proof is as follows. Given $v \in V$, let $p_v$ be
the probability that the test accepts after choosing $v$ from $V$. Then
\begin{align*}
p_v &= \Exp{w,w',x,y\sim_{\rho}x}{\frac{1-f_w(x \circ \pi)f_{w'}(y \circ \pi')}{2}}\\
& = \frac{1}{2} - \frac{1}{2} \Exp{w,w'}{\Exp{x,y\sim_{\rho} x}{f_w(x \circ \pi)f_{w'}(y \circ \pi')}}.
\end{align*}
Standard Fourier-analytic arguments can be used to show that
\[
\Exp{x,y\sim_{\rho} x}{f_w(x \circ \pi)f_{w'}(y \circ \pi')} = \sum_S \hat f_w(S) \hat f_{w'}(S) \rho^{|S|}
\]
from which it follows that
\begin{align*}
p_v &  =  \frac{1}{2} - \frac{1}{2}  \Exp{w,w'}{\sum_S \hat f_w(S) \hat f_{w'}(S) \rho^{|S|}} \\
& =  \frac{1}{2} - \frac{1}{2}\sum_S  \Exp{w,w'}{\hat f_w(S) \hat f_{w'}(S) }\rho^{|S|} \\
& =  \frac{1}{2} - \frac{1}{2}\sum_S  \Exp{w \sim v}{\hat f_w(S)}\Exp{w' \sim v}{ \hat f_{w'}(S) }\rho^{|S|} 
\end{align*}
where the last equality follows from the independence of $w$ and $w'$
(and $w \sim v$ means that $w$ is a neighbor of $v$).  
If we define a function $g_v : \{-1,1\}^n \to [-1,1]$ by
\[
g_v(z) = \Exp{w \sim v}{f_w(z \circ \pi_{v,w})}
\]
then it is not hard to show that
\[
\hat g_v(S) = \Exp{w \sim v}{\hat f_w(S)}.
\]
Therefore, returning to $p_v$, we have
\begin{align*}
p_v &  =  \frac{1}{2} - \frac{1}{2}\sum_S  \Exp{w \sim v}{\hat f_w(S)}\Exp{w' \sim v}{ \hat f_{w'}(S) }\rho^{|S|} \\
& =  \frac{1}{2} - \frac{1}{2}\sum_S\hat g_v(S)^2 \rho^{|S|} \\
& = \ns(g_v)
\end{align*}
where the last equality is by equation \eqref{def_ns}. 

Recall that we assumed that the test is accepted
with probability at least $\arccos(\rho)/\pi+\epsilon$. Standard averaging arguments
tell us that $p_v \geq \arccos(\rho)/\pi+\epsilon/2$ for at least an $\epsilon/2$
fraction of $v \in V$. By the above, we have
\[
\ns(g_v) \geq  \arccos(\rho)/\pi+\epsilon/2
\]
for such $v$. Now we conclude by \lref[Corollary]{cor_mius} (having tweaked
$\epsilon$ as necessary) that for such a $v$, $g_v$ has
an influential coordinate. This fact can be used to show that 
for a constant fraction of neighbors $w$ of $v$, $f_w$ has a 
small set of influential coordinates. These various
influential coordinates can be used to define
a labeling that satisfies a large fraction of constraints; we
direct the reader to \cite{KhotKMO2007} for the gory details. 

A final caveat: technically, if we want to prove that 
$f_w$ has a small set of influential coordinates
for a constant fraction of neighbors $w$ of $v$, it is not enough
to assume that $g_v$ has a coordinate with large influence.
What is required is for $g_v$ to have a coordinate with large
\emph{low-degree} influence, which is defined, in analogy
to equation \eqref{def_infl}, as
\[
\inflk(f) = \sum_{S \, \mid \, i \in S, |S| \leq k} \hat f(S)^2
\]
for some constant $k$. If we define low-order noise
sensitivity in obvious analogy to equation \eqref{def_ns},
it is possible to prove low-order analogues of
the Majority is Stablest theorem and \lref[Corollary]{cor_mius},
which can then be used to formalize the argument that we have sketched. 

This completes the description of the PCP test and the proof
of \lref[Theorem]{thm_main}.

\section{ The Big Picture}

The past few years have seen a flurry of powerful inapproximability 
results conditional on the Unique Games Conjecture. 
In \cite{Raghavendra2008}, Raghavendra exhibits a canonical
semidefinite programming relaxation of an arbitrary CSP
whose integrality gap, assuming UGC, is precisely equal 
to the best possible approximation ratio for that CSP. 
In FOCS 2009, Raghavendra and Steurer presented an efficient
rounding scheme for these SDPs that achieves the integrality gap. 

UGC has been used to prove that Vertex Cover
is conditionally inapproximable to within $2-\epsilon$ (see \cite{KhotR2008}). 
This proof utilizes a theorem on the influence of boolean functions
due to Friedgut. In addition, it was independetly shown by Khot and Vishnoi \cite{KhotV2005}
and Chawla et al.~\cite{ChawlaKKRS2006} that UGC implies the hardness
of approximating Sparsest Cut to within any constant. These
proofs, as expected, use theorems on the influence of boolean
functions due to Kahn-Kalai-Linial and Bourgain. 

Underlying these results are surprising and fruitful
connections between unique game reductions,
semidefinite programming relaxations of CSPs,
extremal problems in Fourier analysis, and isoperimetric
problems in geometry. The reader is directed to
Section 5 of \cite{KhotKMO2007} for an insightful high-level
discussion of how 
these connections are manifested in the 
particular case of MAX-CUT.

\cleardoublepage

{\footnotesize
\bibliographystyle{prahladhurl}
\phantomsection
\addcontentsline{toc}{chapter}{Bibliography}
\bibliography{../jrnl-names-abb,../prahladhbib,../crossref,ref}

}
\end{document}